\newcolumntype{L}{>{$}l<{$}} %
\crefname{lem}{Lemma}{Lemmas}
\Crefname{lem}{Lemma}{Lemmas}
\crefname{thm}{Theorem}{Theorems}
\Crefname{thm}{Theorem}{Theorems}
\crefname{prop}{Proposition}{Propositions}
\Crefname{prop}{Proposition}{Propositions}
\crefname{defn}{Definition}{Definitions}
\Crefname{defn}{Definition}{Definitions}
 \numberwithin{equation}{section}
	\DeclareMathSymbol{\T}{\mathbin}{AMSb}{"54}
	\DeclareMathOperator{\logdet}{logDet}
\begin{document}
 \author{Yang Liu}
 \address{
 Max Planck Institute for Mathematics,
 Vivatsgasse 7,
 53111 Bonn,
 Germany
 }
 \email{yangliu@mpim-bonn.mpg.de}
 \title[Hypergeometric function and Modular Curvature II.  ] {
     Hypergeometric function and Modular Curvature.
     II.  Connes-Moscovici functional relation after Lesch's work
}

\keywords{
    hypergeometric functions, Appell functions, Lauricella functions,
    noncommutative tori,
    pseudo differential calculus, modular
    curvature, heat kernel expansion,
 Ray-Singer determinant, Einstein-Hilbert action
}

\subjclass[2010]{47A60, 46L87, 58B34, 58J40, 33C65, 58Exx}

\date{\today} 

\begin{abstract}

As the second part of the sequel, we investigate the variation of rearrangement
operators (more precisely, the spectral functions behind)
arising in the study of modular geometry on noncommutative (two) tori. 
We initiate a systematic approach by introducing transformations corresponding 
to basic operations in calculus, like differentiation and integration by
parts. 
As for applications, we extend, in a uniform way,
the Connes-Moscovici's functional relations on noncommutative two tori
attached to the variation of second heat coefficients
to noncommutative tori of arbitrary  dimension. 
Moreover, those transformations lead to more internal relations among the
hypergeometric family obtained in part I of the sequel, which allows us to obtain,
the first time, a computer-aid free verification of those Connes-Moscovici 
type functional relations.

\end{abstract}

\maketitle

\tableofcontents

\section{Introduction}
\label{sec:intro}

As a continuation of  \cite{Liu:2018aa}, the paper concerns  the variational aspect 
of the $a_2$-coefficient of the heat trace asymptotic on
noncommutative tori and toric noncommutative manifolds
\footnote{a.k.a Connes-Landi deformations or $\theta$-deformations, 
cf. \cite{Connes:2001wv,Connes:2002wh}, \cite{Brain:2013wp}, also 
\cite{Rieffel:1993wb}.
}. The integration of the $a_2$-coefficients allows us to establish the analogue 
of Gauss-Bonnet theorem on noncommutative two tori \cite{Connes:2011tk} 
and \cite{Fathizadeh:2012vg} and the full $a_2$-coefficients 
gives rise to the notion of  scalar curvature in conformal geometry
on noncommutative 
two tori \cite{MR3194491}, \cite{Lesch2016Modular-Curvatu},
also \cite{MR3148618,MR3359018} and later
on toric noncommutative manifolds \cite{LIU2017138,Liu:2015aa}.
The complexity of the calculation increases dramatically for the next coefficient,
the $a_4$-term.  An experimental computation has been achieved in
\cite{2016arXiv161109815C} which is in need of more conceptual understanding and 
further simplification.
To this end, we have accomplished in \cite{Liu:2018aa} 
the first step suggested by Connes, 
namely, to find a  suitable basis of functions to record the spectral functions
arising from the rearrangement process. It means that, for instance, 
one can replace the explicit expressions of the functions $K_1$ to $K_{
20}$  of the $a_4$-term in \cite{2016arXiv161109815C}
by  combinations of the hypergeometric family
$H_\alpha$,  $\alpha \in  \Z_{ \ge 0}^n$ obtained \cite{Liu:2018aa}
\footnote{The author has finished the computation but the result will be postponed
to future publications.},
similar to the form shown in \cref{eq:b2cal-KDeltak,eq:b2cal-HDeltak}.
As we can see from 
\cite[\S9 and Appendix C]{2016arXiv161109815C}, their original form 
generated from a CAS (computer algebra system) takes pages to record.
The next step, which is one of the primary motivations of the paper, 
is to search for simplifications for the functional relations 
cf.  \cite[\S7]{2016arXiv161109815C}.
Back to the $a_2$-term, there are only two functions and the relation 
(\cite[Eq. (0.4)]{MR3194491})  simply reads  
\begin{align}
    \label{eq:intro-CM-original}
    - \frac{1}{2} \tilde H(s_1, s_2) =
    \frac{ \tilde K(s_2) -\tilde K(s_1) }{ s_1+s_2}  
    + \frac{ \tilde K(s_1 + s_2) -\tilde K(s_2) }{s_1} 
    - \frac{ \tilde K(s_1 + s_2) -\tilde K(s_1) }{s_2} 
    .
\end{align}
It reflects the variational nature of modular Gaussian curvature $K_\varphi$
(\cite[Eq. (4.34)]{MR3194491}) and similar results in the paper are 
often referred as Connes-Moscovici type functional relations. 

Even for the $a_2$-term,
when performing similar calculus on higher dimensional examples, 
it is not obvious, at least to the author, that one can obtain such
a simple equation as  \cref{eq:intro-CM-original}. 
After repeating the variational computation several times,
the author finally realized that
the sum of  \cite[Eq. (4.5)-(4.8)]{LIU2017138}
\footnote{The relations have  appeared before in the proof of 
Theorem 5.1 in \cite{MR3369894}. }
indeed admits reduction
to a similar form  of \cref{eq:intro-CM-original}.
The first contribution (\S\ref{sec:varcalwrptncvar}) of 
the paper consists of a systematic framework (built upon the work 
of Lesch \cite{leschdivideddifference})
for the variational calculus for a noncommutative variable. 
The fundamental obstruction is the fact that the variable and its derivatives
do not commute.
There is a rearrangement process that compress the Ansatz to some 
rearrangement operators (or equivalently, to the underlying spectral functions,
cf.  \S\ref{subsec:smooth-funcal}).
To keep track of the change of those functions under basic operations in
variational calculus, we introduce  the following transformations 
\begin{itemize}
    \item  with respect to the log-Weyl factor $h$:
        $\set{
        \pmb\iota ,  \pmb\tau_j, 
        \blacktriangle_{0,j}^+, \blacktriangle^+, \blacktriangle^- } $,

    \item  with respect to the Weyl factor $k = e^h$:
        $\set{
        \pmb\eta, \pmb\sigma_j, \blacksquare_{0,j}^+, \blacksquare^+,
        \blacksquare^- } $.
\end{itemize}
The transformations $\left\{ \blacktriangle_{0,j}^+,
\blacktriangle^+, \blacktriangle^- \right\} $ 
(resp. $ \set{ \pmb\sigma_j, \blacksquare_{0,j}^+, \blacksquare^+,
        \blacksquare^-}$)
 arising from differentiation are generated by divided difference, thanks to 
\cite[\S3.5]{leschdivideddifference}.
The cyclic operator  $ \pmb\tau_j$ (resp. $ \pmb\sigma_j$)
resembles integration by parts 
with respect to the derivation $[ \cdot , h]$ and the volume weight 
$\varphi_j$ in \cref{eq:phi_j-defn}.
A noteworthy new observation 
(Proposition \ref{prop:internal-relations-btri-tau})
is a set of internal relations, 
which explains the cancellation behind \cite[Eq. (4.5)-(4.8)]{LIU2017138}
mentioned above.  
As a consequence, the whole variational calculus can be generated by only three
transformations 
 $\set{
        \pmb\iota ,  \pmb\tau_j,  \blacktriangle^+
    }$ (resp. $\set{
        \pmb\eta, \pmb\sigma_j,  \blacksquare^+
    }$). 
In contrast to classical calculus, formulas under the simple the change of variable 
$h \to k = e^h$ could be quite subtle, again due to the noncommutativity
between $h$ and its derivatives. We keep parallel discussions on purpose.   
For instance, the main result has two versions \cref{thm:gradF-h} 
and \cref{thm:CM-intermsof-k} in terms of $h$ and  $k$ respectively,
which generalizes \cite[Thm. 4.10]{MR3194491}. Later, in 
Prop. \ref{prop:k-2-h-CMrel}, we carry out a direct verification of the 
equivalence of the two versions.

The abstract discussion  in
\S\ref{sec:varcalwrptncvar} is designed for gradient computation
 in the modular geometry on noncommutative tori explained in \S\ref{sec:var-FEH}. 
We first extend the geometric functional
attached to the $a_2$-coefficient from dimension two (cf. \cite[\S4]{MR3194491})
two to higher dimensions.
Their functional gradients can be reached via two routes
\S\ref{sec:closed-formulas-EH-OPS}. 
One relies on variation on the heat trace $\Tr(e^{-t\Delta_\varphi})$ or the
spectral zeta function $\zeta_{\Delta_\varphi}(s)$.
The results are recorded in 
\S\ref{subsec:mdocur-on-T^m-theta} and \ref{subsec:mdocur-on-T^m-theta}.
The other approach ( \S\ref{subsec:EH-action-closedfor} 
and \ref{subsec:OPS-local}) makes use of \cref{thm:gradF-h,thm:CM-intermsof-k}
, which brings in the functional relations.
By equating the two methods, we see that the two coefficients 
(\cref{eq:b2cal-KDeltak,eq:b2cal-HDeltak}) of  $R_{\Delta_\varphi} $ are
subject to the corresponding functional relations inherited from the gradient
$\grad_h F$ (resp. $\grad_k F$).
A crucial technical result that is only quoted without proof
is Prop.  \ref{prop:varprob-scacur-for-Delta-varphi},
the explicit form of $R_{\Delta_\varphi}$, as it is 
 the main focus of part I of the sequel \cite{Liu:2018aa}.

If one is willing to think of  the lengthy computation of the heat coefficients 
as a  physics theory, then
the functional relations (like \cref{eq:intro-CM-original})
derived from variation provides, so far, 
the most robust experiment for testing its validation.
For the $a_4$-coefficient \cite{2016arXiv161109815C},
similar verification is indeed the most sophisticated part hidden behind 
the paper.
The reader is encouraged to go through the \emph{Mathematica} notebook files
attached to  \cite{MR3194491} and \cite{2016arXiv161109815C} to see 
how the tests were actually achieved.
With the better understanding of the variational calculus, 
we are able to provide in \S\ref{sec:ver-of-fun-rel}, the first time,
a computer-aid free verification of the functional relations.
The new input is the compatibility between transformations in 
\S\ref{sec:varcalwrptncvar} and the hypergeometric family $H_\alpha$ 
arising in pseudo-differential calculus.
For instance, the cyclic operator $ \pmb\sigma_j$ indeed  
corresponds to cyclic permutation of the index $\alpha$,
see Prop. \ref{prop:sigma0-acts-on-Habc}.
Furthermore, such ``by hand'' calculation reveals more information.
For example, all the functional relations  share the same root: they 
can be reduced to same the relations among some initial values of the
hypergeometric family, see  Props.
\ref{prop:tobechecked-KDelvs-TDel}, \ref{prop:KDelk-TOPS-tobechecked} and
\ref{prop:varification-CM}. 
Another interesting observation, which is hard to achieve without using the
hypergeometric family, is that the two functions  given in Prop.
\ref{prop:varprob-scacur-for-Delta-varphi} form a continuous family
(i.e. for $m \in [2,\infty)$, $m$ is the dimension)
 of solutions to the Connes-Moscovici type functional relations.

\subsection*{Acknowledgment}
The author would like to thank Alain Connes, Henri Moscovici and Matthias Lesch
for their comments and inspiring conversations, furthermore is greatly indebted
for their consistent support and encouragement on this project.
The author would also like to thank Farzad Fathizadeh for comments and discussions.

 Most of the work was carried out under the  postdoctoral fellowship at
MPIM, Bonn. The author would like to thank MPI for its marvelous working
environment.
The author would also like to thank IHES and Riemann Center for Geometry and
Physics,  Leibniz University, Hannover   for kind support and excellent working
environment during his visit\footnote{IHES: Nov-Dec, 2017.
Leibniz University: Jan-Mar 2019}. 
\section{Variation on the Rearrangement Operators}
\label{sec:varcalwrptncvar}

\subsection{Schwartz functional calculus on 
    $\mathcal A^{\bar \otimes n+1}$
}
\label{subsec:smooth-funcal}

We follow \cite[\S3]{leschdivideddifference} to set up notations for the
rearrangement process. 
Let $\mathcal A$ be a $C^*$-algebra.
At algebraic level, a toy model of the rearrangement is given by the
contraction map 
\begin{align}
    \cdot: \mathcal A^{\otimes n+1} \times \mathcal A^{\otimes n}
    & \rightarrow \mathcal A \nonumber\\  
    (a_0, \dots, a_n)  \cdot (\rho_1, \dots, \rho_n) & \mapsto
    a_0 \rho_1 a_1 \cdots \rho_n a_n .
    \label{eq:contraction-defn}
\end{align}
 When looking from right to the left,  we factor out $a_i$'s 
as an operator $(a_0, \dots, a_n) \in  L( \mathcal A^{\otimes  n} , \mathcal A)$.
To extend the construction, we consider the following set of generators of 
those rearrangement operators.
For the scope of the paper, we shall focus only  one element
$h \in  \mathcal A$ and its associated
multiplication operators (at the $l$-th slot), that is, 
when acting on $ \mathcal A^{\otimes  n}$, we have 
\begin{align*}
    h^{(l)} = (1, \ldots, h, \ldots,1) 
    \in  L( \mathcal A^{\otimes  n} , \mathcal A), \, \, \, \, 
    \text{$h$ appears at the $l$-th slot}. 
\end{align*}
labeled by a superscript $l=0,1,\ldots,n$.
For instance, when $n=1$, $h^{(0)}$ and $h^{(1)}$ are simply the left and the
right multiplication respectively. 
We fix such a self-adjoint $h = h^* \in \mathcal A$ 
with its exponential $k = e^h$ and denote by bold font letters 
\begin{align}
    \label{eq:varprob-modop-modder}
  \mathbf  y = \op{Ad}_k = k^{-1}(\cdot) k, \,\,\, \mathbf x = \log
  \mathbf y = -\op{ad}_h = [\cdot,h] 
\end{align}
the corresponding conjugation and commutator operators.
In a similar way, each of them yields $n$ operators 
$\mathbf y_l, \mathbf x_l \in L(\mathcal A^{\otimes  n}, \mathcal A)$
whose subscript $l=1, \ldots, n$  indicates
the operator  only acts on the $l$-th factor of 
elementary tensors, that is, 
\begin{align}
 \mathbf x_l   = -h^{(l-1)} + h^{(l)}, \,\,\,
 \mathbf y_l   
 = e^{-h^{(l-1)}} e^{-h^{(l)}} = (k^{(l-1)})^{-1} k^{(l)}, \, \, 
 l=1, \ldots, n,
    \label{eq:paritla-ad-and-Ad}
\end{align}
and vice versa: 
\begin{align}
\label{eq:substi-relations-k-mod}  
    \begin{split}
    k^{(l)} = (k^{(0)})^{-1} \mathbf y_1 \cdots \mathbf y_l,
    \, \, \, \, 
    h^{(l)} = h^{(0)} + \mathbf x_1  + \cdots +\mathbf x_l.
    \end{split}
\end{align}
The Schwartz  functional calculus can be think of  a way to evaluate
Schwartz functions $f (x_1 , \ldots , x_n)$ at the tuple of operators 
$\brac{ \mathbf x_1, \ldots ,  \mathbf x_n }$
so that it  obeys similar arithmetics as in  classical multi-variable calculus.
To be precise, it is an algebra homomorphism: 
\begin{align*}
    C^\infty(U^{n}) \to L(\mathcal A^{\otimes  n} , \mathcal A): 
    f \mapsto  f (\mathbf x_1, \ldots, \mathbf x_n),
\end{align*}
where $U \defeq U_h \subset \R$ is a bounded open subset containing the spectrum of
$\mathbf x$.
To define the operator, 
we first pick an extension, still denote by  $f$ \footnote{
The functional calculus on depends only the restriction of $f$ on $U^n$, 
(more precisely, on $(\mathrm{spec} (\mathbf x))^{n} \subset \R^n$), 
cf. \cite[\S3.2]{leschdivideddifference}.
}, 
that belongs to the Schwartz space $\mathscr S(\R^n)$,
so that the Fourier transform $\hat f$ exists with the normalization that 
\begin{align*}
    f(x_1,\dots,x_n) = \int_{\R^n} \hat f(\xi_1,\dots,\xi_n) e^{i (x_1\xi_1 + 
            \cdots  + x_n\xi_n
    )} d\xi_1\cdots\xi_n,
\end{align*}
then define
\begin{align}
   f (\mathbf x_1,\dots,\mathbf x_n) = \int_{\R^n}
\hat f(\xi_1,\dots,\xi_n) e^{i (\mathbf x_1\xi_1 + 
            \cdots  + \mathbf x_n\xi_n
    )} d\xi_1\cdots d \xi_n,
    \label{eq:varprob-defn-Kx1xn}
\end{align}
that is, when acting on elementary tensors
$(\rho_1, \ldots, \rho_n) \in  \mathcal A^{\otimes  n}$,
\begin{align*}
    &\, \, 
    f (\mathbf x_1,\dots,\mathbf x_n)
    \brac{ \rho_1 \otimes \cdots \otimes  \rho_n} \\
    = &\, \, 
     \int_{\R^n} \hat f(\xi_1,\dots,\xi_n)
     e^{i\xi_1 \mathbf x} (\rho_1) \cdots 
     e^{i\xi_n \mathbf x} (\rho_n) d\xi \\
= &\, \, 
     \int_{\R^n} \hat f(\xi_1,\dots,\xi_n)
     (\mathbf y)^{i\xi_1 } (\rho_1) \cdots 
    (\mathbf y)^{i\xi_n }  (\rho_n) d\xi. 
\end{align*}
The last line above also gives a definition of the functional calculus 
$f(\mathbf y_1, \dots, \mathbf y_n)$ 
for the modular operators $\mathbf y_l = e^{\mathbf x_l}$, 
namely, by the substitution
\begin{align*}
    f(\mathbf y_1, \dots, \mathbf y_n) \defeq 
    f_{\exp}(\mathbf x_1, \dots, \mathbf x_n)
    \defeq f(e^{\mathbf x_1}, \dots, e^{\mathbf x_n}).
\end{align*}

\begin{defn}
    \label{defn:C(R)-C(R+)}
    For any $n\in \N$,  we will denote by $C_{\mathscr S,h} (\R^n)$
    (resp. $C_{\mathscr S,h}(\R^n_+)$), or simply $C_{\mathscr S} (\R^n)$ and 
    $C_{\mathscr S}(\R^n_+)$ when $h$ is fixed, the
    collection of all the $n$-variable spectral functions $f(x_1, \dots, x_n)$
    such that the functional calculus 
    \begin{align*}
        f(\mathbf x_1, \dots, \mathbf x_n) \;\;
        (\text{resp.} \; f(\mathbf y_1, \dots, \mathbf y_n))
    \end{align*}
    is well-defined.   
\end{defn}

\subsection{Divided differences}

\label{subsec:DividedDiff}
The crucial role of divided differences in the variational calculus was
pointed out by Lesch \cite{leschdivideddifference}.
For a one-variable function $f(z)$, the divided difference can be defined
inductively as below:
\begin{align*}
    f[x_0] & \defeq f(x_0) ;\\
    f[x_0, x_1, \dots, x_n] & \defeq 
    \left( 
        f[x_0, \dots, x_{n-1}] - f[x_1, \dots, x_{n}]
    \right)/ (x_0 - x_n)
\end{align*}
For example, 
\begin{align*}
    f[x_0,x_2] = (f(x_0) - f(x_1)) / (x_0 - x_1). 
\end{align*}
By induction, one derives in general: 
\begin{align}
    f[x_0, x_1, \dots, x_n] = \sum_{l=0}^n  
    f(x_l)  \prod_{s=0, s \neq l}^n (x_l -x_s)^{-1}.
    \label{eq:divided-gen-formula}
\end{align}
In particular, the function $f[x_0, x_1, \dots, x_n]$ is symmetric in all the
arguments. This fact will be quoted  several  times in later discussions.
For multivariable functions, we shall use a subscript to indicate on which
variable the divided difference acts, for example: 
\begin{align*}
    f(z_1,z_2,z_3)
    [x_1, \dots, x_s]_{z_2}
\end{align*}
means the divided difference is taken with respect to the function 
 $f(z_1, \bullet , z_3)$. 
Through out the paper, we fix the variable $z$  as the default choice for the
divided difference operator: $[\bullet, \dots, \bullet] \defeq [\bullet,
\dots, \bullet]_z$.
The following basic properties will also be needed:
\begin{enumerate}
    \item Leibniz rule:
        \begin{align}
            \label{eq:Leibniz-rule}
            (fg)[x_0, \dots,x_n] = f(x_0) g[x_0,\dots,x_n]       
            + f[x_0,\dots,x_n] g(x_n)
        \end{align}
        \item Composition rule:
\begin{align}
    (f[y_0,\dots,y_q, z])[x_0, \dots,x_p]_z = f[y_0,\dots,y_q, x_1, \dots, x_p]
    \label{eq:Composition-rule}
\end{align}
        \item The confluent case: suppose there are $\alpha+1$ copies of $x$
            in the arguments of the divided difference, then:
            \begin{align*}
                f[y,x, \ldots , x] = \frac{1}{\alpha !} 
                \partial_x^\alpha f[y,x] .             
            \end{align*}
\end{enumerate}

 \subsection{The cyclic transformations}
\label{subsec:int-by-parts-mod-derivation}
 Let us keep the notations in
 \cref{eq:varprob-modop-modder,eq:paritla-ad-and-Ad,eq:substi-relations-k-mod}
 and further assume that there exists a tracial functional 
 $\varphi_0:\mathcal A \rightarrow \mathbb{C}$ on the algebra $\mathcal A$ that
 plays the role of integration (volume form functional).
Trace property of $\varphi_0$ leads to the 
integration by parts formula with respect to the derivation $\mathbf x$:
\begin{align}
    \varphi_0( \mathbf x(a) \cdot b) = \varphi_0(a \cdot [(-\mathbf x)(b)]),
    \;\; \forall a,b \in \mathcal A.
    \label{eq:int-by-p-babymod}
\end{align}
Let $j\in\R$ be a real parameter, denote by 
\begin{align}
    \varphi_j(a) \defeq \varphi_0(k^j a) = \varphi_0(e^{jh} a),
    \;\; \forall a \in \mathcal A
    \label{eq:phi_j-defn}
\end{align}
the rescaled  volume functional by the factor $k^j = e^{jh}$. 
We push Eq. \eqref{eq:int-by-p-babymod} further:
\begin{align*}
    &\, \, 
  \varphi_j \left( \mathbf x_n ( \rho_1 \otimes \cdots \otimes \rho_n) \right)  
    =
  \varphi_0
  \left(e^{jh} \rho_1 \cdots \rho_{n-1} \cdot  \mathbf x (\rho_n) \right)  \\
    =&\, \, 
    \varphi_0 \left( (-\mathbf x) (e^{jh} \rho_1 \cdots \rho_{n-1})  \rho_n \right)  
    =
\varphi_0 \left( e^{jh} \cdot 
(-\mathbf x) (\rho_1 \cdots \rho_{n-1}) \cdot  \rho_n \right)  
\\
    =&\, \, 
\varphi_0 \left( e^{jh}
    (-\mathbf x_1 - \cdots - \mathbf x_{n-1} )
    (\rho_1 \otimes  \cdots \otimes  \rho_{n-1})
\cdot   \rho_n \right)  \\
    =&\, \, 
\varphi_j \left( 
    (-\mathbf x_1 - \cdots - \mathbf x_{n-1} )
    (\rho_1 \otimes  \cdots \otimes  \rho_{n-1})
\cdot  \rho_n \right) .  
\end{align*}
The computation suggests that one can reduced the number of arguments 
of the rearrangement operator $f(\mathbf x_1, \dots, \mathbf x_n)$
defined in \cref{eq:varprob-defn-Kx1xn} when the integration $\varphi_j$ 
is applied.
More precisely, we have transformations: 
\begin{align*}
    \pmb\iota:C_{\mathscr S} (\R^n) \rightarrow C_{\mathscr S}(\R^{n-1}), \,\,\,
    \pmb\eta:C_{\mathscr S}(\R_+^n) \rightarrow C_{\mathscr S}(\R_+^{n-1}), \,\,\,
    n=1,2,\dots.
\end{align*}
\begin{lem}
    \label{lem:iota-k-defn}
    Let $f\in C_{\mathscr S} (\R^n)$ be a spectral function with $n$-arguments and $\varphi_j$
    is the rescaled weight in \eqref{eq:phi_j-defn} with $j\in\R$.
    \begin{align}
        \label{eq:iota-contraction}
        \varphi_j\brac{
        f(\mathbf x_1, \dots, \mathbf x_n)
        (\rho_1 \otimes \cdots \otimes  \rho_n)
    } = \varphi_j \brac{
    \pmb \iota(f)(\mathbf x_1, \dots \mathbf x_{n-1})
    (\rho_1 \otimes  \cdots \otimes  \rho_{n-1}) \rho_n
}
    \end{align}
    where the operator $\pmb\iota:C_{\mathscr S}(\R^n) \rightarrow C_{\mathscr
    S}(\R^{n-1})$, $n=1,2,
    \dots$, reduces the number of variable by one 
     by restricting spectral functions onto a hyperplane:
\begin{align}
   \begin{split}
    \pmb \iota(f)(x_1, \dots, x_{n-1}) &= f(x_1, \dots,x_{n-1}, 
    -x_1- \cdots-x_{n-1}), 
   \,\,\, n>1 ,\\
   \pmb\iota(f)(x) &= f(0), \,\,\, n=1.
   \end{split}
   \label{eq:defn-bold-iota}
\end{align}
\end{lem}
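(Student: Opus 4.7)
The plan is to prove Eq.~\eqref{eq:iota-contraction} by direct Fourier computation on both sides, using the partial decomposition $\mathbf x_l = -h^{(l-1)} + h^{(l)}$ and cycling a single exponential factor under the trace property of $\varphi_0$. At the conceptual level, the identity reflects the fact that $\sum_{l=1}^{n} \mathbf x_l = -h^{(0)} + h^{(n)}$ is killed by $\varphi_j$, since
\begin{align*}
\varphi_j\!\left(\Bigl(\sum_{l=1}^{n}\mathbf x_l\Bigr) A\right) = \varphi_0\!\left(e^{jh}(-hA + Ah)\right) = 0
\end{align*}
by the trace property and the commutativity $[e^{jh},h]=0$; the hyperplane restriction $x_n = -x_1-\cdots-x_{n-1}$ precisely enforces this vanishing at the functional-calculus level.

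First I would write, via Fourier inversion~\eqref{eq:varprob-defn-Kx1xn},
\begin{align*}
f(\mathbf x_1, \ldots, \mathbf x_n)(\rho_1 \cdots \rho_n) = \int_{\R^n}\hat f(\xi)\, e^{-i\sum_l \xi_l \mathbf x_l}(\rho_1 \cdots \rho_n)\, d\xi.
\end{align*}
Substituting $\mathbf x_l = -h^{(l-1)}+h^{(l)}$ and using that distinct $h^{(k)}$ commute (they act on different slots), the exponential factorizes. Collecting the coefficient of each $h^{(k)}$ in $\sum_l \xi_l \mathbf x_l$ yields
\begin{align*}
e^{-i\sum_l \xi_l \mathbf x_l}(\rho_1 \cdots \rho_n) = e^{i\xi_1 h}\, \rho_1\, e^{-i(\xi_1-\xi_2)h}\, \rho_2\, \cdots \,\rho_{n-1}\, e^{-i(\xi_{n-1}-\xi_n)h}\, \rho_n\, e^{-i\xi_n h}.
\end{align*}
Applying $\varphi_j = \varphi_0(e^{jh}\,\cdot\,)$ and cycling the rightmost factor $e^{-i\xi_n h}$ to the far left under $\varphi_0$, then merging it with the commuting scalars $e^{jh}$ and $e^{i\xi_1 h}$ into a single $e^{(j+i\xi_1-i\xi_n)h}$, I obtain an integrand that depends on $\xi$ only through the differences $\xi_1-\xi_n, \xi_2-\xi_n, \ldots, \xi_{n-1}-\xi_n$.

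Next I would compute the RHS of~\eqref{eq:iota-contraction} analogously. Substituting $x_n = -x_1-\cdots-x_{n-1}$ inside the Fourier representation of $f$ gives
\begin{align*}
\pmb\iota(f)(y_1, \ldots, y_{n-1}) = \int \hat f(\xi)\, e^{-i\sum_{l=1}^{n-1} y_l (\xi_l - \xi_n)}\, d\xi.
\end{align*}
Expanding $e^{-i\sum_l (\xi_l-\xi_n)\mathbf x_l}(\rho_1\cdots\rho_{n-1})$ by the same commuting-exponential trick (now on an $(n-1)$-fold product), multiplying on the right by $\rho_n$, and applying $\varphi_j$, I would recover precisely the integrand produced on the LHS, completing the identification. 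The base case $n=1$ collapses to $\int \hat f(\xi)\,d\xi = f(0)$ after the single cycling step, matching the convention $\pmb\iota(f)(x)=f(0)$.

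The main obstacle is bookkeeping rather than analysis: keeping track of the coefficients of each $h^{(k)}$ and confirming the match between the cycled integrand on the LHS and the Fourier expansion on the RHS. All the Fourier manipulations and trace cycling are legitimized by the Schwartz-class extensions in~\eqref{eq:varprob-defn-Kx1xn} and the continuity of $\varphi_0$ on the nuclear Fr\'echet algebra $\mathcal A^{\bar\otimes n+1}$ built into the setup of \S\ref{subsec:smooth-funcal}.
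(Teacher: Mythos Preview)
Your proposal is correct. The paper actually states Lemma~\ref{lem:iota-k-defn} without proof, so there is no direct comparison to make; however, your Fourier-inversion-plus-trace-cycling argument is precisely the technique the paper deploys a few lines later in the proof of Lemma~\ref{lem:tao-defn} (the $\pmb\tau_j$ operator), so your approach is entirely in the spirit of the paper and would serve as the omitted proof.
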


\begin{lem}
    \label{lem:eta-h-defn}
    Let $f\in C_{\mathscr S}(\R_+^n)$ be a spectral function with $n$-arguments
    and $\varphi_j$
    is the rescaled weight in \eqref{eq:phi_j-defn} with $j\in\R$.
    \begin{align}
        \varphi_j\brac{
            f(\mathbf y_1, \dots,\mathbf y_n)
            (\rho_1 \otimes  \cdots \otimes  \rho_n)     
        } = \varphi_j \brac{
            \pmb \eta (\mathbf y_1, \dots, \mathbf y_{n-1})
            (\rho_1 \otimes  \cdots \otimes  \rho_{n-1}) \rho_n
        }   
        \label{eq:eta-contraction}
    \end{align}
    where the operator $\pmb\eta :C_{\mathscr S}(\R_+^n) \rightarrow
    C_{\mathscr S}(\R_+^{n-1})$, $n=1,2,
    \dots$, reduces the number of variable by one: for $n>1$, 
    \begin{align}
        \begin{split}
    \pmb    \eta (f)(y_1, \dots,y_{n-1}) &=
        f(y_1, \dots, y_{n-1}, (y_1 \cdots y_n)^{-1}),
        \,\,\, \text{when $n>1$,} \\
\pmb\eta(f)(y) &= f(1),
        \,\,\, \text{when $n=1$}
        .
        \end{split}
   \label{eq:defn-bold-eta}
    \end{align}
\end{lem}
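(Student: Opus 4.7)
The plan is to derive Lemma \ref{lem:eta-h-defn} as a direct consequence of Lemma \ref{lem:iota-k-defn} by pushing everything through the exponential substitution that relates the modular derivations $\mathbf x_l$ to the modular operators $\mathbf y_l$. The key observation is that the two contraction operators $\pmb\iota$ and $\pmb\eta$ are intertwined by the change of variable $y_l = e^{x_l}$; in particular, the hyperplane $x_1+\cdots+x_n=0$ in the additive picture corresponds exactly to the hypersurface $y_1\cdots y_n = 1$ in the multiplicative picture.

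First I would verify the operator identity $\mathbf y_l = e^{\mathbf x_l}$ for every $l$. From \eqref{eq:paritla-ad-and-Ad}, $\mathbf x_l = -h^{(l-1)} + h^{(l)}$ where $h^{(l-1)}$ and $h^{(l)}$ act by multiplication in different tensor slots and therefore commute; hence
\[
 e^{\mathbf x_l} = e^{-h^{(l-1)}} e^{h^{(l)}} = (k^{(l-1)})^{-1} k^{(l)} = \mathbf y_l.
\]
This is already the content of the definition $f(\mathbf y_1,\dots,\mathbf y_n) = f_{\exp}(\mathbf x_1,\dots,\mathbf x_n)$ given in the paragraph following \eqref{eq:varprob-defn-Kx1xn}, so $f \in C(\R^n_+)$ if and only if $f_{\exp} \in C(\R^n)$, and no additional analytic verification is needed.

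Next I would apply Lemma \ref{lem:iota-k-defn} to the spectral function $f_{\exp}$ to obtain
\[
 \varphi_j\!\left( f_{\exp}(\mathbf x_1,\dots,\mathbf x_n)\,(\rho_1\cdots\rho_n) \right)
 = \varphi_j\!\left( \pmb\iota(f_{\exp})(\mathbf x_1,\dots,\mathbf x_{n-1})(\rho_1\cdots\rho_{n-1}) \cdot \rho_n \right).
\]
Unravelling $\pmb\iota(f_{\exp})$ via \eqref{eq:defn-bold-iota} and the substitution $y_l=e^{x_l}$ gives, for $n>1$,
\[
 \pmb\iota(f_{\exp})(x_1,\dots,x_{n-1}) = f\!\left( e^{x_1},\dots,e^{x_{n-1}}, (e^{x_1}\cdots e^{x_{n-1}})^{-1}\right)
 = (\pmb\eta f)_{\exp}(x_1,\dots,x_{n-1}),
\]
and for $n=1$ it gives $\pmb\iota(f_{\exp})(x) = f_{\exp}(0) = f(1) = \pmb\eta(f)(y)$, matching \eqref{eq:defn-bold-eta}. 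Re-expressing the right-hand side in terms of $\mathbf y_l = e^{\mathbf x_l}$ yields \eqref{eq:eta-contraction}.

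There is no real obstacle here; the only care needed is to confirm that the typographic $(y_1\cdots y_n)^{-1}$ in \eqref{eq:defn-bold-eta} should be read as $(y_1\cdots y_{n-1})^{-1}$, which is forced by the additive analogue. As a sanity check, I would also verify the lemma directly on the generating monomials $f(y_1,\dots,y_n) = y_1^{s_1}\cdots y_n^{s_n}$: expanding $\mathbf y_l^{s_l}\rho_l = k^{-s_l}\rho_l k^{s_l}$ and using cyclicity of $\varphi_0$ to move the trailing $k^{s_n}$ in front of the leading $k^{j-s_1}$ produces precisely the exponent shift $s_l \mapsto s_l - s_n$ for $l<n$, which is the monomial form of $\pmb\eta$; by density of monomial combinations in $C(\R^n_+)$ (guaranteed by the Fourier representation \eqref{eq:varprob-defn-Kx1xn} after exponential substitution) the general case follows.
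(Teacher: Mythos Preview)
Your proposal is correct. The paper does not actually supply a proof for Lemma~\ref{lem:eta-h-defn} (nor for its additive counterpart Lemma~\ref{lem:iota-k-defn}); both are stated and then used. However, the paper's own remark just before Lemma~\ref{lem:taoexp-defn}---``After the substitution $y=e^x$, we obtain the parallel version in terms of the Weyl factor $k$ and the modular operator $\mathbf y$''---makes clear that the intended argument is exactly the exponential change of variable you carry out, so your approach matches the paper's implicit one. Your identification of the typo $(y_1\cdots y_n)^{-1}\to(y_1\cdots y_{n-1})^{-1}$ in \eqref{eq:defn-bold-eta} is also correct and necessary.
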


 Notice that \cref{eq:int-by-p-babymod} is the infinitesimal version of 
 the KMS-property of the weight $\varphi_j$:
\begin{align}
    \varphi_j( a \cdot b) = \varphi_j( \mathbf y^j(b) \cdot a ).
    \label{eq:KMS}
\end{align}
It leads to cyclic transformations $ \pmb\tau_j$ and $\pmb\sigma_j$ that 
generate all cyclic permutations on $\rho_1, \ldots,\rho_n$ in 
\cref{eq:tao-int-by-parts} and \cref{eq:sigma-int-by-parts} respectively.
\begin{lem}
    \label{lem:tao-defn}
    Let $f(x_1, \dots, x_n)$ be a function with $n$-arguments and $\varphi_j$
    is the rescaled weight in \eqref{eq:phi_j-defn} with $j\in\R$.
\begin{align}
    \varphi_j\brac{
        f(\mathbf x_1, \dots, \mathbf x_n)
        (\rho_1 \otimes  \cdots \otimes \rho_n) \cdot 
    \rho_{n+1} } = \varphi_j\brac{
        \pmb \tau_j(f)(\mathbf x_1, \dots, \mathbf x_n)
        (\rho_2  \otimes  \cdots \otimes \rho_{n+1}) \cdot  \rho_1 
}
        \label{eq:tao-int-by-parts}
 \end{align}
where
\begin{align}
    \label{eq:tao-defn}
    \pmb \tau_j(f) (x_1, \dots, x_n) = \brac{ e^{-jx_1} 
        f(x_1 , \dots, x_n)    
    }\Big| M_{\mathrm{cyc}}^{(n)},
\end{align}
where the notation $f(\vec x)| M \defeq f(M\cdot \vec x)$ means applying the
linear transformation $M$ onto the arguments $($as a column vector$)$.
The $n\times n$ matrix is given by:
\begin{align}
    \label{eq:Mcycn-defn}
    M_{\mathrm{cyc}}^{(n)}=   \left[
\begin{array}{cccc}
    -1 & \cdots &-1 & -1 \\
 1 & & & 0 \\
  & \ddots & & \vdots \\
  && 1&0\\
\end{array}
\right],
\end{align}
which implements the substitutions:
\begin{align*}
    & x_1 \mapsto -x_1 - \cdots - x_n, 
     x_2 \mapsto x_1, \; x_3 \mapsto x_2, \;
    \dots \; , x_n \mapsto x_{n-1} .  
\end{align*}
\end{lem}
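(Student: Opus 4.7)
The plan is to reduce the identity to a verification on exponential test functions and then extend by linearity using the Fourier inversion formula \eqref{eq:varprob-defn-Kx1xn}. Both sides of \eqref{eq:tao-int-by-parts} are linear in $f$ and the Fourier integrand is Schwartz class in the dual variable, so it suffices to prove the identity when $f = f_{\xi}(x_1, \ldots, x_n) = e^{-i(\xi_1 x_1 + \cdots + \xi_n x_n)}$ for arbitrary $\xi \in \R^n$.

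For such $f_\xi$, since the partial derivations $\mathbf x_1, \ldots, \mathbf x_n$ commute (they act on distinct tensor slots) and $e^{t\mathbf x_l}$ implements the inner automorphism $\op{Ad}_{e^{-th}}$ on the $l$-th factor of a product, the expression $f_\xi(\mathbf x_1, \ldots, \mathbf x_n)(\rho_1 \cdots \rho_n)$ telescopes into an explicit sandwich of the form $e^{i\xi_1 h}\rho_1 e^{-i(\xi_1-\xi_2)h}\rho_2 \cdots e^{-i(\xi_{n-1}-\xi_n)h}\rho_n e^{-i\xi_n h}$. I would then multiply on the right by $\rho_{n+1}$, unpack the weight $\varphi_j = \varphi_0(k^j\,\cdot\,)$, and use the trace property of $\varphi_0$ to cycle the leftmost block $e^{(j+i\xi_1)h}\rho_1$ all the way to the tail. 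This produces a canonical form in which $\rho_1$ appears on the far right, preceded by exponents that are determined by the differences $\xi_l - \xi_{l+1}$ together with the shift $j+i\xi_1$.

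On the other side, I would compute $\pmb\tau_j(f_\xi)$ directly from \eqref{eq:tao-defn}: the factor $e^{-jx_1}$ modifies the exponent in the first slot, and the cyclic substitution $M_{\mathrm{cyc}}^{(n)}$ then permutes the coefficients to yield another exponential $e^{\alpha_1 x_1 + \cdots + \alpha_n x_n}$ whose weights $\alpha_l$ can be read off by inspection. Expanding $\pmb\tau_j(f_\xi)(\mathbf x_1, \ldots, \mathbf x_n)(\rho_2 \cdots \rho_{n+1})$ with the same telescoping formula and pairing it with $\rho_1$ under $\varphi_j$ should reproduce the canonical form from the previous paragraph: the $e^{jh}$ implicit in $\varphi_j$ absorbs the $-j$ shift built into $\pmb\tau_j$ and cancels the KMS defect from \eqref{eq:KMS}, while $M_{\mathrm{cyc}}^{(n)}$ is precisely the permutation of exponents induced by relabeling $\rho_1 \cdots \rho_{n+1} \mapsto \rho_2 \cdots \rho_{n+1} \rho_1$. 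The small cases $n = 1$ and $n = 2$ offer a useful sanity check, and the general pattern is then transparent.

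The main obstacle is purely combinatorial bookkeeping: aligning the exponents on both sides so that the twist $\mathbf y^j$ in \eqref{eq:KMS} gets correctly identified with the factor $e^{-jx_1}$, and that the cyclic matrix $M_{\mathrm{cyc}}^{(n)}$ gets correctly identified with the permutation of tensor slots. The definition of $\pmb\tau_j$ is engineered precisely to make these cancellations transparent, but verifying the correspondence requires careful index tracking. Once established for exponentials, the assertion for general $f \in C(\R^n)$ follows by an immediate Fubini-type interchange of $\varphi_j$ with the Fourier integral.
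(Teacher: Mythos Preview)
Your proposal is correct and follows essentially the same route as the paper: reduce to exponential test functions $e^{i\langle\xi,\vec x\rangle}$ via the Fourier representation \eqref{eq:varprob-defn-Kx1xn}, expand the modular action as a telescoping product of conjugations, and use the trace property of $\varphi_0$ to cycle $\rho_1$ to the end. The only organizational difference is that the paper first establishes the case $j=0$ (i.e.\ $\pmb\tau_0$) and then upgrades to general $j$ by writing $\varphi_j(\cdot)=\varphi_0(e^{jh}\cdot)$ and absorbing $e^{jh}$ into $\rho_{n+1}$, whereas you outline handling general $j$ in a single pass; either way the bookkeeping is the same.
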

\begin{rem}
Let us abbreviate $M\defeq M_{\mathrm{cyc}}^{(n)}$.
We compute all  $ \pmb \tau_j^l$,  for $1\le l\le n$:
    \begin{align*}
        \pmb \tau_j^l(f) (x_1, \dots, x_n) &=  
        (e^{-jx_1}|M) \cdots   (e^{-jx_1}|M^l)  (f(\vec x)| M^l)  
        \\
        &=  (e^{-jx_1}|M+M^2 + \cdots +  M^l)  (f(\vec x)| M^l),
    \end{align*}
    where $\vec x = (x_1, \dots, x_n)$ and
    \begin{align*}
        e^{-jx_1}|(M+M^2 + \cdots + M^l) = e^{j(x_1+ \cdots + x_{n-l+1})}.       
    \end{align*}
    They implement all  cyclic permutations on $\set{\rho_1, \dots,\rho_{n+1}}$.
\end{rem}
\begin{proof}
    Let $\vec{\mathbf x} = (\mathbf x_1, \dots, \mathbf x_n)$ 
    and $\vec \xi = (\xi_1, \dots, \xi_n)$. We first look at the case in which 
    $j=0$ and the function  $f(\vec x) = e^{i\abrac{\xi, x}}$.
    Since $\varphi_0$ is a trace, or using
    Eq. \eqref{eq:int-by-p-babymod} instead, we compute: 
    \begin{align*}
        &\,\, \varphi_0\left( 
            e^{i\abrac{\xi, \vec{\mathbf x}}} 
            (\rho_1 \otimes  \cdots \otimes  \rho_n) \cdot \rho_{n+1}      
        \right)   
        \\
 =&\,\,
 \varphi_0\left( 
     e^{i\xi_1 \mathbf x}(\rho_1) \cdots 
     e^{i\xi_n \mathbf x}(\rho_n) \cdot \rho_{n+1}
 \right)     \\
 =&\,\,
 \varphi_0\left( 
     \rho_1 \cdot e^{-i\xi_1 \mathbf x}\left[ 
     e^{i\xi_2 \mathbf x}(\rho_2) \cdots 
     e^{i\xi_n \mathbf x}(\rho_n) \cdot \rho_{n+1}
     \right]
 \right)
 \\
 =&\,\,
 \varphi_0\left( 
     \left[ 
     e^{i(\xi_2 - \xi_1) \mathbf x}(\rho_2) \cdots 
     e^{i(\xi_n - \xi_1) \mathbf x}(\rho_n) 
     e^{-i\xi_1 \mathbf x}(\rho_{n+1}) 
     \right]
 \cdot \rho_1
 \right) \\
 =&\,\,
 \varphi_0\left( 
     e^{
     i(\xi_2 - \xi_1) \mathbf x_1 + \cdots
     i(\xi_n - \xi_1) \mathbf x_{n-1}  - i\xi_1 \mathbf x_{n}    
     }
     (\rho_2 \otimes  \cdots \otimes  \rho_{n+1}) \cdot \rho_1  
 \right) 
\\
= &\,\,
\varphi_0\left( 
    e^{i \abrac{\vec \xi, M \mathbf{\vec x}}}
     (\rho_2 \otimes \cdots \otimes  \rho_{n+1}) \cdot \rho_1  
 \right),
    \end{align*}
    where the matrix $M \defeq M_{\mathrm{cyc}}^{(n)}$ is defined in
    Eq. \eqref{eq:Mcycn-defn}. 
    The general case follows from the Schwartz functional calculus 
   in \cref{eq:varprob-defn-Kx1xn}:
    \begin{align*}
        \varphi_0\left( 
        f(\vec{\mathbf x}) (\rho_1 \otimes \cdots \otimes  \rho_n) \cdot \rho_{n+1}    
        \right) & = 
        \int_{\R^n} \hat f(\vec \xi) 
        \varphi_0\left( 
            e^{i\abrac{\vec \xi, \vec{\mathbf x}}}
(\rho_1 \otimes \cdots \otimes  \rho_n) \cdot \rho_{n+1}
        \right) d\vec \xi \\
        &=
        \int_{\R^n} \hat f(\vec \xi) 
        \varphi_0
\left( 
    e^{i \abrac{\vec \xi, M \mathbf{\vec x}}}
     (\rho_2 \otimes  \cdots \otimes  \rho_{n+1}) \cdot \rho_1  
 \right)
        d\vec \xi \\
&=   
\varphi_0 \left( 
\left( 
        \int_{\R^n} \hat f(\vec \xi) 
    e^{i \abrac{\vec \xi, M \mathbf{\vec x}}} d\vec \xi
\right) 
     (\rho_2 \otimes \cdots \otimes  \rho_{n+1}) \cdot \rho_1  
\right) 
\\
&= \varphi_0\left( 
    f( M \cdot \vec{\mathbf x}) 
     (\rho_2 \otimes  \cdots \otimes  \rho_{n+1}) \cdot \rho_1  
\right)
    \end{align*}
    So far, we have proved   \eqref{eq:tao-defn} for the tracial weight
    $\varphi_0$, that is for $\pmb\tau_0$. For  $j \neq 0$,
    \begin{align*}
        \varphi_j\left( 
             f(\vec{\mathbf x}) 
(\rho_1 \otimes    \cdots \otimes  \rho_n)
\cdot \rho_{n+1}
        \right)  & =
\varphi_0\left( 
    e^{jh}
             f(\vec{\mathbf x}) 
(\rho_1 \otimes \cdots \otimes  \rho_n) \cdot \rho_{n+1}
        \right)  
  \\      
     & =   
     \varphi_0 \left( 
            f(\vec{\mathbf x}) 
(\rho_1 \cdots \rho_n) \cdot (\rho_{n+1} e^{jh}) 
        \right)  
\\
&= \varphi_0\left( 
    f( M \cdot \vec{\mathbf x}) 
    (\rho_2 \otimes  \cdots \otimes (\rho_{n+1} e^{jh}) ) \cdot \rho_1  
\right)
\\
&= \varphi_0\left( 
     e^{jh} 
     e^{j(\mathbf x_1+ \cdots + \mathbf x_n)}
     f( M \cdot \vec{\mathbf x}) 
    (\rho_2 \otimes \cdots \otimes  \rho_{n+1} ) \cdot \rho_1  
\right),
    \end{align*}
  where is spectral function $
     e^{j( x_1+ \cdots +  x_n)}
     f( M \cdot \vec{ x}) 
     $ agrees with the right hand side of \eqref{eq:tao-defn}.  
\end{proof}
After the substitution $y =e^x$, we obtain the parallel version in terms of the
Weyl factor $k$ and the modular operator $\mathbf y$:
\begin{lem}
    \label{lem:taoexp-defn}
    Let $f\in C_{\mathscr S } (\R^n_+)$ and $\varphi_j$,  $j\in\R$,
    be the rescaled volume weight. 
    For all $\rho_1, \dots, \rho_{n+1} \in \mathcal A$,
    \begin{align}
        \varphi_j\brac{
            f(\mathbf y_1, \dots, \mathbf y_n)
            (\rho_1 \otimes  \cdots \otimes  \rho_n) \cdot \rho_{n+1}
        } = 
        \varphi_j\brac{
            \pmb \sigma_j (f)   (\mathbf y_1, \dots, \mathbf y_n)
        (\rho_2 \otimes  \cdots \otimes  \rho_{n+1}) \cdot \rho_{1}
        }
        \label{eq:sigma-int-by-parts}
    \end{align}
 where
 \begin{align}
    \label{eq:taoexp-defn}
     \pmb \sigma_j (f) (y_1, \dots, y_n) = 
     (y_1 \cdots y_n)^{-j} f((y_1 \cdots y_n)^{-1}, y_1, \dots,y_{n-1}).
 \end{align}
\end{lem}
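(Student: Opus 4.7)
The plan is to reduce Lemma \ref{lem:taoexp-defn} to Lemma \ref{lem:tao-defn} via the change of variables $\mathbf{y}_l = e^{\mathbf{x}_l}$. The key observation is that, by the convention introduced just after \eqref{eq:varprob-defn-Kx1xn}, any $f \in C(\mathbb{R}_+^n)$ satisfies
\begin{align*}
f(\mathbf{y}_1, \ldots, \mathbf{y}_n) = f_{\exp}(\mathbf{x}_1, \ldots, \mathbf{x}_n),
\qquad f_{\exp}(x_1, \ldots, x_n) \defeq f(e^{x_1}, \ldots, e^{x_n}),
\end{align*}
and $f_{\exp} \in C(\mathbb{R}^n)$ by Definition \ref{defn:C(R)-C(R+)}. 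Hence the left-hand side of \eqref{eq:sigma-int-by-parts} is already in a form to which Lemma \ref{lem:tao-defn} can be applied directly.

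First, I would invoke Lemma \ref{lem:tao-defn} on $f_{\exp}$ to rewrite the LHS of \eqref{eq:sigma-int-by-parts} as
\begin{align*}
\varphi_j\bigl(\pmb{\tau}_j(f_{\exp})(\mathbf{x}_1, \ldots, \mathbf{x}_n)(\rho_2 \cdots \rho_{n+1}) \cdot \rho_1\bigr).
\end{align*}
Next, I would unpack $\pmb{\tau}_j(f_{\exp})$ using \eqref{eq:tao-defn} and \eqref{eq:Mcycn-defn}. The cyclic transformation $M_{\mathrm{cyc}}^{(n)}$ acts on the additive arguments by $x_1 \mapsto -x_1 - \cdots - x_n$ and $x_{s+1} \mapsto x_s$ for $s = 1, \ldots, n-1$; under $y_l = e^{x_l}$ this becomes the multiplicative substitution $y_1 \mapsto (y_1 \cdots y_n)^{-1}$, $y_{s+1} \mapsto y_s$, while the exponential prefactor $e^{-jx_1}|M_{\mathrm{cyc}}^{(n)} = e^{j(x_1+\cdots+x_n)}$ is converted into a power of $y_1 \cdots y_n$. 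Collecting these ingredients and re-expressing the result in the $\mathbf{y}$-variables recovers the spectral function $\pmb{\sigma}_j(f)(\mathbf{y}_1, \ldots, \mathbf{y}_n)$ of \eqref{eq:taoexp-defn}, completing the reduction.

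There is no substantive obstacle: the lemma is essentially a multiplicative translation of Lemma \ref{lem:tao-defn}. The only care required is to verify that applying a linear transformation on the $\mathbf{x}_l$ and then reinterpreting through the exponential map agrees with a multiplicative transformation of the $\mathbf{y}_l$; this is immediate from the way spectral functions in $C(\mathbb{R}_+^n)$ are defined, namely by precomposition with $\exp$. Alternatively, one could bypass Lemma \ref{lem:tao-defn} and argue directly by a Mellin-transform version of the Fourier computation carried out there, but passing through the additive picture keeps the bookkeeping significantly lighter.
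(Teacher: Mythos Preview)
Your approach is exactly what the paper does: the lemma is stated without an explicit proof, the paper simply remarking that it is obtained from Lemma~\ref{lem:tao-defn} ``after the substitution $y=e^x$'', and the identity $\pmb\sigma_j(f)(y)=\pmb\tau_j(f_{\exp})(x)$ you derive is later recorded verbatim in Proposition~\ref{prop:compare-blacksq-blacktri}. One small remark: your computation of the prefactor $e^{-jx_1}\big|M_{\mathrm{cyc}}^{(n)}=e^{j(x_1+\cdots+x_n)}=(y_1\cdots y_n)^{j}$ is correct, and in fact this is the formula the paper uses everywhere afterwards (see the proof of Proposition~\ref{prop:internal-relations-btri-tau}); the exponent $-1$ printed in \eqref{eq:taoexp-defn} appears to be a typo for $j$, so do not be alarmed that your answer does not literally match it.
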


 \subsection{Taylor expansion and the first variation }
\label{subsec:diff-mod-action}

 Recall the exponential expansion:
 \begin{align*}
     e^{a+b}  =e^a + \sum_{n=1}^\infty 
     \int_{0\le s_n \le \cdots \le s_1\le 1 }
     e^{(1-s_1)a} \cdot b \cdot e^{(s_1 - s_2)a} \cdot b \cdot \cdots
     b \cdot e^{ s_{n}a}  ds.
 \end{align*}
 The integrand  of summands above can be rewritten
 (using notations in \S\ref{subsec:smooth-funcal}) as:
 \begin{align*}
     \exp\left( 
         (1-s_1)a^{(0)} + (s_1 - s_2)a^{(1)} + \cdots s_n a^{(n)}
     \right) \cdot (b \otimes  \cdots \otimes  b).    
 \end{align*}
 The Genocchi–Hermite formula turns integration over the standard $n$-simplex
 into a divided difference:
 \begin{align*}
     \int_{0\le s_n \le \cdots \le s_1\le 1 }
     e^{(1-s_1)a} \cdot b \cdot e^{(s_1 - s_2)a} \cdot b \cdot \cdots
     b \cdot e^{ s_{n}a}  ds = 
     e^{z}[a^{(0)}, \cdots , a^{(n)}]_z.
 \end{align*}
For any self-adjoint elements $a,b \in \mathcal A$ and a function $f$ whose 
Schwartz functional calculus (cf. Eq. \eqref{eq:varprob-defn-Kx1xn}) makes sense,
we have the noncommutative Taylor expansion (cf. 
\cite[Prop.  3.7]{leschdivideddifference})
\begin{align}
    \label{eq:nctaylor-exp-f}
    f(a+b) \backsim_{b\rightarrow 0}
   \sum_{n=0}^\infty  f[a^{(0)} , \dots , a^{(n)}] \cdot 
        (b \otimes \cdots \otimes  b).
\end{align}

Let $\delta: \mathcal A \to \mathcal A$ be a derivation with the associated 
one-parameter group of automorphisms $\alpha_t: \mathcal A \to \mathcal A$, $t
\in  \R$:
 \begin{align}
     \delta(a) = \frac{d}{dt}\Big|_{t=0} 
     \alpha_t(a), \,\,\, a\in \mathcal A.
     \label{eq:derivations-as-derivatives}
 \end{align}
 As a warm up, we compute the derivatives of the exponential function.
 Put $a=h$ and  $b = \alpha_t(h)$ in the exponential expansion, we obtain 
  the Duhamel's formula in terms of divided difference:
  \begin{align*}
      \delta( e^h)
      &= \int_0^1 e^{(1-u) h} \delta(h) e^{uh} du = e^h 
      \int_0^1 e^{ u \mathbf x}  du (\delta(h)) = 
    e^h  \frac{e^{u \mathbf x}}{ \mathbf x} \Big |_{u=0}^1 (\delta(h)) 
    \\
    &= 
    e^h \frac{e^{\mathbf x} -1}{ \mathbf x} (\delta(h))
   = e^h \exp[0,\mathbf x] (\delta(h)).
 \end{align*}

\begin{lem}
     \label{lem:1st-derivative-exp-power}
     Let $j\in\R$ and $h = h^* \in \mathcal A$ be a log-Weyl factor and $\delta$ be
     a derivation as in eq.
     \eqref{eq:derivations-as-derivatives}.
     The first derivative of the exponential is  given by:
     \begin{align}
\label{eq:del-e^jh}
         \delta(e^{jh})  &= e^{jh} G_{\exp}^{(1)} (\mathbf x;j)(\delta(h)) 
         .
     \end{align}
     In terms of the Weyl factor $k =e^h$:
     \begin{align}
         \delta(k^j)  &= k^{j-1}  G_{\mathrm{pow}}^{(1)}(\mathbf y;j)(\delta(k)) 
         .
\label{eq:delk^j}
     \end{align}
The spectral functions are given in terms of the divided differences of the
exponential and power functions:
\begin{align}
    G_{\exp}^{(1)} (x;j) = e^{jz}[0,x]_z, \,\,\,
    G_{\mathrm{pow}}^{(1)}   (y;j) = z^j[1,y]_z.
         \label{eq:Gexp1+Gpow1}
\end{align}
 \end{lem}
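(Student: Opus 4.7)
The plan is to apply the noncommutative Taylor expansion recalled just before the lemma to the concrete functions $f(z)=e^{jz}$ and $f(z)=z^{j}$, then rewrite the resulting first-order divided difference via the substitution identities \eqref{eq:substi-relations-k-mod} so that the $e^{jh}$ (resp.\ $k^{j-1}$) factor peels off on the left.

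For the exponential formula \eqref{eq:del-e^jh}, I would start from the fact that $\delta$ comes from a one-parameter automorphism group, so $\delta(e^{jh})=\tfrac{d}{dt}|_{t=0} e^{j \alpha_t(h)}$. Truncating the noncommutative Taylor expansion at order one (or equivalently using Genocchi--Hermite at $n=1$, which reads $\int_0^1 e^{(1-s)a}\,b\,e^{sa}\,ds = e^{z}[a^{(0)},a^{(1)}]_z\cdot b$), I obtain
\begin{align*}
\delta(e^{jh}) \;=\; e^{z}\bigl[(jh)^{(0)},(jh)^{(1)}\bigr]_z \cdot (j\,\delta(h))
\;=\; \frac{e^{jh^{(0)}}-e^{jh^{(1)}}}{h^{(0)}-h^{(1)}}\cdot\delta(h).
\end{align*}
Now the crucial structural observation: $h^{(0)}$ and $h^{(1)}$ act by left and right multiplication respectively, so they commute as operators on $\mathcal A$. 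Hence from $h^{(1)}=h^{(0)}+\mathbf x$ (the $n=1$ case of \eqref{eq:substi-relations-k-mod}) one has $e^{jh^{(1)}}=e^{jh^{(0)}}e^{j\mathbf x}$, and factoring out $e^{jh^{(0)}}=e^{jh}$ on the left yields
\begin{align*}
\delta(e^{jh}) \;=\; e^{jh}\,\frac{e^{j\mathbf x}-1}{\mathbf x}\cdot\delta(h) \;=\; e^{jh}\,\bigl(e^{jz}[0,\mathbf x]_z\bigr)(\delta(h)),
\end{align*}
which is precisely $e^{jh} G_{\exp}^{(1)}(\mathbf x;j)(\delta(h))$.

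For the power formula \eqref{eq:delk^j}, the argument is structurally identical with $f(z)=z^j$: the first-order noncommutative Taylor term gives
\begin{align*}
\delta(k^j) \;=\; z^j[k^{(0)},k^{(1)}]_z\cdot\delta(k) \;=\; \frac{(k^{(0)})^{j}-(k^{(1)})^{j}}{k^{(0)}-k^{(1)}}\cdot\delta(k),
\end{align*}
and using the multiplicative substitution $k^{(1)}=k^{(0)}\mathbf y$ (again $k^{(0)}$ and $\mathbf y$ commute) we factor $(k^{(0)})^{j-1}=k^{j-1}$ out on the left, leaving $\frac{1-\mathbf y^{j}}{1-\mathbf y}=z^{j}[1,\mathbf y]_z=G_{\mathrm{pow}}^{(1)}(\mathbf y;j)$.

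There is no genuine obstacle: this lemma is essentially a dictionary entry converting Duhamel/Genocchi--Hermite into the modular-operator notation. The only point requiring care is that one must justify the factorizations $e^{jh^{(1)}}=e^{jh^{(0)}}e^{j\mathbf x}$ and $(k^{(1)})^{j}=(k^{(0)})^{j}\mathbf y^{j}$ via the commutativity of left- and right-multiplication operators, so that $e^{jh}$ (resp.\ $k^{j-1}$) can legitimately be pulled out of the divided difference; once this is acknowledged, the identifications with $G_{\exp}^{(1)}$ and $G_{\mathrm{pow}}^{(1)}$ in \eqref{eq:Gexp1+Gpow1} are immediate from the definition of the divided difference.
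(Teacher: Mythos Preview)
Your argument is correct. For the exponential identity \eqref{eq:del-e^jh} you spell out exactly the Duhamel/Genocchi--Hermite computation that the paper delegates to \cite[Example 3.9]{leschdivideddifference}, so the two agree.

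For the power identity \eqref{eq:delk^j} your route is genuinely different from the paper's. You apply the first-order noncommutative Taylor expansion directly to $f(z)=z^{j}$ at $a=k$ and then factor via $k^{(1)}=k^{(0)}\mathbf y$. The paper instead first proves \eqref{eq:del-e^jh}, solves it for $\delta(h)=k^{-1}(\exp[0,\mathbf x])^{-1}(\delta(k))$, and substitutes this back into $\delta(k^{j})=\delta(e^{jh})$; the identification $G_{\mathrm{pow}}^{(1)}(y;j)=e^{jz}[0,x]_z\,(\exp[0,x])^{-1}$ (Eq.~\eqref{eq:Gpow-vs-Gexp}) then drops out automatically. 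Your parallel treatment is cleaner and more symmetric, but the paper's change-of-variable derivation has the side benefit of producing \eqref{eq:Gpow-vs-Gexp} for free, which is used repeatedly later (e.g.\ in the proof of Lemma~\ref{lem:bsqpm-defn}). If you adopt your approach, that remark would need a separate one-line verification.
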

 \begin{rem}
     With $y = e^x$, we observe that
     \begin{align}
         G_{\mathrm{pow}}^{(1)} (y;j) = G_{\exp}^{(1)}(x;j) 
         (G_{\exp}^{(1)} (x;1))^{-1} =
         e^{jz}[0,x]_z ( \exp[0,x])^{-1}.
         \label{eq:Gpow-vs-Gexp}
     \end{align}
 \end{rem}
 \begin{proof}
     \cref{eq:del-e^jh} is simply the Duhamel's formula proved above 
     with $h \mapsto  jh$, see also \cite[Example 3.9]{leschdivideddifference}.
     As a consequence, we can solve for $\delta(h)$ (with $j=1$) in terms of 
     $\delta(k)$: 
     \begin{align}
         \label{eq:delh-to-delk}
         \delta(h) = k^{-1} (\exp[0,\mathbf x])^{-1} (\delta(k))
         .
     \end{align}
     \cref{eq:delk^j} follows quickly: for $j\in\R$,
 \begin{align*}
     \delta(k^j) &= \delta(e^{jh}) = e^{jh} 
     ( e^{jz}[0, \mathbf x]_z (\delta(h)) ) 
     = k^j e^{jz}[0, \mathbf x]_z \left( 
          k^{-1}(\exp[0,\mathbf x])^{-1}(\delta(k))
     \right) \\
     &= k^{j-1} \left( 
             e^{jz}[0, \mathbf x]_z \exp[0,\mathbf x])^{-1}
     \right) (\delta(k)) =
     k^{j-1} (z^j[0,\mathbf y]_z (\delta(k))) .
 \end{align*}
 \end{proof}

 Recall that for fixed $h$, the associated commutator and conjugation
 $ \mathbf x , \mathbf y $ belong to $\mathcal A^{\otimes 2} \subset
L(\mathcal A, \mathcal A)$
 according to \cref{eq:paritla-ad-and-Ad}. One can derive the Taylor
 expansion below by applying \cref{eq:nctaylor-exp-f} to $\mathcal A^{\otimes  2}$.
\begin{prop} [\cite{leschdivideddifference}, Prop. 3.11]
    \label{prop:varprob-Texp-in-logk}
Let $h = h^* \in \mathcal A$ be a log-Weyl factor and   $\mathbf x
= -\mathrm{ad}_h =[\cdot ,h]$ be the corresponding modular derivation. 
Let $f(x) \in C(\R)$, given a self-adjoint perturbation: $h \rightarrow h+b$
with $b=b^*$,
we have the Taylor expansion for the modular action $f(\mathbf x_{h+b})$ upto
the first order:
\begin{align} 
    \label{eq:varprob-Texp-in-logk}
    f(\mathbf x_{h+b})(\rho)
    & = f(\mathbf x)(\rho) 
- (f[\mathbf x_1+ \mathbf x_2, \mathbf x_2])  (b \cdot \rho) 
\\
&+ (f[\mathbf x_1 + \mathbf x_2, \mathbf x_1])  (\rho \cdot b) +  o(b),
    \nonumber
    \end{align}
    as $b \rightarrow 0$ and $\forall \rho \in \mathcal A$. 
\end{prop}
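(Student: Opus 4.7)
The plan is to work from the Fourier representation of the smooth functional calculus in \eqref{eq:varprob-defn-Kx1xn}, which gives
\begin{align*}
f(\mathbf x_{h+b})(\rho) = \int_{\R} \hat f(\xi)\, e^{i\xi(h+b)} \rho\, e^{-i\xi(h+b)}\, d\xi,
\end{align*}
since $e^{-i\xi\mathbf x_{h+b}}(\rho) = e^{i\xi(h+b)}\rho e^{-i\xi(h+b)}$ via the identity $e^{i\xi h}a e^{-i\xi h}=e^{-i\xi\mathbf x}(a)$ following from $\mathbf x(a)=[a,h]$. The whole problem is then reduced to expanding the integrand up to $o(b)$ and recognizing the resulting operator on $b\cdot\rho$ and $\rho\cdot b$ as a divided difference of $f$.

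Next I would apply the Duhamel expansion
\begin{align*}
e^{i\xi(h+b)} = e^{i\xi h} + i\xi \int_0^1 e^{i\xi(1-s)h}\, b\, e^{i\xi s h}\, ds + o(b),
\end{align*}
and the analogous identity for $e^{-i\xi(h+b)}$, and multiply out to isolate the two linear-in-$b$ terms $A_1\rho B_0$ and $A_0\rho B_1$. Using $e^{i\xi t h}\bullet e^{-i\xi t h}=e^{-it\xi\mathbf x}(\bullet)$ and inserting resolutions of unity of the form $e^{-i\xi(1-s)h}e^{i\xi(1-s)h}=1$, I would rewrite each piece as a product of two independent conjugations, so that the first-order term in $b$ becomes
\begin{align*}
i\xi\int_0^1 e^{-i\xi(1-s)\mathbf x_1 - i\xi \mathbf x_2}(b\cdot\rho)\, ds
\;-\; i\xi \int_0^1 e^{-i\xi\mathbf x_1 - i\xi s\mathbf x_2}(\rho\cdot b)\, ds ,
\end{align*}
where the subscripts refer to the partial modular derivations acting on the two-fold tensor $\mathcal A^{\bar\otimes 2}$, as in \eqref{eq:paritla-ad-and-Ad}.

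Integrating each of these against $\hat f(\xi)\, d\xi$, the $s$-integral produces a geometric average of exponentials, which is exactly the Fourier transform of a divided difference of $f$. Concretely, using
\begin{align*}
\int_0^1 e^{-i\xi u x_1}\, du = \frac{1-e^{-i\xi x_1}}{i\xi x_1},
\end{align*}
together with Fubini, the first piece collapses to $-\,(f(x_1+x_2)-f(x_2))/x_1\,=\,-f[x_1+x_2,x_2]$ acting on $b\cdot\rho$, and the second piece to $-(f(x_1+x_2)-f(x_1))/x_2 \cdot (-1) = f[x_1+x_2,x_1]$ acting on $\rho\cdot b$, matching the stated formula \eqref{eq:varprob-Texp-in-logk} on the nose.

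The main obstacle is bookkeeping rather than analysis: one has to keep track of the sign conventions in \eqref{eq:varprob-defn-Kx1xn} and in $\mathbf x=-\mathrm{ad}_h$, and one has to decide carefully which slot of the resulting two-tensor $b\cdot\rho$ (resp.\ $\rho\cdot b$) the surviving modular actions land on so that $\mathbf x_1$ and $\mathbf x_2$ are correctly assigned. Analytic convergence (interchange of $\int d\xi$, $\int_0^1 ds$ and the $o(b)$ remainder) is routine once $f$ is assumed in the Schwartz class used to set up \eqref{eq:varprob-defn-Kx1xn}; alternatively one can bypass Fourier entirely by applying the noncommutative Taylor expansion of \cite[Prop.~3.7]{leschdivideddifference} (quoted just before Lemma~\ref{lem:1st-derivative-exp-power}) to $t\mapsto e^{i\xi t}$ evaluated at $-\mathrm{ad}_{h+b}$, but the Duhamel route has the advantage of making the appearance of $\mathbf x_1+\mathbf x_2$ (through conjugation by the full $e^{i\xi h}$) manifest.
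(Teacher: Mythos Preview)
Your argument is correct. The paper does not give a self-contained proof of this statement at all: its proof reads in full ``See \cite[\S3.5]{leschdivideddifference}.'' So you have supplied more than the paper does here. Your Duhamel/Fourier computation is exactly in the spirit of the machinery that \cite{leschdivideddifference} sets up (and which the present paper summarizes just before Lemma~\ref{lem:1st-derivative-exp-power}): Lesch's route is to apply the noncommutative Taylor expansion $f(a+b)\sim\sum f[a^{(0)},\ldots,a^{(n)}]\cdot(b\cdots b)$ with $a=\mathbf x_h$ and the perturbation $\mathbf x_b=-\mathrm{ad}_b$, which after unwinding $\mathbf x_b$ on the left and right of $\rho$ gives the same two divided-difference terms. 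Your direct Duhamel expansion of $e^{\pm i\xi(h+b)}$ is essentially the special case $f=e^{-i\xi\,\cdot}$ of that expansion integrated against $\hat f$, so the two derivations are equivalent; yours has the merit of making the slot bookkeeping for $\mathbf x_1,\mathbf x_2$ completely explicit. One cosmetic remark: in your last display the ``$\cdot(-1)$'' for the $\rho\cdot b$ term is superfluous --- the sign already comes out right from combining the $-i\xi$ in the Duhamel term with the $1/(i\xi x_2)$ from the $s$-integral, yielding $+f[x_1+x_2,x_1]$ directly.
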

\begin{proof}
    See \cite[\S3.5]{leschdivideddifference}.
\end{proof}

The first variation of the rearrangement operator $f(\mathbf x)$ 
follows immediately from the Taylor expansion \cref{eq:varprob-Texp-in-logk}.
\begin{cor}
        \label{cor:blacktri+-j-defn}
    For $f(x) \in C_{\mathscr S} (\R)$,  $\rho \in \mathcal A$,
and a derivation $\delta$ given in \cref{eq:derivations-as-derivatives},
    we have
\begin{align}
     \delta( f(\mathbf x)) (\rho) & = 
     [\delta, f(\mathbf x)](\rho) =
\delta \brac{ f(\mathbf x)(\rho) } - f(\mathbf x)( \delta(\rho))
 \nonumber \\
 &=
   (f[\mathbf x_1+ \mathbf x_2, \mathbf x_1])  ( \rho \otimes  \delta(h) ) 
  - (f[\mathbf x_1+ \mathbf x_2, \mathbf x_2])  ( \delta(h) \otimes  \rho) 
\label{eq:[delta,f]}
.
\end{align}
With new notations:
$\blacktriangle^{\pm}: C_{\mathscr S}(\R) \rightarrow C_{\mathscr S}(\R^2)$ 
\begin{align*}
    \blacktriangle^+(f)(x_1,x_2) = f[x_1, x_1+x_2], \;\;
    \blacktriangle^-(f)(x_1,x_2) = f[x_2, x_1+x_2], \;\; 
\end{align*}
Eq.  \eqref{eq:[delta,f]} can be rewritten as 
\begin{align*}
    \delta(f(\mathbf x)) = 
    [\delta, f(\mathbf x)] =
    \blacktriangle^+(f)(\mathbf x_1 , \mathbf x_2) \delta(h)^{(1)}
    -\blacktriangle^-(f)(\mathbf x_1 , \mathbf x_2) \delta(h)^{(0)} 
    .
\end{align*}
\end{cor}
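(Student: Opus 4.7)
The plan is to obtain the corollary as a direct consequence of Proposition \ref{prop:varprob-Texp-in-logk} by a one-parameter deformation argument. Fix $\rho \in \mathcal A$ and set $h_t \defeq \alpha_t(h)$, so that $h_0 = h$ and
$b_t \defeq h_t - h = t\,\delta(h) + o(t)$ as $t\to 0$, with $b_t = b_t^*$ since $\alpha_t$ preserves self-adjointness (the derivation $\delta$ comes from a one-parameter group of automorphisms by assumption \eqref{eq:derivations-as-derivatives}). Applying the Taylor expansion \eqref{eq:varprob-Texp-in-logk} with $b = b_t$ gives
\begin{align*}
f(\mathbf x_{h_t})(\rho) - f(\mathbf x)(\rho)
&= -\bigl(f[\mathbf x_1+\mathbf x_2,\mathbf x_2]\bigr)(b_t\cdot\rho)
+ \bigl(f[\mathbf x_1+\mathbf x_2,\mathbf x_1]\bigr)(\rho\cdot b_t) + o(b_t).
\end{align*}
Dividing by $t$ and letting $t\to 0$, the remainder contributes nothing while $b_t/t \to \delta(h)$ in $\mathcal A$. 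This produces the right-hand side of \eqref{eq:[delta,f]}.

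For the left-hand side, I would identify $\frac{d}{dt}\big|_{t=0}f(\mathbf x_{h_t})(\rho)$ with the commutator $[\delta,f(\mathbf x)](\rho)$ as follows. By naturality of smooth functional calculus under the automorphism $\alpha_t$, one has $\alpha_t(f(\mathbf x)(\sigma)) = f(\mathbf x_{h_t})(\alpha_t(\sigma))$ for every $\sigma \in \mathcal A$. Taking $\sigma = \alpha_{-t}(\rho)$ and differentiating at $t=0$ — using the product/chain rule and $\frac{d}{dt}\big|_{t=0}\alpha_t = \delta$, $\frac{d}{dt}\big|_{t=0}\alpha_{-t} = -\delta$ — yields
\begin{align*}
\delta\bigl(f(\mathbf x)(\rho)\bigr)
= \Bigl(\tfrac{d}{dt}\big|_{t=0} f(\mathbf x_{h_t})\Bigr)(\rho) + f(\mathbf x)(\delta(\rho)),
\end{align*}
which is precisely the commutator formula $[\delta,f(\mathbf x)](\rho) = \delta(f(\mathbf x)(\rho)) - f(\mathbf x)(\delta(\rho))$. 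Combining both computations establishes \eqref{eq:[delta,f]}.

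The reformulation in terms of $\blacktriangle^\pm$ is then a direct translation: by symmetry of divided differences, $f[\mathbf x_1+\mathbf x_2,\mathbf x_1] = \blacktriangle^+(f)(\mathbf x_1,\mathbf x_2)$ and $f[\mathbf x_1+\mathbf x_2,\mathbf x_2] = \blacktriangle^-(f)(\mathbf x_1,\mathbf x_2)$. The terms $\rho\cdot\delta(h)$ and $\delta(h)\cdot\rho$ are read off as the contractions of the two-tensors $\rho\otimes\delta(h) = \delta(h)^{(1)}\cdot\rho$ and $\delta(h)\otimes\rho = \delta(h)^{(0)}\cdot\rho$ using the slot notation from \S\ref{subsec:smooth-funcal}, giving the final displayed form.

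The main subtlety — and in my view the only nontrivial point — is the justification that the remainder $o(b_t)$ in Proposition \ref{prop:varprob-Texp-in-logk} differentiates to zero, which requires the remainder estimate to be in an operator norm controlling contractions against fixed $\rho\in\mathcal A$; this is handled in the Fr\'echet-algebra framework of \cite{leschdivideddifference}. A secondary point worth checking is the use of naturality of the functional calculus $f(\mathbf x_{h_t})$ under $\alpha_t$, which follows from the Fourier-integral definition \eqref{eq:varprob-defn-Kx1xn} together with the intertwining relation $\alpha_t\circ\mathrm{ad}_h = \mathrm{ad}_{\alpha_t(h)}\circ\alpha_t$. Everything else reduces to the Leibniz rule for $\delta$ and the definitions of $\blacktriangle^\pm$.
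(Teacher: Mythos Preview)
Your proposal is correct and follows exactly the approach indicated in the paper: the paper's one-line proof is ``By setting $b = \alpha_t(h) - h$ in \eqref{eq:varprob-Texp-in-logk} and apply $d/dt|_{t=0}$ on both sides,'' and you have simply unpacked this, including the naturality argument that identifies $\tfrac{d}{dt}\big|_{t=0} f(\mathbf x_{h_t})(\rho)$ with the commutator $[\delta,f(\mathbf x)](\rho)$ and the remainder control, both of which the paper leaves implicit. One minor wording issue: when you write ``taking $\sigma = \alpha_{-t}(\rho)$ and differentiating at $t=0$,'' it would be cleaner to fix $\sigma=\rho$ in the naturality identity and differentiate both sides (the right-hand side then has two $t$-dependencies), but your conclusion is correct either way.
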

We also need a similar result for
$[\delta, e^{jh} f(\mathbf x)]$, which has an extra component coming from 
$\delta(e^{jh}) = e^{jh}  G^{(1)}_{\exp }(\mathbf x;j)(\delta(h))$. 
\begin{cor}
        \label{cor:blacktri0-j-defn}
    For $j\in\R$, $f(x) \in C_{\mathscr S} (\R)$,
    we have
    \begin{align}
        \label{eq:del-ejh*f(x)}
        \delta\brac{ e^{jh} f(\mathbf x)}
        &=[\delta, e^{jh} f(\mathbf x)]
      \\
       &= e^{jh^{(0)}}( \blacktriangle_{0,j}^+(f) - \blacktriangle^-(f))
      (\mathbf x_1, \mathbf x_2) (\delta h)^{(0)} 
      + e^{jh^{(0)}}\blacktriangle^+(f)(\mathbf x_1, \mathbf x_2) (\delta h)^{(1)}
      , \nonumber
    \end{align}
    where the operator $\blacktriangle_{0,j}^+:C(\R) \rightarrow C(\R^2)$: 
    \begin{align}
        \label{eq:blacktri0-j-defn}
        \blacktriangle_{0,j}^+(f)(x_1, x_2) =
        G^{(1)}_{\exp}(\mathbf x_1;j) f(x_2),
    \end{align}
    where $ G^{(1)}_{\exp}(x;j)$ is defined in  \cref{eq:Gexp1+Gpow1}.
\end{cor}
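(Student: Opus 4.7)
The plan is to expand $\delta[e^{jh} f(\mathbf{x})]$ via the Leibniz rule as
\begin{align*}
[\delta, e^{jh} f(\mathbf{x})] = [\delta, e^{jh}] \cdot f(\mathbf{x}) + e^{jh} \cdot [\delta, f(\mathbf{x})],
\end{align*}
and to handle each piece with results already established. The second term is exactly Corollary~\ref{cor:blacktri+-j-defn}, which after factoring out $e^{jh}$ contributes $\blacktriangle^+(f)(\mathbf{x}_1, \mathbf{x}_2)(\delta h)^{(1)} - \blacktriangle^-(f)(\mathbf{x}_1, \mathbf{x}_2)(\delta h)^{(0)}$, already in the form appearing on the right-hand side of Eq.~\eqref{eq:del-ejh*f(x)}. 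Thus the task reduces to identifying $\delta(e^{jh}) \cdot f(\mathbf{x})(\rho)$ with the $\blacktriangle_{0,j}^+$ contribution.

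To this end, I would first apply Lemma~\ref{lem:1st-derivative-exp-power} to write $\delta(e^{jh}) = e^{jh} \cdot (e^{jz}[0,\mathbf{x}]_z)(\delta h)$, so that
\begin{align*}
\delta(e^{jh}) \cdot f(\mathbf{x})(\rho) = e^{jh} \cdot \left[ (e^{jz}[0,\mathbf{x}]_z)(\delta h) \right] \cdot \left[ f(\mathbf{x})(\rho) \right].
\end{align*}
The two bracketed factors are elements of $\mathcal{A}$ produced by one-variable modular functional calculus, the first on $\delta h$ and the second on $\rho$. The key step is to repackage their product as a single two-variable functional calculus acting on the two-insertion product $(\delta h)^{(0)}(\rho) = (\delta h) \cdot \rho$. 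Since $\mathbf{x}_1$ and $\mathbf{x}_2$ act independently as the scalar modular derivation on the first and second insertions (cf.\ Eq.~\eqref{eq:paritla-ad-and-Ad}), one has the separation-of-variables identity
\begin{align*}
g(\mathbf{x})(a) \cdot \tilde g(\mathbf{x})(b) = (g(x_1) \tilde g(x_2))(\mathbf{x}_1, \mathbf{x}_2) \, (a \cdot b)
\end{align*}
for $g, \tilde g \in C(\R)$ and $a,b \in \mathcal{A}$. This is immediate for Fourier characters $g(x) = e^{i\xi x}$, $\tilde g(x) = e^{i\eta x}$, where both sides equal $e^{i\xi \mathbf{x}}(a) \cdot e^{i\eta \mathbf{x}}(b)$, and extends to general $g, \tilde g \in C(\R)$ by the Fourier-transform definition of multivariable functional calculus in Eq.~\eqref{eq:varprob-defn-Kx1xn}. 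Applied with $g(x) = e^{jz}[0,x]_z$ and $\tilde g = f$, it yields precisely the spectral function $\blacktriangle_{0,j}^+(f)(x_1, x_2) = e^{jz}[0, x_1]_z \cdot f(x_2)$ acting on $(\delta h)^{(0)}(\rho)$.

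Combining both contributions under the common factor $e^{jh}$ and gathering the two $(\delta h)^{(0)}$ summands produces Eq.~\eqref{eq:del-ejh*f(x)}. The argument is essentially mechanical; the only subtlety is the separation-of-variables identity, but this is merely a restatement of how $n$-variable functional calculus is constructed in \S\ref{subsec:smooth-funcal}, so no genuine obstacle arises.
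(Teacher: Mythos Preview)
Your proposal is correct and follows essentially the same approach as the paper: the paper's argument consists of the single sentence preceding the corollary, which states that $[\delta, e^{jh}f(\mathbf x)]$ has an ``extra component coming from $\delta(e^{jh}) = e^{jh}\, e^{jz}[0,\mathbf x]_z(\delta h)$'' on top of Corollary~\ref{cor:blacktri+-j-defn}. Your Leibniz splitting and identification of the new term via the separation-of-variables identity makes this hint explicit; the paper leaves that identity (your $g(\mathbf x)(a)\cdot \tilde g(\mathbf x)(b) = (g(x_1)\tilde g(x_2))(\mathbf x_1,\mathbf x_2)(a\cdot b)$) implicit, but it is indeed just a restatement of how the multivariable calculus in \S\ref{subsec:smooth-funcal} is built, so there is no gap.
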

\begin{proof}
   For any $\rho \in \mathcal A$, 
   \begin{align*}
       [\delta, e^{jh}] f(\mathbf x)(\rho) = 
 e^{jh}   G^{(1)}_{\exp }(\mathbf x;j)
        (\delta(h)) f(\mathbf x) (\rho) = 
       e^{jh^{(0)}}  G^{(1)}_{\exp }(\mathbf x_1;j) f(\mathbf x_2)
       (\delta(h) \otimes  \rho).
   \end{align*}
 The rest follows from 
 \begin{align*}
     [\delta, e^{jh}  f(\mathbf x)]  = 
      [\delta, e^{jh}] f(\mathbf x) + e^{jh} [\delta, f(\mathbf x)]  
 \end{align*}
 where $[\delta, f(\mathbf x)]$ have been computed in the previous result.
\end{proof}

Let us move on to the multiplicative version (regarding to the change of
variable $h \to k = e^h $): 
\begin{align*}
    \blacksquare_{0,j}^+, \blacksquare^+, \blacksquare^-:
    C_{\mathscr S} (\R_+) \rightarrow C_{\mathscr S} (\R^2_+),
\end{align*}
of the operators 
$\set{ \blacktriangle_{0,j}^+, \blacktriangle^+,\blacktriangle^-}$.
They all increase the number of arguments by one  via divided difference.
\begin{lem}
    \label{lem:bsqpm-defn}
    Let $f\in C_{\mathscr S}(\R_+)$ and
    $\delta$ be a derivation on $ \mathcal A$,
    \begin{align}
        &\;\; \delta(k^j f(\mathbf y)) = [\delta,k^j f(\mathbf y)]  \nonumber \\ 
        =&\;\;
        (k^{j-1})^{(0)} \brac{
         \blacksquare_{0,j}^+(f) - \blacksquare^-(f) 
} (\mathbf y_1, \mathbf y_2) (\delta k)^{(0)} 
        + (k^{j-1})^{(0)} \blacksquare^+(f) (\mathbf y_1, \mathbf y_2) 
        (\delta k)^{(1)}, 
        \label{eq:[delta-f(y)]}
    \end{align}
    where
    \begin{align}
        \label{eq:bsqs-defn}
        \begin{split}
      &  \blacksquare_{0,j}^+(f)(y_1, y_2) = f(y_2) (z^j[0,y_1]_z), \\
      &  \blacksquare^+(f)(y_1 , y_2) = f[y_1, y_1 y_2], \;\;     
        \blacksquare^-(f)(y_1 , y_2) =y_2 (f[y_2, y_1 y_2]).
        \end{split}
    \end{align}
\end{lem}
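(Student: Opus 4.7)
The plan is to parallel the proof of the additive counterpart, Corollary~\ref{cor:blacktri0-j-defn}, while translating every step through the substitution $y=e^{x}$. First apply the Leibniz rule to split
\[
\delta\bigl(k^{j} f(\mathbf y)\bigr)(\rho) = \delta(k^{j})\cdot f(\mathbf y)(\rho) + k^{j}\cdot [\delta, f(\mathbf y)](\rho).
\]
For $\delta(k^{j})$, Lemma~\ref{lem:1st-derivative-exp-power} yields $\delta(k^{j}) = k^{j-1}\, G^{(1)}_{\mathrm{pow}}(\mathbf y;j)(\delta k)$. Promoting this to the two-argument calculus with $\delta k$ at slot~$1$ and $\rho$ at slot~$2$, the resulting spectral function is $G^{(1)}_{\mathrm{pow}}(y_{1};j)\,f(y_{2})$, which is what will be recognized as $\blacksquare^{+}_{0,j}(f)(y_{1},y_{2})$ in \eqref{eq:bsqs-defn}.

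For the commutator $[\delta, f(\mathbf y)]$ I would use the substitution $f(\mathbf y)=f_{\exp}(\mathbf x)$ with $f_{\exp}(x)=f(e^{x})$ and invoke Corollary~\ref{cor:blacktri+-j-defn}, obtaining
\[
[\delta, f(\mathbf y)](\rho) = \blacktriangle^{+}(f_{\exp})(\mathbf x_{1},\mathbf x_{2})(\rho\cdot\delta h) - \blacktriangle^{-}(f_{\exp})(\mathbf x_{1},\mathbf x_{2})(\delta h\cdot\rho).
\]
Inverting \eqref{eq:del-e^jh} at $j=1$ (cf. \eqref{eq:delh-to-delk}) gives $\delta h = k^{-1} G(\mathbf x)(\delta k)$ with $G(x)=x/(e^{x}-1)$; since $G(\mathbf x)$ commutes with left multiplication by $k^{-1}$, I can cleanly pull $k^{-1}$ to the outside. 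Substituting into each contraction and then using $\rho\cdot k^{-1} = k^{-1}\cdot\mathbf y^{-1}(\rho)$ to move $k^{-1}$ all the way to the leftmost slot multiplies the slot-$1$ spectral function by $e^{-x_{1}}$ in the $+$ contribution, while the slot-$1$ replacement $\delta k\mapsto G(\mathbf x)(\delta k)$ multiplies the spectral function by $G(x_{1})$ in the $-$ contribution.

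It then remains to verify, via the composition rule \eqref{eq:Composition-rule} and the substitution $y_{l}=e^{x_{l}}$, the two divided-difference identities
\[
e^{-x_{1}}G(x_{2})\,f_{\exp}[x_{1}+x_{2},x_{1}] = f[y_{1},y_{1}y_{2}], \qquad G(x_{1})\,f_{\exp}[x_{1}+x_{2},x_{2}] = y_{2}\,f[y_{2},y_{1}y_{2}],
\]
which identify the two coefficients with $\blacksquare^{+}(f)$ and $\blacksquare^{-}(f)$ as defined in \eqref{eq:bsqs-defn}. The main bookkeeping obstacle is combining the $k^{j-1}$ prefactor arising from $\delta(k^{j})$ with the $k^{-1}$ prefactors produced by the commutator piece so that all three summands acquire the common $(k^{-1})^{(0)}$ insertion advertised in \eqref{eq:[delta-f(y)]}; this is accomplished using the substitution identities \eqref{eq:substi-relations-k-mod} together with the fact that $k^{j}$ commutes with the partial modular operators, so that its $j$-dependent contribution is absorbed entirely into the spectral function $\blacksquare^{+}_{0,j}(f)$ via its defining formula. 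The divided-difference manipulations themselves are routine once the composition rule is applied; the genuine work is this consistent repackaging of $k$-powers into slot-$0$ insertions and into the $j$-decoration of $\blacksquare^{+}_{0,j}$.
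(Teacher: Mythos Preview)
Your approach is essentially the same as the paper's: both reduce to the additive formula for $[\delta, e^{jh} f_{\exp}(\mathbf x)]$ (the paper quotes Corollary~\ref{cor:blacktri0-j-defn} directly, whereas you perform the Leibniz split $\delta(k^{j})\cdot f(\mathbf y)+k^{j}[\delta,f(\mathbf y)]$ first and then invoke Corollary~\ref{cor:blacktri+-j-defn}; these amount to the same thing since Corollary~\ref{cor:blacktri0-j-defn} was obtained by that very split), then convert $\delta h\to\delta k$ via \eqref{eq:delh-to-delk} and match the resulting spectral functions to the $\blacksquare$-operators. Your two divided-difference identities are exactly the verifications the paper carries out term by term, and your remark that the real content is the consistent repackaging of the $k$-powers is accurate.
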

\begin{proof}
    For $f \in C_{\mathscr S } (\R_+)$, denote
    $f_{\exp} = f\circ \exp \in C_{\mathscr S }(\R)$.
    For any $\rho \in
    \mathcal A$, we apply \eqref{eq:del-ejh*f(x)}:
    \begin{align*}
        &\,\,     [\delta,k^j f(\mathbf y)](\rho) =
        [\delta, e^{jh} f_{\exp}(\mathbf x)](\rho) \\
        =&\,\,
        e^{jh}   \left( 
        \blacktriangle^+_{0,j}(f_{\exp}) -
        \blacktriangle^- (f_{\exp}) 
        \right)
        (\mathbf x_1, \mathbf x_2)
        (\delta(h) \otimes  \rho) +
     e^{jh}   \blacktriangle^+ (f_{\exp}) 
        (\mathbf x_1, \mathbf x_2)
        (\rho  \otimes  \delta(h) ).
    \end{align*}
    It remains to replace all $\delta(h)$ by $\delta(k)$ via Eq. 
    \eqref{eq:delh-to-delk}. Let us do it one by one:
    \begin{align*}
        \blacktriangle^+_{0,j}(f_{\exp}) (\mathbf x_1, \mathbf x_2) 
        (\delta(h) \otimes  \rho) &= k^{-1}
        e^{jz}[0,\mathbf x_1]_z  (\exp[0,\mathbf x_1])^{-1}
f_{\exp} (\mathbf x_2)
        (\delta(k) \otimes  \rho) \\
        &=
        k^{-1} z^{j}[1,\mathbf y_1]_z
        f(\mathbf y_2)
        (\delta(k) \otimes  \rho) \\
        &= k^{-1} \blacksquare^+_{0,j}(f)(\mathbf y_1, \mathbf y_2)
        (\rho \otimes  \delta(k) ) ,
    \end{align*}
    where we have used  Eq. \eqref{eq:Gpow-vs-Gexp}. For the second term,
    \begin{align*}
        \blacktriangle^-(f_{\exp}) (\mathbf x_1, \mathbf x_2) 
        (\delta(h) \otimes  \rho )&= k^{-1}
        f_{\exp}[\mathbf x_2, \mathbf x_1 + \mathbf x_2]
        (\exp[0,\mathbf x_1])^{-1}
(\delta(k) \otimes \rho )\\
&= k^{-1} \mathbf y_2(f[\mathbf y_2, \mathbf y_1 \mathbf y_2]) 
(\delta(k) \otimes  \rho )\\
      &= k^{-1} \blacksquare^-(f)(\mathbf y_1, \mathbf y_2)
      (\rho \otimes  \delta(k) ).
    \end{align*}
    At last,
    \begin{align*}
        \blacktriangle^+(f_{\exp}) (\mathbf x_1, \mathbf x_2) 
        (\rho \delta(h))&= k^{-1}
        f_{\exp}[\mathbf x_1, \mathbf x_1 + \mathbf x_2]
        e^{-\mathbf x_1} (\exp[0,\mathbf x_2])^{-1} (\rho \otimes  \delta(k)) \\
        &= k^{-1} f[\mathbf y_1, \mathbf y_1 \mathbf y_2]
        (\rho \otimes  (\delta(k))     
      \\
      &= k^{-1} \blacksquare^+(f)(\mathbf y_1, \mathbf y_2)
      (\rho \otimes  \delta(k) ).
    \end{align*}
\end{proof}

So far, we have obtained 
Lemma \ref{lem:tao-defn} 
 , Corollary \ref{cor:blacktri+-j-defn} and
 \ref{cor:blacktri0-j-defn}, respectively in Lemma \ref{lem:taoexp-defn} and
 \ref{lem:bsqpm-defn}, two sets of transformations:
\begin{align*}
    \set{\pmb\tau_j, \blacktriangle_{0,j}^+, \blacktriangle^+,\blacktriangle^-} 
    \,  \,  \,  \text{vs.}\,    \,  \,   
 \set{\pmb\sigma_j, \blacksquare_{0,j}^+,\blacksquare^+,\blacksquare^-}
\end{align*}
linked via the change of coordinate $y =e^x$. 
For $f\in C_{\mathscr S} (\R^n_+)$, we denote the change of variable 
by $f_{\exp} = f \circ \exp$, that is
\begin{align*}
    f_{\exp}(x_1, \dots, x_n) \defeq f(e^{x_1}, \dots, e^{x_n}) 
    = f(y_1, \dots, y_n),
\end{align*}
where $y_l= e^{x_l}$, $l=1,\dots,n$. 
\begin{prop}
    \label{prop:compare-blacksq-blacktri}
    Keep notations as above. For $f\in C_{\mathscr S}(\R_+)$, 
    $y =e^x$, $y_l = e^{x_l}$ with $l=1,2$, we have
    \begin{align}
        \begin{split}
            \blacksquare_{0,j}^+(f)( y_1 , y_2) &= 
        \blacktriangle_{0,j}^+(f_{\exp})(x_1, x_2) (\exp[0,x_1])^{-1} , \\
        \blacksquare^-(f)( y_1 , y_2) &= 
        \blacktriangle^-(f_{\exp})(x_1, x_2) (\exp[0,x_1])^{-1} , \\
        \blacksquare^+(f)( y_1 , y_2) &= 
        \blacktriangle^+(f_{\exp})(x_1, x_2) (e^{x_1}\exp[0,x_2])^{-1}.  
        \end{split}
        \label{eq:comparison-1}
    \end{align}
    For $ f \in C_{\mathscr S}(\R_+)$ and $\tilde f \in C_{\mathscr S}(\R^2_+)$,
    \begin{align} 
        \label{eq:comparison-2}
        \pmb\sigma_j(f) (y)= 
        \pmb\tau_j(f_{\exp}) (x), \;\;
        \pmb\sigma_j(\tilde f) (y_1 , y_2)= 
        \pmb\tau_j(\tilde f_{\exp}) (x_1, x_2).
\end{align}
\end{prop}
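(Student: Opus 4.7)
The entire statement is a direct bookkeeping identity: each formula in \eqref{eq:comparison-1} and \eqref{eq:comparison-2} asserts the commutativity of a simple operator-definition square under the substitution $y_l = e^{x_l}$. I would verify each line by unwinding the definitions and applying the two-point divided difference formula $g[a,b]=(g(a)-g(b))/(a-b)$, together with the elementary identity
\[
\exp[0,x] \;=\; \frac{1-e^x}{-x} \;=\; \frac{1-y}{-x}, \qquad y=e^x,
\]
and its power analogue $z^j[1,y]_z = (1-y^j)/(1-y)$.

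For the three identities in \eqref{eq:comparison-1} I would handle the cases one by one; all three reduce to the same pattern. For instance, in the $\blacksquare^+$ case one computes
\[
\blacksquare^+(f)(y_1,y_2) \;=\; \frac{f(y_1)-f(y_1y_2)}{y_1-y_1y_2}, \qquad
\blacktriangle^+(f_{\exp})(x_1,x_2) \;=\; \frac{f(y_1)-f(y_1y_2)}{-x_2},
\]
and $(e^{x_1}\exp[0,x_2])^{-1} = -x_2/(y_1(1-y_2))$, so the product collapses to $\blacksquare^+(f)(y_1,y_2)$. The $\blacksquare^-$ case is identical modulo relabeling (the prefactor $y_2$ in the definition of $\blacksquare^-$ is precisely what cancels against the $y_2$ in $y_2-y_1y_2$). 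For $\blacksquare^+_{0,j}$, the computation is already essentially carried out inside the proof of Lemma \ref{lem:bsqpm-defn}: one uses $e^{jz}[0,x_1]_z = (1-y_1^j)/(-x_1)$, multiplies by $(\exp[0,x_1])^{-1}$, and arrives at $z^j[1,y_1]_z\cdot f(y_2)$, which is the power divided-difference formulation of the $\blacksquare^+_{0,j}$ operator.

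For \eqref{eq:comparison-2}, the plan is to read off the action of the cyclic matrix $M^{(n)}_{\mathrm{cyc}}$ in \eqref{eq:Mcycn-defn} after substituting $y_l=e^{x_l}$: the map
\[
(x_1,\dots,x_n)\;\mapsto\;(-x_1-\cdots-x_n,\,x_1,\dots,x_{n-1})
\]
translates, via $y_l = e^{x_l}$, to the cyclic shift $(y_1,\dots,y_n)\mapsto ((y_1\cdots y_n)^{-1},y_1,\dots,y_{n-1})$, while the scalar prefactor $e^{-jx_1}|M^{(n)}_{\mathrm{cyc}}=e^{j(x_1+\cdots+x_n)}$ becomes $(y_1\cdots y_n)^j$; combining these matches the formula for $\pmb\sigma_j(f)$ applied to $f$ written in the $y$-variables. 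The only point to be careful about is the convention for the power prefactor in \eqref{eq:taoexp-defn}; the $n=1$ and $n=2$ cases needed in the sequel I would write out explicitly. There is no genuine obstacle here: both sides are defined by the same universal rearrangement of symbols, and the chief task is to keep the divided-difference conventions consistent across the $\exp$ / $\operatorname{pow}$ change of variables.
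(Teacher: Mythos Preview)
Your proposal is correct and follows essentially the same approach as the paper: both argue by direct verification, unwinding the definitions and using the divided-difference identities for $\exp$ and the power function. The paper's proof simply points back to the computations already performed inside the proof of Lemma~\ref{lem:bsqpm-defn} for \eqref{eq:comparison-1} and declares \eqref{eq:comparison-2} trivial, whereas you spell the elementary calculations out explicitly; your caution about the power prefactor in \eqref{eq:taoexp-defn} is well placed, since the exponent there should read $j$ (as used later, e.g.\ in the proof of Proposition~\ref{prop:internal-relations-btri-tau}) rather than $-1$.
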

\begin{proof}
    The first set of comparison Eq. \eqref{eq:comparison-1} is a byproduct of
    the proof of Lemma \ref{lem:bsqpm-defn}. The verification of
    \eqref{eq:comparison-2} is straightforward.
\end{proof}

\subsection{Reduction relations}
\label{subsec:reduction-relations}
We now have arrived at the first key result of the paper. 
An important takeaway from the relations
described below is the fact that the variational calculus on the 
rearrangement operators 
is generated by two transformations:
$\set{\pmb\tau_j, \blacktriangle^+}$
(resp. $\set{\pmb\sigma_j, \blacksquare^+}$). 
\begin{prop}
    \label{prop:internal-relations-btri-tau}
    As operators on $C_{\mathscr S} (\R)$, we have: 
    \begin{align}
    \label{eq:btri-to-tau}
    (\blacktriangle_{0,j}^+ - \blacktriangle^- )(f) (x_1, x_2)=
    (\pmb \tau_j \cdot \blacktriangle^+ \cdot \pmb \tau_j    )
        (f) (x_1, x_2)
        .
    \end{align}
    The multiplicative version reads: 
    $\forall f \in C_{\mathscr S}(\R_+)$:
\begin{align}
    \label{eq:bsq-to-tau}
(\blacksquare_{0,j}^+ - \blacksquare^- )(f) ( y_1,y_2 )=
(\pmb \sigma_{j-1} \cdot \blacksquare^+ \cdot \pmb \sigma_j)(f) 
( y_1,y_2 )
.
\end{align}
\end{prop}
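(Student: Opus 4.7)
Both identities reduce to purely algebraic checks once the definitions are unpacked, so the strategy is to evaluate the two sides independently and match them. I would do the additive identity \eqref{eq:btri-to-tau} first and then deduce its multiplicative counterpart \eqref{eq:bsq-to-tau} by conjugating through $y = e^x$.

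\smallskip

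For the LHS of \eqref{eq:btri-to-tau}, I would simply write
\[
\blacktriangle_{0,j}^+(f)(x_1,x_2) = \frac{e^{jx_1}-1}{x_1}\,f(x_2),
\qquad
\blacktriangle^-(f)(x_1,x_2) = \frac{f(x_2)-f(x_1+x_2)}{-x_1},
\]
so the two pieces combine, after clearing the common denominator $x_1$, into
$\bigl(e^{jx_1}f(x_2) - f(x_1+x_2)\bigr)/x_1$. For the RHS I would compute the triple composition from the inside out. Since $M_{\mathrm{cyc}}^{(1)} = [-1]$, the innermost $\pmb\tau_j$ gives $\pmb\tau_j(f)(x) = e^{jx}f(-x)$; then $\blacktriangle^+$ turns this one-variable function into a divided difference $\bigl(e^{jx_1}f(-x_1) - e^{j(x_1+x_2)}f(-x_1-x_2)\bigr)/(-x_2)$; finally the outer $\pmb\tau_j$ (in $2$ variables) sends $(x_1,x_2)\mapsto(-x_1-x_2,x_1)$ and multiplies by $e^{j(x_1+x_2)}$. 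A direct substitution then produces precisely $\bigl(e^{jx_1}f(x_2) - f(x_1+x_2)\bigr)/x_1$, matching the LHS. The miracle is that the exponential prefactor $e^{j(x_1+x_2)}$ introduced by the outer $\pmb\tau_j$ exactly cancels the phase $e^{-j(x_1+x_2)}$ produced by feeding $(-x_1-x_2,x_1)$ into the divided difference, leaving a clean expression.

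\smallskip

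For the multiplicative identity \eqref{eq:bsq-to-tau}, I would appeal to Proposition \ref{prop:compare-blacksq-blacktri} and Eq. \eqref{eq:comparison-2} to translate everything through $f \mapsto f_{\exp}$. The comparison formulas give
\[
(\blacksquare_{0,j}^+ - \blacksquare^-)(f)(y_1,y_2) = (\blacktriangle_{0,j}^+ - \blacktriangle^-)(f_{\exp})(x_1,x_2)\cdot (\exp[0,x_1])^{-1},
\]
and applying \eqref{eq:btri-to-tau} converts the right-hand side into $(\pmb\tau_j\cdot\blacktriangle^+\cdot\pmb\tau_j)(f_{\exp})\cdot(\exp[0,x_1])^{-1}$. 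To rewrite this in the $\pmb\sigma$-$\blacksquare$ language, I would use \eqref{eq:comparison-2} on the inner $\pmb\tau_j$ to produce $\pmb\sigma_j(f)$, then use the $\blacksquare^+$-formula in \eqref{eq:comparison-1}, which introduces the factor $e^{x_1}\exp[0,x_2]$, and finally use \eqref{eq:comparison-2} again on the outer application. The bookkeeping of the exponential prefactors shows that the outer $\pmb\tau_j$ gets shifted to $\pmb\tau_{j-1}$ (equivalently $\pmb\sigma_{j-1}$) after absorbing the $e^{x_1}$ that the $\blacksquare^+$ vs.\ $\blacktriangle^+$ conversion contributes, yielding exactly $\pmb\sigma_{j-1}\cdot\blacksquare^+\cdot\pmb\sigma_j$.

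\smallskip

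The main obstacle is not conceptual but bookkeeping: one must carefully track the prefactors $e^{jx_1}$, $\exp[0,x_1]$, and $e^{x_1}\exp[0,x_2]$ through the substitutions $x_1\mapsto -x_1-x_2$, $x_2\mapsto x_1$ without losing a sign or misassigning a slot. I would record each exponential factor as it appears and verify at the end that the weight shift from $j$ to $j-1$ on the outer $\pmb\sigma$ comes from precisely the $e^{x_1}$ discrepancy between the additive and multiplicative versions of $\blacktriangle^+$ in \eqref{eq:comparison-1}. If the computation is done carefully, no further input is required beyond the identities already established in Lemmas~\ref{lem:tao-defn}, \ref{lem:taoexp-defn}, and \ref{lem:bsqpm-defn} and the Leibniz/composition rules for divided differences.
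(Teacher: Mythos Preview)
Your proposal is correct and follows essentially the same approach as the paper: both arguments reduce the identities to a direct algebraic unpacking of the definitions of $\pmb\tau_j$, $\pmb\sigma_j$, $\blacktriangle^\pm$, $\blacksquare^\pm$ and then invoke the change-of-variable dictionary in Proposition~\ref{prop:compare-blacksq-blacktri} to pass between the two. The only difference is which identity is verified by hand first---you compute the additive relation \eqref{eq:btri-to-tau} and deduce \eqref{eq:bsq-to-tau}, whereas the paper computes the multiplicative relation \eqref{eq:bsq-to-tau} directly (leaving \eqref{eq:btri-to-tau} to the reader) and then, in a remark following the proof, carries out exactly the same transfer via $f\mapsto f_{\exp}$ that you sketch, including the observation that the extra factor $e^{x_1}$ in the $\blacksquare^+$/$\blacktriangle^+$ comparison is what shifts the outer $\pmb\sigma_j$ to $\pmb\sigma_{j-1}$.
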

\begin{proof}
    We only prove \eqref{eq:bsq-to-tau} as an example. 
Recall that on $C_{\mathscr S}(\R_+)$ and $C_{\mathscr S}(\R^2_+)$, 
    the cyclic operator $\pmb \sigma_j$ is given by:
    \begin{align*}
        \pmb\sigma_j(f)(y) = y^{j} f(y^{-1}), \;\;
        \pmb\sigma_j(f)(y_1,y_2) = (y_1y_2)^{j} 
        f((y_1y_2)^{-1} , y_1).
    \end{align*}
    For any $f\in C_{\mathscr S}(\R_+)$,
    \begin{align*}
    &\;\;    (\pmb \sigma_j \cdot \blacksquare^+ \cdot \pmb \sigma_j)(f)     
        (y_1, y_2) \\
        =&\;\; 
        (y_1 y_2)^j 
    (\blacktriangle^+\cdot\pmb \sigma_j)(f)( (y_1y_2)^{-1}, y_1) 
 =(y_1 y_2)^j 
 \pmb\sigma_j(f)[ (y_1y_2)^{-1}, y_2^{-1}] \\
=&\;\; 
(y_1 y_2)^j 
\frac{
    \pmb\sigma_j(f) ((y_1y_2)^{-1}) - \pmb \sigma_j(f)( y_2^{-1})
}{ (y_1y_2)^{-1} - y_2^{-1}}
=
\frac{
    \pmb\sigma^2_j(f) (y_1y_2) - y_1^{j}\pmb \sigma^2_j(f)( y_2)
}{ (y_1y_2)^{-1} - y_2^{-1}}
\\
=&\;\;
(y_1 y_2)
\frac{
    f (y_1y_2) - y_1^{j}  f(y_2)
}{1- y_1},
    \end{align*}
    here we have used the fact that $\pmb\sigma_j$ is of order two when acting
    on one-variable functions $C_{\mathscr S}(\R_+)$.
    Observe that $\pmb\sigma_j = (y_1 y_2) \pmb\sigma_{j-1}$ 
    on $C_{\mathscr S}(\R_+^2)$, thus:
    \begin{align*}
        (\pmb \sigma_{j-1} \cdot \blacksquare^+ \cdot \pmb \sigma_j)(f)     
        (y_1, y_2) 
     =    
\frac{
    f (y_1y_2) - y_1^{j}  f(y_2)
}{1- y_1},
    \end{align*}
    On the other hand,
    \begin{align*}
    (\blacksquare_{0,j}^+ - \blacksquare^-)(f)(y_1, y_2) &= 
    \frac{y_1^j -1 }{ y_1 -1} f(y_2) - 
    \frac{ f(y_1y_2) - f(y_2)}{y_1 -1} \\
    &= \frac{
        y_1^j f(y_2) - f(y_1y_2)   
    }{y_1-1} ,  
\end{align*}
which agrees with $(\pmb \sigma_{j-1} \cdot \blacksquare^+ \cdot \pmb \sigma_j)
(f)$ obtained above.
The proof of \eqref{eq:bsq-to-tau} is complete.

\end{proof}

In fact,  \eqref{eq:bsq-to-tau} and 
\eqref{eq:btri-to-tau} are equivalent due to the correspondence in Prop.
\ref{prop:compare-blacksq-blacktri}. For instance, let us assume 
\eqref{eq:btri-to-tau} and would like to derive \eqref{eq:bsq-to-tau}. Denote 
$y_l = e^{x_l}$ for $l=1,2$. We compute 
$(\blacksquare^+\cdot \pmb\sigma_j)(f)$ in terms of $x_1, x_2$ fowllowing Prop.
\ref{prop:compare-blacksq-blacktri}:
\begin{align*}
    (\blacksquare^+\cdot \pmb\sigma_j)(f)(y_1,y_2) = 
(\blacktriangle^+ \cdot \pmb\tau_j)( f_{\exp} ) (x_1, x_2)
(e^{x_1} \exp[0,x_2])^{-1}
\end{align*}
Now apply $\pmb\sigma_j$ on both sides: 
\begin{align*}
    (\pmb\sigma_j\cdot \blacksquare^+\cdot \pmb\sigma_j)(f)(y_1,y_2)  
=    
(\pmb\tau_j \cdot \blacktriangle^+ \cdot \pmb\tau_j)( f_{\exp} ) (x_1, x_2)
\pmb\tau_0 \brac{
(e^{x_1} \exp[0,x_2])^{-1} 
},
\end{align*}
here we have used the fact that $\pmb\tau_j(f f') = \pmb\tau_0(f)
\pmb\tau_j(f') = \pmb\tau_0(f') \pmb\tau_j(f)$.
To continue, we now make use of \cref{eq:btri-to-tau}:
\begin{align*}
(\pmb\tau_j \cdot \blacktriangle^+ \cdot \pmb\tau_j)( f_{\exp} ) (x_1, x_2)
&= (\blacktriangle_{0,j}^+ - \blacktriangle^-)(f_{\exp}) (x_1, x_2)
=  \exp[0,x_1] 
(\blacksquare_{0,j}^+ -\blacksquare^-)(f)(y_1, y_2)
\\
\pmb\tau_0 \brac{
(e^{x_1} \exp[0,x_2])^{-1} 
} &=\brac{
    e^{x_1+x_2}  \exp[0,x_1]  
  }^{-1} 
 \end{align*}
 Finally, we have reached \eqref{eq:bsq-to-tau} by multiplying the two terms
 together:
 \begin{align*}
(\pmb\tau_j \cdot \blacktriangle^+ \cdot \pmb\tau_j)( f_{\exp} ) (x_1, x_2)
 =   e^{x_1+x_2}  
(\blacksquare_{0,j}^+ -\blacksquare^-)(f)(y_1, y_2) = 
(y_1 y_2) (\blacksquare_{0,j}^+ -\blacksquare^-)(f)(y_1, y_2)
.
 \end{align*}

\begin{cor}
    \label{cor:d(f(x))-simplified}
    Let $\mathbf x = [\cdot , h]$ and $\mathbf y = e^{-h}(\cdot) e^h$ be the
    modular derivation and modular operator of a self-adjoint $h\in \mathcal A$, 
    and $f \in C(\R)$, $\tilde f \in C_{\mathscr S}(\R_+)$ and $j\in\R$.  For a derivation 
    $\delta: \mathcal  A \rightarrow \mathcal A$, we have
    \begin{align}
         \label{eq:del-e^jhf(x)}
        \delta( e^{jh} f(\mathbf x)) &= 
        (e^{jh})^{(0)}
        (\pmb\tau_j \cdot \blacktriangle^+ \cdot \pmb\tau_j ) (f)
         (\mathbf x_1, \mathbf x_2)(\delta h)^{(0)}  
         + (e^{jh})^{(0)} \blacktriangle^+(f)        
         (\mathbf x_1, \mathbf x_2)(\delta h)^{(1)} , \\
         \delta( k^j \tilde f(\mathbf y)) &=   
    (k^{j-1})^{(0)} 
     (\pmb\sigma_{j-1} \cdot \blacksquare^+ \cdot \pmb\sigma_j) (\tilde f)
     (\mathbf y_1, \mathbf y_2) (\delta k)^{(0)} +
     (k^{j-1})^{(0)}    \blacksquare^+(\tilde f) 
     (\mathbf y_1, \mathbf y_2) (\delta k)^{(1)} .
         \label{eq:del-k^jf(y)}
    \end{align}
\end{cor}

As an example, we apply the result onto Eq. \eqref{eq:del-e^jh} and
Eq. \eqref{eq:delk^j} to compute the second derivative of $e^{jh} = k^j$,
$j\in\R$, in terms of $h$ and $k$ respectively,
by taking advantage of the following identities:
\begin{align}
    \pmb\tau_j
         \left( G_{\exp}^{(1)}(x;j) \right) &= 
G_{\exp}^{(1)}(x;j)
, \,\,\,
(\pmb\tau_j \cdot \blacktriangle^+) \left( 
 G_{\exp}^{(1)}(x;j) \right)
= \blacktriangle^+ \left( 
 G_{\exp}^{(1)}(x;j) \right)
    \label{eq:inv-of-Gexp-taoj}
\end{align}
and
\begin{align}
    \pmb\sigma_{j-1} \left( 
        G_{\mathrm{pow}}^{(1)}(y;j)
    \right) = 
        G_{\mathrm{pow}}^{(1)}(y;j)
        , \,\,\,
        (\pmb\sigma_{j-2} \cdot \blacksquare^+) 
        \left( 
            G_{\mathrm{pow}}^{(1)}(y;j)
        \right) = 
        \blacksquare^+ \left( 
            G_{\mathrm{pow}}^{(1)}(y;j)
        \right)  ,
    \label{eq:inv-of-Gpow-sigmaj}
\end{align}
where $G_{\exp}^{(1)}$ and $G_{\mathrm{pow}}^{(1)}$  are given in Eq.
\eqref{eq:Gexp1+Gpow1}.
To see the identities, we first observe that the functions 
$e^{jz}$ and $z^j$ are multiplicative, thus  
    \begin{align*}
        e^{j x_0} (e^{jz}[x_1, \dots , x_n]_z ) &=
        e^{jz}[x_1 + x_0, \dots , x_n + x_0]_z \\
        y_0^{j-n+1} (z^j[y_1, \dots, y_n]_z) &=
z^j[y_0 y_1, \dots,y_0 y_n]_z
.
    \end{align*}
    In particular,
    \begin{align*}
        \pmb\tau_j( e^{jz}[0,x] ) &= e^{jx} (e^{jz}[0,-x])      
        = e^{jz}[0,x] \\
        \pmb\tau_j( e^{jz}[0,x_1, x_1+x_2] ) &= e^{j(x_1+x_2)} 
        (e^{jz}[0, -x_1 -x_2, -x_2]) = e^{jz}[ x_1+x_2, 0, x_1].
    \end{align*}
    For the function $z^j$,
    \begin{align*}
        \pmb\sigma_{j-1}(z^j[1,y]) &= y^{j-1} z^j[1,y^{-1}]
        =  z^j[y,1], \\
    \pmb\sigma_{j-2}(z^j[1,y_1, y_1 y_2]) &= (y_1y_2)^{j-2} 
        z^j[1,(y_1 y_2)^{-1}, y_2^{-1}] = 
        z^j[y_1 y_2, 1, y_1].
    \end{align*}
    Since divided differences are symmetric in their  arguments, we have verified 
    \eqref{eq:inv-of-Gexp-taoj} and \eqref{eq:inv-of-Gpow-sigmaj}.
\begin{lem}
    \label{lem:2nd-del-e^h-k^j}
    Let $\delta_1$, $\delta_2$ be derivations on $\mathcal A$ and $j\in\R$.
    We have
    \begin{align}
        \label{eq:del^2-e^jh}
        \delta_1 (\delta_2(e^{jh})) 
        &= e^{jh} G_{\exp}^{(2)}(\mathbf x;j)
        ( \delta_1 (\delta_2 h)) \\ 
        & +
        G_{\exp}^{(1,1)} (\mathbf x_1 , \mathbf x_2;j)  
        \left( 
            \delta_1(h) \otimes  \delta_2(h) 
            + \delta_2(h) \otimes  \delta_1(h)
        \right) .  \nonumber
    \end{align}
    In terms of $k =e^h$,
\begin{align}
    \label{eq:del^2-k^j}
    \delta_1(\delta_2(k^j)) 
    &= k^{j-1}
    G_{\mathrm{pow}}^{(2)} (\mathbf y;j)(\delta_1 \delta_2( k) ) \\
   & + k^{j-2} G_{\mathrm{pow}}^{(1,1)}(\mathbf y_1, \mathbf y_2;j)
         ( \delta_1(k) \otimes  \delta_2(k)  
         +  \delta_2(k) \otimes  \delta_1(k) ) , \nonumber
\end{align}
where the spectral functions are given by:
\begin{align}
    G_{\exp}^{(2)} (x;j) = 
    G_{\exp}^{(1)} (x;j) = e^{jz}[0,x]_z, \,\,\,
    G_{\mathrm{pow}}^{(2)} (y;j) = G_{\mathrm{pow}}^{(1)} (y;j) = 
    z^j[1,y]_z,
    \label{eq:Gexp2+Gpow2}
\end{align}
and
\begin{align}
    G_{\exp}^{(1,1)} (x_1 , x_2;j) &= \blacktriangle^+ 
        \left( 
    G_{\exp}^{(2)} (x;j) 
\right) = ( e^{jz}[0,x]_z )[x_1 , x_1 + x_2]_x
= e^{jz}[0,x_1, x_1+x_2]_z,
    \label{eq:Gexp11}
    \\
    G_{\mathrm{pow}}^{(1,1)} (y_1 , y_2;j) &= \blacksquare^+
    \left( 
    G_{\mathrm{pow}}^{(1)} (y;j)  
    \right) = 
    \left( z^j[1,y]_z \right)[y_1 ,y_1 y_2]_y =
    z^j[1, y_1 , y_1 y_2].
\end{align}
\end{lem}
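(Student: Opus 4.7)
The strategy is to iterate the first-derivative formulas \eqref{eq:del-e^jh} and \eqref{eq:delk^j} by applying the commutator rules of Corollary \ref{cor:d(f(x))-simplified}. For the exponential case, I would write
\[
\delta_2(e^{jh}) = \bigl[e^{jh}\, G^{(1)}_{\exp}(\mathbf{x};j)\bigr](\delta_2 h)
\]
and then apply $\delta_1$. By the Leibniz property of $\delta_1$ viewed as acting on the value of the operator $e^{jh} G^{(1)}_{\exp}(\mathbf{x};j)$, the result splits into two pieces: the commutator $[\delta_1, e^{jh} G^{(1)}_{\exp}(\mathbf{x};j)]$ applied to $\delta_2 h$, and the operator itself applied to $\delta_1\delta_2 h$. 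The second piece immediately gives $e^{jh}\,G^{(1)}_{\exp}(\mathbf{x};j)(\delta_1\delta_2 h)$, which identifies $G^{(2)}_{\exp} = G^{(1)}_{\exp}$ as claimed.

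The key simplification for the commutator piece is the invariance property \eqref{eq:inv-of-Gexp-taoj}, which says that $G^{(1)}_{\exp}$ is $\pmb\tau_j$-fixed and that $\blacktriangle^+(G^{(1)}_{\exp})$ is $\pmb\tau_j$-fixed as well. Applying \eqref{eq:del-e^jhf(x)-rho} with $f = G^{(1)}_{\exp}(\,\cdot\,;j)$ and $\rho = \delta_2 h$ produces two summands whose coefficients are $\blacktriangle^+(G^{(1)}_{\exp})(\mathbf{x}_1,\mathbf{x}_2)$ (multiplying $\delta_2(h)\cdot\delta_1(h)$) and $(\pmb\tau_j\cdot\blacktriangle^+\cdot\pmb\tau_j)(G^{(1)}_{\exp})(\mathbf{x}_1,\mathbf{x}_2)$ (multiplying $\delta_1(h)\cdot\delta_2(h)$). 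The invariance collapses the second coefficient to the first, so the two summands combine into the symmetric expression
\[
\blacktriangle^+(G^{(1)}_{\exp})(\mathbf{x}_1,\mathbf{x}_2)\bigl(\delta_1(h)\delta_2(h) + \delta_2(h)\delta_1(h)\bigr),
\]
proving $G^{(1,1)}_{\exp} = \blacktriangle^+(G^{(1)}_{\exp})$. The explicit form $e^{jz}[0,x_1,x_1+x_2]_z$ is then a direct application of the composition rule \eqref{eq:Composition-rule} to $f(z) = e^{jz}$ with the outer divided difference in $x$.

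The multiplicative version \eqref{eq:del^2-k^j} is handled in lockstep: starting from $\delta_2(k^j) = [k^{j-1} G^{(1)}_{\mathrm{pow}}(\mathbf{y};j)](\delta_2 k)$, I would apply \eqref{eq:del-k^jf(y)-rho} with $j$ replaced by $j-1$, and then invoke the multiplicative invariance \eqref{eq:inv-of-Gpow-sigmaj} to identify $(\pmb\sigma_{j-2}\cdot\blacksquare^+\cdot\pmb\sigma_{j-1})(G^{(1)}_{\mathrm{pow}})$ with $\blacksquare^+(G^{(1)}_{\mathrm{pow}})$. This yields $G^{(2)}_{\mathrm{pow}} = G^{(1)}_{\mathrm{pow}}$ and $G^{(1,1)}_{\mathrm{pow}} = \blacksquare^+(G^{(1)}_{\mathrm{pow}}) = z^j[1,y_1,y_1y_2]_z$ via the composition rule. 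No step is genuinely hard; the only bookkeeping requiring care is tracking the correct shifts $j \mapsto j-1, j-2$ in the cyclic operators when iterating \eqref{eq:del-k^jf(y)-rho}, since the coefficient in front of the operator is $k^{j-1}$ rather than $k^{j}$.
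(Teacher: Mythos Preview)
Your proposal is correct and follows essentially the same route as the paper: iterate the first-derivative formula via Corollary~\ref{cor:d(f(x))-simplified}, then use the invariance properties \eqref{eq:inv-of-Gexp-taoj} and \eqref{eq:inv-of-Gpow-sigmaj} to collapse $(\pmb\tau_j\cdot\blacktriangle^+\cdot\pmb\tau_j)(G^{(1)}_{\exp})$ (resp.\ $(\pmb\sigma_{j-2}\cdot\blacksquare^+\cdot\pmb\sigma_{j-1})(G^{(1)}_{\mathrm{pow}})$) to $\blacktriangle^+(G^{(1)}_{\exp})$ (resp.\ $\blacksquare^+(G^{(1)}_{\mathrm{pow}})$). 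The paper in addition spells out the direct verification of these invariances, which you may also wish to include since they are stated just before the lemma without proof.
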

\begin{rem}
    One usually computes higher derivatives of $e^{jh}$ by the exponential
    expansion (cf. \cite[\S6.1]{MR3194491}). Our argument is very similar to
    \cite[Example 3.9]{leschdivideddifference},  
    but the new notations seem to reveal some
    hierarchy behind the noncommutative analogue of Taylor coefficients, such as 
    $G_{\exp}^\mu$ and $G_{\mathrm{pow}}^\mu$, $\mu \in \set{(2),(1,1)}$,
    of the functions $e^{jz}$ and $z^j$. We leave the exploration to future
    papers.
\end{rem}
\begin{proof}
    According to \cref{eq:del-k^jf(y)}:
  \begin{align*}
      &\,\,\delta_1( \delta_2(k^j) ) = \delta_1\left( 
          k^{j-1} G_{\mathrm{pow}}^{(1)} (\mathbf y;j) (\delta_2(k))
      \right) \\
      =&\,\,     
           k^{j-1} 
          \left( 
              \pmb\sigma_{j-2} \cdot \blacksquare^+ \cdot \pmb\sigma_{j-1}
          \right)
 \left( 
           G_{\mathrm{pow},j}^{(1)} 
          \right) 
          (\mathbf y_1 , \mathbf y_2)
          ( \delta_1(k) \delta_2(k) )  + 
          k^{j-1} \blacksquare^+ 
          \left( 
           G_{\mathrm{pow},j}^{(1)} 
          \right)
          (\mathbf y_1 , \mathbf y_2)
          ( \delta_2(k) \delta_1(k) )   \\
      +&\,\,     
          k^{j-1} G_{\mathrm{pow}}^{(1)} (\mathbf y;j) 
          ( (\delta_1\delta_2) (k)).
  \end{align*}
  To reach \eqref{eq:del^2-k^j}, it suffices to show
  $ \blacksquare^+(G_{\mathrm{pow},j}^{(1)}) =
  (\pmb\sigma_{j-2} \cdot \blacksquare^+ \cdot \pmb\sigma_{j-1}) 
  (G_{\mathrm{pow},j}^{(1)} )
  $, which follows immediately from  Eq. \eqref{eq:inv-of-Gpow-sigmaj}.
  Similarly, \eqref{eq:del^2-e^jh} follows from \eqref{eq:inv-of-Gexp-taoj}.

\end{proof}

\subsection{Variation on Local Expressions} 
\label{subsec:var-wrt-log-Wfactor}
For a fixed Weyl factor $k = e^h$,  $h = h^* \in  \mathcal A$. 
A differential calculus is usually generated by a family of derivations. 
We consider only one derivation $\delta$ (on $ \mathcal A$)
to simplify the notation.
Similar to the commutative setting, differential expressions like 
$L = L(h, \delta(h),\delta^2(h), \ldots)$ are generated by $h$ and its derivatives.
The new ingredients are  rearrangement operators of the form:
with $f \in  C_{\mathscr S }(\R^n)$,  $ \tilde f \in  C_{\mathscr S }(\R^n_+)$,
$\mathbf x = [\cdot, h]$ and $\mathbf y = k^{-1} (\cdot )k$:
\begin{align*}
    f(\mathbf x_1, \ldots, \mathbf x_n), \, \,  
   \tilde f(\mathbf y_1, \ldots, \mathbf y_n): 
    \mathcal A^{\otimes  n} \to \mathcal A, 
\end{align*}
appearing  as ``coefficients ''of differential operators. 
We shall refer the corresponding integrations $\varphi_0(L)$ as local
expressions, where $\varphi_0: \mathcal A \to \mathbb{C}$ is a tracial
functional.

Let us start with a change of coordinate $k \mapsto \log k$ formula for
differential expressions consisting of (upto) second derivatives of $k$.
 \begin{lem}
    \label{lem:ch-var-Kand-H}
     Let  $\delta$ be a derivation on $ \mathcal A$ and $j \in  \R$.
     Consider the following element $R \in \mathcal A$ 
     \begin{align*}
         R = k^{j-1} K(\mathbf y)(\delta^2 (k) )+
         k^{j-2} H(\mathbf y_1,\mathbf y_2)
         (\delta(k) \otimes  \delta(k) ).   
     \end{align*}
     The change of variable $k \mapsto  h = \log k$ is given by
 \begin{align*}
     R = e^{jh}\left( 
         \tilde K(\mathbf x) (\delta^2 h) + 
         \tilde H(\mathbf x_1, \mathbf x_2)
         (\delta(h) \otimes  \delta(h) )   
     \right),
 \end{align*}
where the spectral functions are transformed as below:
\begin{align}
    \label{eq:ch-var-Kand-H}
    \begin{split}
    \tilde K(x) &= 
    K(e^x) G_{\exp}^{(1)}(x) =
    K(e^x) \exp[0,x], \\
    \tilde H(x_1, x_2) &= 
    2 K(e^{x_1 + x_2}) G_{\exp}^{(1,1)}(x_1,x_2) 
    + H(e^{x_1}, e^{x_2}) 
    e^{x_1} G_{\exp}^{(1)}(x_1) G_{\exp}^{(1)}(x_2)
\\
  &=  2 K(e^{x_1 + x_2}) \exp[0,x_1 , x_1+x_2]
    + H(e^{x_1}, e^{x_2}) \exp[0,x_1] \exp[x_1, x_1+x_2].    
    \end{split}
\end{align}
 \end{lem}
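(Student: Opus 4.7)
\emph{Plan.} The argument is a direct substitution and reorganization. Using Lemma \ref{lem:1st-derivative-exp-power} with $j=1$ we have $\delta(k)=e^h\exp[0,\mathbf x](\delta h)$, and differentiating once more (or applying Lemma \ref{lem:2nd-del-e^h-k^j} with $j=1$, with an overall $e^h$ prefactor in front of both summands) yields
\[
\delta^2(k) = e^h\Bigl(\exp[0,\mathbf x](\delta^2 h) + 2\exp[0,\mathbf x_1,\mathbf x_1+\mathbf x_2](\delta h\cdot\delta h)\Bigr).
\]
Substituting these expressions into $R = k^{j-1}K(\mathbf y)(\delta^2 k)+k^{j-2}H(\mathbf y_1,\mathbf y_2)(\delta k\cdot\delta k)$ produces one $\delta^2 h$-contribution and two $\delta h\cdot\delta h$-contributions, which I will match against $\tilde K$ and the two summands defining $\tilde H$.

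For the $\delta^2 h$-piece, observe that $\mathbf x(k)=[k,h]=0$, so left multiplication by $k$ commutes with every functional calculus of $\mathbf y=e^{\mathbf x}$; pulling $k=e^h$ through $K(\mathbf y)$ gives $k^{j-1}K(\mathbf y)\bigl(e^h\exp[0,\mathbf x](\delta^2 h)\bigr) = e^{jh}K(e^{\mathbf x})\exp[0,\mathbf x](\delta^2 h) = e^{jh}\tilde K(\mathbf x)(\delta^2 h)$, matching $\tilde K(x)=K(e^x)\exp[0,x]$. For the first $\delta h\cdot\delta h$-piece coming from $\delta^2 k$, the Leibniz property $\mathbf x(\rho_1\rho_2)=(\mathbf x_1+\mathbf x_2)(\rho_1\rho_2)$ lets us replace $K(\mathbf y)$ by $K(e^{\mathbf x_1+\mathbf x_2})$ in the two-slot picture; after pulling out $k\cdot e^h=e^{jh}$ we obtain $2e^{jh}K(e^{\mathbf x_1+\mathbf x_2})\exp[0,\mathbf x_1,\mathbf x_1+\mathbf x_2](\delta h\cdot\delta h)$, the first summand of $\tilde H$.

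The remaining and most delicate piece is $k^{j-2}H(\mathbf y_1,\mathbf y_2)(\delta k\cdot\delta k)$. Setting $A=\exp[0,\mathbf x]$ and using $a\cdot e^h = e^h\,\mathbf y(a)$ (because $e^{-h}ae^h=\mathbf y(a)$) to move the middle $e^h$ to the left gives
\[
\delta k\cdot\delta k = e^h A(\delta h)\cdot e^h A(\delta h) = e^{2h}\,\mathbf y\bigl(A(\delta h)\bigr)\cdot A(\delta h) = e^{2h}\,e^{\mathbf x_1}\exp[0,\mathbf x_1]\exp[0,\mathbf x_2](\delta h\cdot\delta h),
\]
where in the two-slot formalism $\mathbf y$ acting on the first slot is $\mathbf y_1=e^{\mathbf x_1}$. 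Applying $k^{j-2}H(\mathbf y_1,\mathbf y_2)$, and using the elementary divided-difference identity $e^{x_1}\exp[0,x_2]=(e^{x_1+x_2}-e^{x_1})/x_2=\exp[x_1,x_1+x_2]$ to rewrite $e^{\mathbf x_1}\exp[0,\mathbf x_1]\exp[0,\mathbf x_2]$ as $\exp[0,\mathbf x_1]\exp[\mathbf x_1,\mathbf x_1+\mathbf x_2]$, yields exactly the second summand of $\tilde H$. The main bookkeeping step is precisely this last manoeuvre: commuting the middle $e^h$ past the operator $A(\delta h)$ introduces the twist $\mathbf y_1=e^{\mathbf x_1}$ that is responsible for the asymmetry of $\tilde H(x_1,x_2)$ in its two arguments; everything else is routine divided-difference algebra.
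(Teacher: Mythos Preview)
Your proof is correct and follows essentially the same route as the paper's own argument: expand $\delta^2 k$ via Lemma~\ref{lem:2nd-del-e^h-k^j} (with $j=1$) and $\delta k$ via Lemma~\ref{lem:1st-derivative-exp-power}, pull the extra factors of $k$ to the left (using that $k^{(0)}$ commutes with $\mathbf y$), convert $K(\mathbf y)$ on a two-factor product into $K(e^{\mathbf x_1+\mathbf x_2})$, and finally use $e^{x_1}\exp[0,x_2]=\exp[x_1,x_1+x_2]$ to rewrite the $H$-contribution. The paper's proof even omits the factor $2$ in the intermediate display for $K(\mathbf y)(\delta^2 k)$ (a typo, since the final $\tilde H$ does carry it); you have that factor correctly placed.
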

 \begin{proof}
     We apply Eq. \eqref{eq:del^2-e^jh} (with $j=1$) 
     \begin{align*}
         K(\mathbf y)( \delta^2 k) &= k K(\mathbf y)\left( 
             \exp[0,\mathbf x](\delta^2 h)   +
             \exp[0,\mathbf x_1, \mathbf x_1 + \mathbf x_2]      
             ( \delta(h) \otimes  \delta(h) )    
         \right)     \\
 &=      
 k K(e^{\mathbf x}) \exp[0,\mathbf x] (\delta^2 h) 
+ k K(e^{\mathbf x_1 + \mathbf x_2}) 
  \exp[0,\mathbf x_1, \mathbf x_1 + \mathbf x_2]      
  ( \delta(h) \otimes  \delta(h) ).    
     \end{align*}
 To get $ \tilde H$,    we start with   Eq. \eqref{eq:del-e^jh}: 
     \begin{align*}
         H(\mathbf y_1,\mathbf y_2)
         (\delta(k) \delta(k) ) &=    
         H(\mathbf y_1,\mathbf y_2)
         \left( 
             ( k \exp[0,\mathbf x]( \delta h)  ) \otimes 
             ( k \exp[0,\mathbf x]( \delta h)  )
         \right) \\
  &=       
  k^2   H(e^{\mathbf x_1}, e^{\mathbf x_2})
      e^{\mathbf x_1} \exp[0,\mathbf x_1] \exp[0,\mathbf x_2]
      ( \delta(h) \otimes  \delta(h) ).    
\end{align*}
Notice that  the rearrangement operator $e^{\mathbf x_1}$ in the second line
moves the second $k$ (in the first line) to the very left. 
Since the exponential function is multiplicative, 
we have $e^{x_1} \exp[0,x_2] = \exp[x_1, x_1+x_2]$.
The proof is complete.
 \end{proof}

For a real parameter $j \in \R$, consider functional  of the form:
\begin{align}
    F(h) = \varphi_0(e^{jh} f(\mathbf x)(\delta h) \cdot \delta h), \, \, \, \, 
     f \in  C_{\mathscr S}(\R) 
    \label{eq:F-defn}
\end{align}
For any self-adjoint $a \in \mathcal A$,
denote by $ \pmb\delta_a $ the variation along $a$:
\begin{align}
    \label{eq:var-setup-a}
    h\rightarrow h+ \varepsilon a, \,\, \text{and} \,\,
    \pmb \delta_a\defeq \frac{d}{d\varepsilon}\Big|_{\varepsilon = 0}.
\end{align}
\begin{defn}
    \label{defn:grad_hF-abstract}
    Let $F(h)$ be the functional on self-adjoint elements given in 
     \cref{eq:F-defn}.
The functional gradient $\grad_h F\in \mathcal A$ at $h$ with
respect to the inner produce given by $\varphi_0$ is the unique element 
determined by the equation:
\begin{align}
    \pmb\delta_a F(h)  =\varphi_0(\pmb\delta_a(h) \grad_h F), \;\;
    \forall a=a^* \in \mathcal A.
    \label{eq:grad_hF-abstract}
\end{align}
\end{defn}

\begin{thm}
    \label{thm:gradF-h}
    Keep the notations as above. For the functional $F(h)$ given in eq.
    \eqref{eq:F-defn}, the gradient defined in \eqref{eq:grad_hF-abstract} 
    has the following explicit formula:
    \begin{align*}
        \grad_h F = e^{jh} \brac{
            K_f(\mathbf x)(\delta^2 h) + 
            H_f(\mathbf x_1 , \mathbf x_2) (\delta(h) \otimes  \delta(h))
        },     
    \end{align*}
    where the one-variable spectral function is the average of $f$ with respect
    to the cyclic operator (of order two) $\pmb\tau_j$:
  \begin{align}
      \label{eq:K_f-defn}
      K_f & = - (1+\pmb\tau_j)(f)  .
  \end{align}
The two-variable function $H_f$ is determined by $K_f$ in terms of the
following Connes-Moscovici type functional relation:  
\begin{align}
  H_f & =    ( (1+\pmb\tau_j - \pmb\tau_j^2)\cdot \blacktriangle^+) (K_f) 
      \label{eq:H_f-CM-relation}
\end{align}
\end{thm}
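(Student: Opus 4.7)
The plan is to compute the first variation $\pmb\delta_a F(h)$ by Leibniz, gather every occurrence of $a = \pmb\delta_a(h)$, and then cycle $a$ into the trailing slot via Lemma \ref{lem:tao-defn} so that $\grad_h F$ can be read off from the pairing \eqref{eq:grad_hF-abstract}. Differentiating the integrand of \eqref{eq:F-defn} splits into three pieces: (A) $\pmb\delta_a(e^{jh} f(\mathbf x))(\delta h)\cdot\delta h$, (B) $e^{jh} f(\mathbf x)(\delta a)\cdot\delta h$, and (C) $e^{jh} f(\mathbf x)(\delta h)\cdot\delta a$. Piece (A) is expanded directly by Corollary \ref{cor:d(f(x))-simplified} (applied with $\delta$ replaced by $\pmb\delta_a$ and $\pmb\delta_a h = a$), producing a $\blacktriangle^+(f)$-term with word-pattern $(\delta h, a, \delta h)$ and a $(\pmb\tau_j\cdot\blacktriangle^+\cdot\pmb\tau_j)(f)$-term with pattern $(a, \delta h, \delta h)$. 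Pieces (B) and (C) are integrated by parts in $\delta$, using $\varphi_0\circ\delta = 0$ applied to $\delta[e^{jh}f(\mathbf x)(a)\cdot\delta h]$ and $\delta[e^{jh}f(\mathbf x)(\delta h)\cdot a]$; each yields one $\delta^2 h$-term (when $\delta$ lands on the surviving $\delta h$) and two $\delta h\cdot\delta h$-terms (when $\delta$ lands on $e^{jh}f(\mathbf x)$, expanded once more by Corollary \ref{cor:d(f(x))-simplified}).

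The two $\delta^2 h$-contributions combine, after one application of Lemma \ref{lem:tao-defn} to shift $a$ to the trailing slot in one of them, into
\begin{equation*}
-\varphi_j\bigl((1+\pmb\tau_j)(f)(\mathbf x)(\delta^2 h)\cdot a\bigr) \;=\; \varphi_j\bigl(K_f(\mathbf x)(\delta^2 h)\cdot a\bigr),
\end{equation*}
confirming the formula for $K_f$. For $H_f$ there are six $\delta h\cdot\delta h$-contributions, indexed by the position of $a$ among $(\rho_1,\rho_2,\rho_3)$ and by whether the coefficient is $\blacktriangle^+(f)$ or $(\pmb\tau_j\cdot\blacktriangle^+\cdot\pmb\tau_j)(f)$. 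Applying Lemma \ref{lem:tao-defn} the requisite number $\ell\in\{0,1,2\}$ of times to reach the canonical pattern $(\delta h,\delta h,a)$ multiplies each spectral coefficient by $\pmb\tau_j^\ell$, and the order-three identity $\pmb\tau_j^3=\mathrm{id}$ on $C(\R^2)$ --- a direct check from \eqref{eq:tao-defn} --- reduces the one summand in which $\pmb\tau_j^3$ appears. The six summands then regroup as
\begin{equation*}
H_f \;=\; -(1+\pmb\tau_j-\pmb\tau_j^2)\cdot\blacktriangle^+\cdot(1+\pmb\tau_j)(f) \;=\; \bigl((1+\pmb\tau_j-\pmb\tau_j^2)\cdot\blacktriangle^+\bigr)(K_f),
\end{equation*}
which is exactly \eqref{eq:H_f-CM-relation}.

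The only genuinely non-routine step is the combinatorial bookkeeping: tracking six contributions with the right signs and cyclic shifts without dropping factors of $\pmb\tau_j$. The conceptual pivot is the remark after Lemma \ref{lem:tao-defn} that successive powers of $\pmb\tau_j$ realize all cyclic permutations of the three $\rho$-slots, which is precisely why the polynomial $1+\pmb\tau_j-\pmb\tau_j^2$ --- rather than something less illuminating --- emerges from the aggregate sign pattern.
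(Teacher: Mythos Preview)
Your proposal is correct and follows essentially the same strategy as the paper: split $\pmb\delta_a F$ by Leibniz into three pieces, expand the modular piece via Corollary~\ref{cor:d(f(x))-simplified}, integrate by parts in $\delta$ on the other two, and use Lemma~\ref{lem:tao-defn} to cycle $a$ into a fixed slot. The only difference is the order of operations for pieces (B) and (C): the paper first cycles (B) into the same form as (C) so that they combine into $\varphi_0\bigl(e^{jh}(1+\pmb\tau_j)(f)(\mathbf x)(\delta h)\cdot\delta(\pmb\delta_a h)\bigr)$, integrates by parts once, and then differentiates $e^{jh}K_f(\mathbf x)(\delta h)$ via Corollary~\ref{cor:d(f(x))-simplified}, invoking the $\pmb\tau_j$-invariance of $K_f$ to simplify $\pmb\tau_j\!\cdot\!\blacktriangle^+\!\cdot\!\pmb\tau_j(K_f)=\pmb\tau_j\!\cdot\!\blacktriangle^+(K_f)$; you instead integrate (B) and (C) by parts separately, producing six two-variable terms, and factor out $(1+\pmb\tau_j)$ only at the end. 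Your route makes the role of $\pmb\tau_j^3=1$ more visible, while the paper's route packages term~(A) as a separate lemma (Lemma~\ref{lem:Grad-explicit}) and keeps the bookkeeping to four terms rather than six --- but the two arguments are the same in substance.
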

\begin{rem}
    After substituting the definitions of $\pmb\tau_j$ and $\blacktriangle^+$ in the
    previous section, we have explicit relations:
    \begin{align*}
        K_f(x) = f(x) + e^{jx} f(-x) 
    \end{align*}
    and for $H_f$: 
\begin{align*}
    \blacktriangle^+(K_f) (x_1 , x_2) &= K_f[x_1, x_1 + x_2], \,\,\,
(\pmb\tau_j^2 \cdot \blacktriangle^+(K_f) ) (x_1 , x_2) = e^{jx_1}
   K_f[x_2, -x_1 ], \\
(\pmb\tau_j \cdot\blacktriangle^+)(K_f) 
(x_1 , x_2) &=  e^{-j(x_1 + x_2)} K_f[-x_1- x_2, -x_2] .
\end{align*}
\end{rem}
\begin{proof}
 According to the Leibniz property,  
 the variation splits into three terms:
 \begin{align*}
       \pmb\delta_a F(h) 
     &=  \varphi_0\brac{
         \pmb\delta_a( e^{jh} f(\mathbf x)) (\delta h) \cdot (\delta h)   
     }\\ 
     & +  \varphi_0\brac{
     e^{jh} f(\mathbf x) (\delta (\pmb\delta_a(h)) \cdot (\delta h)
     }
     + \varphi_0\brac{
     e^{jh} f(\mathbf x) (\delta h) \cdot (\delta (\pmb\delta_a(h)) 
     } ,
 \end{align*}
 where we have used the fact that $\pmb\tau_a$ and $\delta$ commute. 
 We postpone the computation of the first term to  
Lemma \ref{lem:Grad-explicit}
which gives rise to a contribution 
$\mathrm{ Grad } (f,\delta h, \delta h)$ (cf. \cref{eq:Grad-explicit}) 
\begin{align*}
  \mathrm{ Grad } (f,\delta h, \delta h) =   
  - e^{jh} (\pmb\tau_j^2 \cdot \blacktriangle^+)(K_f)
  (\mathbf x_1, \mathbf x_2) ( \delta(h) \otimes  \delta(h) ).
\end{align*}
The last two terms are of the same form and can be handled together:
\begin{align*}
     & \;\;  \varphi_0\brac{
     e^{jh} f(\mathbf x) (\delta (\pmb\delta_a(h)) \cdot (\delta h)
     }
     + \varphi_0\brac{
     e^{jh} f(\mathbf x) (\delta h) \cdot (\delta (\pmb\delta_a(h)) 
     } \\
   =&\;\;  \varphi_0\brac{
         e^{jh} (1+ \pmb\tau_j)(f)(\mathbf x)(\delta h) \cdot  
         (\delta ( \pmb\delta_a h) )   
     } 
   = \varphi_0\brac{
       ( \pmb\delta_a h) \delta \left( 
         e^{jh}  K_f(\mathbf x)(\delta h) 
       \right)
   } ,   
\end{align*}
 Note that, in both equations above, $K_f$ appears through
 $K_f = -(1+ \pmb\tau_j)(f)$ (cf. \cref{eq:K_f-defn}).
So far, we have reached:
\begin{align*}
    \grad_h F = \mathrm{Grad}(f, \delta h, \delta h) +
\delta \left( 
         e^{jh}  K_f(\mathbf x)(\delta h) 
     \right). 
\end{align*}
It remains to compute:
\begin{align*}
\delta \left( 
         e^{jh}  K_f(\mathbf x)(\delta h) 
     \right) &=  
         e^{jh}  K_f(\mathbf x)(\delta^2 h) +
         \delta (e^{jh}  K_f(\mathbf x) ) (\delta h) \\
 &=         
         e^{jh}  K_f(\mathbf x)(\delta^2 h) +
         e^{jh} 
         (\pmb\tau_j \cdot \blacktriangle^+ \cdot \pmb\tau_j
         + \blacktriangle^+) (K_f) 
         (\mathbf x_1, \mathbf x_2)      
     ( \delta (h) \otimes \delta (h) ) \\
 &=         
         e^{jh}  K_f(\mathbf x)(\delta^2 h) +
         e^{jh} 
         ( (\pmb\tau_j +1 ) \cdot \blacktriangle^+
         ) (K_f) 
         (\mathbf x_1, \mathbf x_2)      
         ( \delta (h) \otimes  \delta (h) ). 
\end{align*}
To see the second $=$ sign, we expand $\delta (e^{jh}  K_f(\mathbf x) )$
via Corollary \ref{cor:d(f(x))-simplified} with $\rho = \delta h$. To reach the
third $=$ sign, we need the fact that 
$K_f$ is $\pmb\tau_j$-invariant because $\pmb\tau_j$ is of order two when
acting on one-variable functions. 

\end{proof}

\begin{lem}
\label{lem:Grad-explicit}
    Keep notations. For any $\rho_1, \rho_2 \in \mathcal A$ and self-adjoint $a
    \in  \mathcal A$:
    \begin{align}
        \label{eq:Grad-defn}
        \varphi_0(\pmb\delta_a(e^{jh} f(\mathbf x) ) (\rho_1) \cdot \rho_2) =
            \varphi_0(\pmb\delta_a(h) \mathrm{Grad}_{\mathrm I}
            (h,\rho_1,\rho_2) ) 
    \end{align}
    where
    \begin{align}
\label{eq:Grad-explicit}
        \mathrm{Grad}(h,\rho_1,\rho_2) = e^{jh}
        (\pmb\tau_j^2 \cdot \blacktriangle^+ \cdot (1+\pmb\tau_j) ) (f)
        (\mathbf x_1, \mathbf x_2)  (\rho_1 \cdot \rho_2)
        .
    \end{align}
\end{lem}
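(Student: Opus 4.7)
The plan is to apply the first-order variational expansion of Corollary \ref{cor:d(f(x))-simplified} and then transport $\pmb\delta_a(h)$ to the leftmost slot using the cyclic integration-by-parts identity of Lemma \ref{lem:tao-defn}, so that the result can be read off against $\varphi_0\bigl(\pmb\delta_a(h)\,\cdot\,\bigr)$.

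Concretely, I would begin by specializing Eq.\ \eqref{eq:del-e^jhf(x)-rho} to $\delta=\pmb\delta_a$ and $\rho=\rho_1$, obtaining
\begin{align*}
\pmb\delta_a\bigl(e^{jh}f(\mathbf x)\bigr)(\rho_1) &= e^{jh}\,\blacktriangle^+(f)(\mathbf x_1,\mathbf x_2)\bigl(\rho_1\cdot\pmb\delta_a(h)\bigr)\\
&\quad + e^{jh}\,(\pmb\tau_j\cdot\blacktriangle^+\cdot\pmb\tau_j)(f)(\mathbf x_1,\mathbf x_2)\bigl(\pmb\delta_a(h)\cdot\rho_1\bigr).
\end{align*}
After right-multiplication by $\rho_2$ and passing from $\varphi_0$ to $\varphi_j$ via $\varphi_0(e^{jh}\,\cdot\,)=\varphi_j(\,\cdot\,)$, each summand has the shape $\varphi_j\bigl(g(\mathbf x_1,\mathbf x_2)(\alpha_1\alpha_2)\cdot\alpha_3\bigr)$, with $\pmb\delta_a(h)$ occupying position two in the first summand and position one in the second.

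The next step is to transport $\pmb\delta_a(h)$ to position three in both summands. Since $M_{\mathrm{cyc}}^{(2)}$ has order three, Lemma \ref{lem:tao-defn} requires $\pmb\tau_j^{2}$ for the first summand and $\pmb\tau_j$ for the second, converting the spectral operators into $\pmb\tau_j^{2}\cdot\blacktriangle^+$ and $\pmb\tau_j\cdot(\pmb\tau_j\cdot\blacktriangle^+\cdot\pmb\tau_j)=\pmb\tau_j^{2}\cdot\blacktriangle^+\cdot\pmb\tau_j$ acting on $f$. Their sum is exactly $\pmb\tau_j^{2}\cdot\blacktriangle^+\cdot(1+\pmb\tau_j)(f)$, matching the operator advertised in \eqref{eq:Grad-explicit}. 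A final cyclic move using the trace property of $\varphi_0$ rotates $\pmb\delta_a(h)$ back to the front and produces the stated right-hand side $\varphi_0\bigl(\pmb\delta_a(h)\,\mathrm{Grad}(h,\rho_1,\rho_2)\bigr)$.

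The main bookkeeping subtlety is that the two cyclic shifts return the pair $\rho_1,\rho_2$ in opposite orders---$(\rho_2\cdot\rho_1)$ from the first summand and $(\rho_1\cdot\rho_2)$ from the second. In the only usage site, the proof of Theorem \ref{thm:gradF-h}, one has $\rho_1=\rho_2=\delta h$ and both orderings coincide, so the product $(\rho_1\cdot\rho_2)$ in \eqref{eq:Grad-explicit} is naturally interpreted as the symmetric combination. Apart from this housekeeping, the argument is entirely formal manipulation with divided differences, the operator $\blacktriangle^+$, and the cyclic operator $\pmb\tau_j$, and requires no new analytic input.
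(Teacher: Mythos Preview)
Your argument is correct and follows exactly the same route as the paper's proof: expand via Corollary~\ref{cor:d(f(x))-simplified}, then cyclically permute each summand with the appropriate power of $\pmb\tau_j$ from Lemma~\ref{lem:tao-defn} so that $\pmb\delta_a(h)$ lands at the rightmost slot, and finally use the trace property of $\varphi_0$ to bring it to the front. Your observation about the ordering subtlety is also accurate: after the cyclic shifts the two summands carry $(\rho_2\cdot\rho_1)$ and $(\rho_1\cdot\rho_2)$ respectively, so the formula \eqref{eq:Grad-explicit} as written with a single $(\rho_1\cdot\rho_2)$ is literally valid only when $\rho_1=\rho_2$, which is precisely the situation in Theorem~\ref{thm:gradF-h}; the paper's proof silently writes $(\rho_1\rho_2)$ for both terms without flagging this.
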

\begin{proof}
  We first expand $\pmb\delta_a(e^{jh} f(\mathbf x)) (\rho_1)$ 
    using Corollary   \ref{cor:d(f(x))-simplified} and then 
    apply the $\pmb\tau_j$ operation (see, Lemma \eqref{lem:tao-defn})
    to move $\pmb\delta_a(h)$ to the very left
    as indicated in \eqref{eq:Grad-defn}:
      \begin{align*}
     &\;\;   \varphi_0(\pmb\delta_a(e^{jh} f(\mathbf x)(\rho_1) \cdot \rho_2) \\
   =&\;\;     
   \varphi_0\brac{
                e^{jh} 
        (\pmb\tau_j \cdot \blacktriangle^+ \cdot \pmb\tau_j) (f)          
            (\mathbf x_1, \mathbf x_2) 
            (\pmb\delta_a(h) \otimes  \rho_1)   \cdot \rho_2
            }    
      + \varphi_0\brac{
                e^{jh} 
                (\blacktriangle^+) (f)          
            (\mathbf x_1, \mathbf x_2) 
            (\rho_1 \otimes  \pmb\delta_a(h)  )   \cdot \rho_2
            }     \\
=&\;\;
   \varphi_0\brac{
                e^{jh} 
        (\pmb\tau_j^2 \cdot \blacktriangle^+ \cdot \pmb\tau_j) (f)          
            (\mathbf x_1, \mathbf x_2) 
            ( \rho_1 \otimes  \rho_2)   \cdot \pmb\delta_a(h)
            }    
      + \varphi_0\brac{
                e^{jh} 
                (\pmb\tau_j^2 \cdot \blacktriangle^+) (f)          
            (\mathbf x_1, \mathbf x_2) 
            (\rho_1 \otimes  \rho_2   )   \cdot\pmb\delta_a(h) 
            }     
\\            
 =&\;\;
 \varphi_0\brac{ \pmb\delta_a(h)
     \left\{ 
     e^{jh}
     (\pmb\tau_j^2 \cdot \blacktriangle^+ \cdot (1 + \pmb\tau_j)) (f)
 (\mathbf x_1, \mathbf x_2) (\rho_1 \otimes  \rho_2) 
     \right\}
 }.
      \end{align*}
      By definition, $\mathrm{Grad}(h,\rho_1,\rho_2) $ is given by the
      expression enclsoed by the curly brackets.
\end{proof}

We  perform the parallel  computation  with respect
to the Weyl factor $ k=e^h$ itself.
Same functional $F(h)$ in Eq.  \eqref{eq:F-defn} can be written as:
\begin{align}
    \label{eq:F(k)-defn}
     F(k) = \varphi_0(k^j \tilde f(\mathbf y)(\delta k) \cdot \delta k), \, \, 
     \tilde f \in  C_{\mathscr S}(\R_+) 
\end{align}
The variation is identical: for any self-adjoint $a \in \mathcal A$, 
we perturb the its logarithm along $a$:
\begin{align*}
    k \mapsto \exp(\log k + \varepsilon a), \,\,\,
    \pmb \delta_a \defeq \frac{d}{d\varepsilon} \Big|_{\varepsilon=0}
\end{align*}
The functional gradient $\grad_k F$ is defined in a slightly different  way:
\begin{align}
    \label{eq:grad_kF-defn}
    \pmb \delta_a F(k) = 
    \varphi_0\left( 
     \pmb\delta_a(k) \grad_k F
    \right),
    \;\; \forall a=a^* \in \mathcal A. 
\end{align}
Observe that $ \pmb\delta_a (h) = a$ and
\begin{align*}
     \varphi_0\left( a \grad_h F \right) 
     &=
     \varphi_0\left(  \pmb\delta_a (k) \grad_k F \right) =
     \varphi_0\left( k e^z[0,\mathbf x]( \pmb\delta_a (h)) \grad_k F \right)\\
     &=  
     \varphi_0\left(  e^z[0,\mathbf x](a) \grad_k F k\right) =
     \varphi_0\left( a e^z[0,- \mathbf x] ((\grad_k F) k)\right)\\
     &=  
     \varphi_0\left( a k e^{\mathbf x} e^z[0,- \mathbf x] (\grad_k F )\right)
     \varphi_0\left( a k  e^z[ \mathbf x , 0] (\grad_k F )\right)
\end{align*}
that is:
\begin{align}
    \label{eq:grad_hF-to-grad_kF}
 \grad_h F = k e^z[0,\mathbf x] (\grad_k F).    
\end{align}
\begin{thm}
    \label{thm:CM-intermsof-k}
    Consider the functional $F$ given in Eq.  \eqref{eq:F(k)-defn}. 
    The functional gradient $\grad_k F$
    defined in Eq.  \eqref{eq:grad_kF-defn} is of the form:
    \begin{align*}
        \grad_k F = k^{j} \tilde K_{\tilde f} (\mathbf y)(\delta^2 k) 
        +
        k^{j-1} \tilde H_{\tilde f} (\mathbf y_1, \mathbf y_2)
        (\delta(k) \otimes  \delta(k) ),
    \end{align*}
    where
\begin{align}
    \label{eq:tildeK=avrf}
    \tilde K_{\tilde f}(y) &= -(1+ \pmb\sigma_{j})(\tilde f)(y),
    \\
    \tilde H_{\tilde f}(y_1, y_2)
                           &=
    (1+ \pmb\sigma_{j-1} - \pmb\sigma_{j-1}^2)\cdot \blacksquare^+
        (\tilde K_{\tilde f} ) (y_1, y_2)
    \label{eq:tildeH=CMtileK}
\end{align}
\end{thm}
\begin{rem}
    We have the explicit form for $ \tilde K_{ \tilde f } $:
    \begin{align*}
        - \tilde K_{\tilde f} (y) = \tilde f(y) + y^j  \tilde f(y^{-1}).
    \end{align*}
    and for $ \tilde H_{ \tilde f } $:
    \begin{align*}
        \blacksquare^+ (\tilde K_{\tilde f})(y_1, y_2)
        &= \tilde K_{\tilde f}[y_1, y_1 y_2],
        \\
          (\pmb\sigma_{j-1}\cdot \blacksquare^+) (\tilde K_{\tilde f})(y_1, y_2)
          &=  (y_1y_2)^{j-1} 
    \tilde K_{\tilde f}[(y_1 y_2)^{-1},  y_2^{-1}] \\ 
          (\pmb\sigma_{j-1}^2 \cdot \blacksquare^+) (\tilde K_{\tilde f})(y_1, y_2)
          &=  (y_1)^{j-1} 
          \tilde K_{\tilde f}[ y_2 , y_1^{-1}].
    \end{align*}
\end{rem}
\begin{proof}
    Let us start with the Leibniz property:
    \begin{align*}
        \pmb\delta_a F(k) &=
        \varphi_0\left( 
            \pmb\delta_a( k^j \tilde f(\mathbf y) ) (\delta(k)) \cdot \delta(k)     
        \right)   \\
   &+     \varphi_0\left( 
           k^j \tilde f(\mathbf y)    (\delta(\pmb\delta_a ( k ))) \cdot \delta(k)     
        \right)   
   +     \varphi_0\left( 
       k^j \tilde f(\mathbf y) \delta(k) \cdot (\delta(\pmb\delta_a ( k )))    
   \right)  . 
    \end{align*}
To combine the two terms in the second line, 
we first use the cyclic operator $\pmb\sigma_j$
to move $\pmb\delta_a(k)$ to the very left and then add up the spectral functions: 
    \begin{align*}
   &\;\;    \varphi_0\left( 
           k^j \tilde f(\mathbf y)    (\delta(\pmb\delta_a ( k ))) \cdot \delta(k)     
        \right)   
   +     \varphi_0\left( 
       k^j \tilde f(\mathbf y) \delta(k) \cdot (\delta(\pmb\delta_a ( k )))    
        \right)   \\
  = &\;\;     \varphi_0\left( 
      k^j (1+\pmb\sigma_j)(\tilde f)
      (\mathbf y) \delta(k) \cdot (\delta(\pmb\delta_a ( k )))    
        \right)   
  =      \varphi_0\left( \pmb\delta_a ( k )
      \delta\brac{   k^j  \tilde K_{\tilde f}
      (\mathbf y) \delta(k)  
  }
        \right)   .
    \end{align*}
    Again, Eq. \eqref{eq:tildeK=avrf} brings in $ \tilde K_f$.
    The first line is computed in Lemma \ref{lem:Gradexp-f-r1-r2} below. 
    As a result,
    \begin{align*}
        \grad_k F = k^{j-1} \mathrm{Grad}_{\exp} (\tilde f, \delta(k), \delta(k))
    +  \delta\brac{   k^j  \tilde K_{\tilde f}
    (\mathbf y) \delta(k)  },
    \end{align*}
    where
    \begin{align*}
        \mathrm{Grad}_{\exp} (\tilde f, \delta(k), \delta(k))
        & =( \pmb\sigma_{j-1}^2 \cdot \blacksquare^+)(1+\pmb\sigma_j)
        ( \tilde f ) (\mathbf y_1, \mathbf y_2) 
        (\delta(k) \delta(k))\\
        & = - ( \pmb\sigma_{j-1}^2 \cdot \blacksquare^+) 
        ( \tilde K_{\tilde f} ) (\mathbf y_1, \mathbf y_2) 
        (\delta(k) \delta(k)).
    \end{align*}
    The second term is given by Eq. \eqref{eq:del-k^jf(y)} in Corollary 
    \ref{cor:d(f(x))-simplified}:    
    \begin{align*}
      \delta\brac{   k^j  \tilde K_{\tilde f}
    (\mathbf y) ( \delta(k) )  }& = k^{j-1}
    (\pmb\sigma_{j-1} \cdot \blacksquare^+ \cdot \pmb\sigma_j
    + \blacksquare^+ )
    (\tilde K_{\tilde f}) 
    (\mathbf y_1, \mathbf y_2) 
    ( \delta(k) \delta(k) )
  +    k^j  \tilde K_{\tilde f}
    (\mathbf y) ( \delta^2(k) )   \\
    &=   k^{j-1}
 ( (1 + \pmb\sigma_{j-1} ) \cdot \blacksquare^+)
    (\tilde K_{\tilde f}) 
    (\mathbf y_1, \mathbf y_2) 
    ( \delta(k) \delta(k) )
  +    k^j  \tilde K_{\tilde f}
  (\mathbf y) ( \delta^2(k) )  . 
    \end{align*}
    We complete the proof of Eq. \eqref{eq:tildeH=CMtileK} by adding up the
    corresponding terms.
\end{proof}

\begin{lem}
    \label{lem:Gradexp-f-r1-r2}
    Keep notations. For any $\rho_1, \rho_2 \in \mathcal A$,
 and    $a =  a^* \in  \mathcal A$, we have
    \begin{align}
        \varphi_0\left( 
            \pmb\delta_a\left(  
                k^j \tilde f(\mathbf y)   
            \right) (\rho_1) \cdot \rho_2
        \right)     
  =
  \varphi_0 \left( 
      \pmb\delta_a(k) k^{j-1} \mathrm{Grad}_{\exp}(\tilde f, \rho_1 , \rho_2)
  \right),
        \label{eq:phi0-dela-vol-fy}
    \end{align}
 with   
 \begin{align*}
       \mathrm{Grad}_{\exp}(\tilde f, \rho_1 , \rho_2) = 
       (\pmb\sigma_{j-1}^2 \cdot \blacksquare^+
    \cdot (1+\pmb\sigma_j)
) (\tilde f)
(\mathbf y_1, \mathbf y_2) (\rho_1 \rho_2) .
 \end{align*}
\end{lem}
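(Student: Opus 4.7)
The strategy mirrors the proof of Lemma \ref{lem:Grad-explicit}, with the triangle operators and $\pmb\tau_j$ replaced by their multiplicative counterparts $\blacksquare^\pm$ and $\pmb\sigma_j$. The plan is to first open up $\pmb\delta_a(k^j \tilde f(\mathbf y))$ via the second differentiation formula of Corollary \ref{cor:d(f(x))-simplified}, then cyclically transport the factor $\pmb\delta_a(k)$ past the other tensor slots using the integration-by-parts rule of Lemma \ref{lem:taoexp-defn}, and finally collect the contributions as an averaging by $(1+\pmb\sigma_j)$.

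\textbf{Step 1.} Applying \eqref{eq:del-k^jf(y)-rho} with $\delta=\pmb\delta_a$ and $\rho=\rho_1$ gives
\begin{align*}
\pmb\delta_a\bigl(k^j \tilde f(\mathbf y)\bigr)(\rho_1)
&= k^{j-1}\blacksquare^{+}(\tilde f)(\mathbf y_1,\mathbf y_2)\bigl(\rho_1\cdot \pmb\delta_a(k)\bigr) \\
&\quad + k^{j-1}\bigl(\pmb\sigma_{j-1}\cdot\blacksquare^{+}\cdot\pmb\sigma_j\bigr)(\tilde f)(\mathbf y_1,\mathbf y_2)\bigl(\pmb\delta_a(k)\cdot \rho_1\bigr).
\end{align*}
Multiplying on the right by $\rho_2$ and pairing with $\varphi_0$ produces two summands; pulling $k^{j-1}$ into the weight converts $\varphi_0$ into $\varphi_{j-1}$.

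\textbf{Step 2.} In each summand I will apply Lemma \ref{lem:taoexp-defn} repeatedly (with the weight parameter $j-1$) to cycle the slot occupied by $\pmb\delta_a(k)$ to the outermost position. The first summand has internal product $(\rho_1,\pmb\delta_a(k))$ with $\rho_2$ external, so two applications of $\pmb\sigma_{j-1}$ bring it to $(\rho_1,\rho_2)$ internal with $\pmb\delta_a(k)$ external, attaching a factor $\pmb\sigma_{j-1}^{2}\cdot\blacksquare^{+}$ to $\tilde f$. The second summand has internal product $(\pmb\delta_a(k),\rho_1)$ with $\rho_2$ external, so a single application of $\pmb\sigma_{j-1}$ lands in the same slot configuration, now attaching $\pmb\sigma_{j-1}^{2}\cdot\blacksquare^{+}\cdot\pmb\sigma_j$ to $\tilde f$.

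\textbf{Step 3.} Adding the two contributions factors out the operator $\pmb\sigma_{j-1}^{2}\cdot\blacksquare^{+}\cdot(1+\pmb\sigma_j)$ acting on $\tilde f$, and leaves the product $(\rho_1,\rho_2)\cdot\pmb\delta_a(k)$ inside $\varphi_{j-1}$. Finally, the trace property of $\varphi_0$ rewrites
\[
\varphi_{j-1}\!\bigl(X\cdot \pmb\delta_a(k)\bigr)=\varphi_0\!\bigl(\pmb\delta_a(k)\cdot k^{j-1} X\bigr),
\]
which matches the right hand side of \eqref{eq:phi0-dela-vol-fy} and identifies $\mathrm{Grad}_{\exp}(\tilde f,\rho_1,\rho_2)$ as the claimed expression.

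The only real bookkeeping challenge is tracking the weight index on $\pmb\sigma$: the outer weight is $\varphi_{j-1}$ (not $\varphi_j$) after $k^{j-1}$ is absorbed, so the integration-by-parts transformations must be done with $\pmb\sigma_{j-1}$ even though the original spectral function $\tilde f$ inherits an internal $\pmb\sigma_j$ from \eqref{eq:del-k^jf(y)-rho}. Once this index hygiene is respected, the two index powers $\pmb\sigma_{j-1}^{2}$ align perfectly across both terms and the $(1+\pmb\sigma_j)$ averaging on $\tilde f$ appears automatically, exactly paralleling the additive case.
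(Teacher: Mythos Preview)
Your proposal is correct and follows essentially the same approach as the paper's proof: expand $\pmb\delta_a(k^j\tilde f(\mathbf y))(\rho_1)$ via Corollary \ref{cor:d(f(x))-simplified}, then use the cyclic operator $\pmb\sigma_{j-1}$ from Lemma \ref{lem:taoexp-defn} (with weight $\varphi_{j-1}$) to transport $\pmb\delta_a(k)$ to the outermost slot, and finally collect the two contributions into the single operator $\pmb\sigma_{j-1}^2\cdot\blacksquare^+\cdot(1+\pmb\sigma_j)$. Your emphasis on the bookkeeping of the weight index (that the cyclic shifts are governed by $\pmb\sigma_{j-1}$ while the inner averaging is by $\pmb\sigma_j$) is exactly the point the paper's computation relies on.
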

\begin{proof}
We first apply Corollary   \ref{cor:d(f(x))-simplified} to expand 
$\pmb\delta_a\left(  k^j \tilde f(\mathbf y)   \right) $ and
then use $\pmb\sigma_j$ to move 
$\pmb\delta_a(k)$ to the vary right in the local expression
(cf. Lemma \ref{lem:taoexp-defn}):
        \begin{align*}
        \varphi_0\left( 
            \pmb\delta_a\left(  
                k^j \tilde f(\mathbf y)   
            \right) (\rho_1) \cdot \rho_2
        \right)     
     &=
     \varphi_0\left( 
         k^{j-1}  
         (\pmb\sigma_{j-1} \cdot \blacksquare^+ \cdot \pmb\sigma_j) (\tilde f)    
         (\mathbf y_1 ,\mathbf y_2) ( \pmb\delta_a(k) \otimes \rho_1) \cdot \rho_2
     \right) \\
   &+
     \varphi_0\left( 
         k^{j-1} 
         (\blacksquare^+) (\tilde f)    
         (\mathbf y_1 ,\mathbf y_2) 
         ( \rho_1 \otimes  \pmb\delta_a(k) ) \cdot \rho_2
     \right) \\
 &=    \varphi_0\left( 
         k^{j-1} 
         (\pmb\sigma_{j-1}^2 \cdot \blacksquare^+ \cdot (1+\pmb\sigma_j ) ) 
         (\tilde f)    (\mathbf y_1 ,\mathbf y_2) ( \rho_2  \otimes  \rho_2)
 \pmb\delta_a(k) 
     \right) 
     \\
 &=    \varphi_0\left( 
     \pmb\delta_a(k) \left\{ 
         k^{j-1} 
         (\pmb\sigma_{j-1}^2 \cdot \blacksquare^+ \cdot (1+\pmb\sigma_j ) ) 
         (\tilde f)    (\mathbf y_1 ,\mathbf y_2) ( \rho_2  \otimes  \rho_2)
     \right\}
 \right) .
        \end{align*}
        By definition, $\mathrm{Grad}_{\exp}(\tilde f, \rho_1 , \rho_2) $ 
is equal to the part between curly brackets.
\end{proof}

\begin{cor}
    The functional $F(k)$ in  \cref{eq:F(k)-defn} is the zero functional if and
    only if the spectral function satisfies $(1+\pmb\sigma_j)(f) = 0$.
\end{cor}
\begin{proof}
    As shown in \cref{eq:tildeK=avrf} and \cref{eq:tildeH=CMtileK}, 
    $ \tilde K_{ \tilde f } =0$ implies $ \tilde H_{ \tilde f } =0$.
    In particular, $\grad_k F = 0 $ if and only if $ \tilde K_{ \tilde f } =0$,
    that is $(1+\pmb\sigma_j)(f) = 0$.
\end{proof}

In fact, the relation $(1+\pmb\sigma_j)(f) = 0$ implies directly that 
$F(k) = F(e^h) =0$ for
all self-adjoint $h$ without using 
\cref{eq:tildeK=avrf} and \cref{eq:tildeH=CMtileK}: 
 with the help of $\pmb\sigma_j$ (cf. Lemma \ref{lem:tao-defn}), we get
\begin{align*}
    F(k) = \varphi_j( f(\mathbf y)(\delta k) \cdot (\delta k) )
    = \varphi_j\left( 
        \pmb\sigma_j(f)(\mathbf y)(\delta k) \cdot (\delta k)
    \right)  .
\end{align*}
The argument was first used in the first proof of Gauss-Bonnet on
$\mathbb{T}_\theta^2$, cf. \cite[\S3.3]{Connes:2011tk}.

\section{Modular curvature as a functional gradient}
\label{sec:var-FEH}

\subsection{Notations for $\T^m_\theta$} 
\label{subsec:notation-for-Tm}
Let us quickly review the basic
notations of the differential calculus on noncommutative $m$-tori $\T^m_\theta$
from deformation point of view (along a $m$-torus action).
The smooth structure of the noncommutative manifold $\T^m_\theta$ is
represented by the deformed algebra
$C^\infty(\T^m_\theta) = ( C^\infty(\T^m), \times_\theta)$
which takes the underlying topological vector space from its 
commutative counterpart 
and the multiplication $ \times_\theta$ is deformed along a torus action
and parametrized by  $m$ by $m$ skew-symmetric matrices $\theta$:
for any $f,g\in C^\infty(\T^m_\theta)$,
\begin{align}
    f \times_\theta g \defeq \sum_{r,l \in \Z^m}
    \exp(2\pi i \abrac{\theta r, l}) f_r g_l.
    \label{eq:defn-f-timestheta-g}
\end{align}
where $f = \sum_{r\in \Z^m} f_r$ and $g = \sum_{l \in \Z^m} g_l$ are 
the isotypical decomposition with respect to the $\mathbb{T}^m$-action.
Any translation invariant (i.e. $\mathbb{T}^m$-invaraint)
measure $d\mu$ on $\mathbb{T}^m$ can be deform to a tracial functional:  
\begin{align*}
    \varphi_0: C^\infty(\T^m_\theta) \rightarrow \mathbb C: 
    f \mapsto \int_{\T^m} f d\mu,
\end{align*}
where we further assume that $ \varphi_0$ is normalized $\varphi_0(1) =1$.
The Hilbert space of $L^2$-functions can be
recovered by the standard GNS construction with respect to $\varphi_0$:
\begin{align}
    \label{eq:calH-Hilsp} 
  \mathcal H \defeq L^2( C^\infty(\mathbb{T}^m_\theta) , \varphi_0)   
\end{align}
whose the inner product is given by
\begin{align}
    \label{eq:calH-dotprd} 
    \left<a,b \right> \defeq \left<a,b \right>_{\varphi_0} =
    \varphi_0( b^* a), \, \, \, \forall a,b \in C^\infty(\T^m_\theta) .
\end{align}

The  coordinate system on $\R^m$: $(x_1, \dots, x_m)$ induces one on the
quotient $\mathbb{T}^m = \R^m / 2\pi i \Z^m$. We use the standard Euclidean
metric and denote by $\nabla$ the metric connection. 
The associated covariant differentials
\begin{align}
    \label{eq:nabla-ma-nctori}
    \nabla_l \defeq \nabla_{\partial_{x_l}}, \;\;
    l = 1,\dots,m,
\end{align}
generate the algebra of differential operators. In particular, the flat
Laplacian reads:
\begin{align}
    \Delta = -(\nabla^2_1 + \cdots + \nabla^2_m).
    \label{eq:flat-del-m-nctori}
\end{align}
We will also feel free to use the classical notations  
\cite[\S1.3]{MR3194491} (for $\mathbb{T}^2_\theta$) and
\cite[\S6]{Rieffel:1990wo}
(for general $\mathbb{T}^m_\theta$). 
The volume functional is denoted by $\varphi_0$ in all the references 
and 
the basic derivations $\delta_l$ in \cite[\S6]{Rieffel:1990wo}
are slightly different from the covariant differential above,
that is, $\delta_l = -i \nabla_l$, $l=1, \ldots ,m$.

 \subsection{Modular Gaussian Curvature on $\mathbb{T}^2_\theta$}
\label{subsec:mod-gauss-T2}

Modeled on spectral geometry of Riemannian manifolds
(cf. \cite[Ch. 4]{gilkey1995invariance}), local invariants to be investigated 
are derived from spectral asymptotic of some Laplacian type operators 
attached to  the underlying Riemannian metric. 
On noncommutative two tori $\mathbb{T}^2_\theta$, $ \theta \in  \R \setminus \Q$,
we consider conformal change of the flat metric,  starting with
rescaling the flat volume functional, i.e., the canonical trace $\varphi_0$, by a 
 Weyl factor $k = e^h$ where $h = h^* \in C^\infty(\T^2_\theta)$: 
\begin{align}
    \varphi(a) = \varphi_0(a e^{-h})  = \varphi_0(a k^{-1}).
    \label{eq:phi-weight-defn}
\end{align}
The associated Dolbeault Laplacian is changed 
from the flat one $\Delta = \bar \partial^* \bar \partial$ in 
\cref{eq:flat-del-m-nctori} to 
$\bar \partial^*_\varphi\bar \partial$, where the new adjoint is
taken with respect to the inner product:
$\abrac{a,b}_\varphi \defeq \varphi(b^* a)$, $\forall a,b
\in C^\infty(\mathbb{T}^2_\theta)$.
To keep the same Hilbert space $\mathcal H$ (or the same inner product 
in \cref{eq:calH-dotprd}) of $L^2$-functions
defined in \cref{eq:calH-Hilsp}, we shall work with
\begin{align*}
    \Delta_\varphi = k^{1 / 2} \Delta k^{1 / 2} 
     : \mathcal H \to \mathcal H
\end{align*}
which is anti-unitary equivalent to $\bar \partial^*_\varphi\bar \partial$,
see \cite[\S1.5]{MR3194491}.

The analogue of the Gaussian curvature is based on the following 
spectral realization on Riemann surfaces, known as the 
conformal anormaly, due to  Polyakov 
\cite{Polyakov1981Quantum-fermionic,Polyakov1981Quantum-bosonic},
of  the functional assigning each metric $g$ to 
the Ray-Singer determinant of its scalar Laplacian $\Delta_g$: 
 $g \to  -\logdet' \Delta_g \defeq \zeta'_{\Delta_g }(0)$.
In \cite{osgood1988extremals}), a slightly modified version was introduced 
whose gradient flow recovers Hamilton's Ricci flow.

\begin{defn}[Modular Gaussian Curvature]
In \cite{MR3194491}, the precise analogue of the OPS-functional 
(Osgood-Phillips-Sarnak)
on $\mathbb{T}^2_\theta$ is defined to be 
\begin{align}
    \label{eq:FOPS-defn}
    F_{\mathrm{OPS} } (h)  = \zeta'_{\Delta_\varphi} (0) + 
    \log \varphi_0(e^{-h})
\end{align}
as a function on self-adjoint elements in $C^\infty(\mathbb{T}^2_\theta)$.
The functional gradient in the sense of \cref{eq:grad_hF-abstract}
yields the notion of modular Gaussian curvature:
$\mathsf K_{\varphi} \defeq \grad_h F_{\mathrm{OPS}}$. 
\end{defn}

The analytic continuation of the spectral zeta function  is often proved 
by means of the heat trace asymptotic. 
On general noncommutative tori $\mathbb{T}^m_\theta$, 
we can establish the small time asymptotic by, for instance, Connes
pseudo-differential calculus \footnote{
We also mention two other approaches: 
\cite{Iochum:2017ver} using Duhamel type perturbation series of 
the exponential function and \cite{sukochev2019local} using 
harmonic analysis on $\mathbb{T}^2_\theta$.
}: 
\begin{align}
    \label{eq:heat-asym-delphi}
    \Tr(f e^{-t\Delta_\varphi} ) \backsim \sum_{j=0}^\infty
    V_j(f,\Delta_\varphi) t^{(j-m)/2}, \;\;
    \forall f \in C^\infty(\T^m_\theta).
\end{align}
Each $V_j(\cdot,\Delta_\varphi)$ is a linear functional in $f$ and is
absolutely continuous with respect to the volume functional $\varphi_0$
with a smooth functional density in $ C^\infty(\mathbb{T}^m_\theta)$.
For instance,  the density
$R_{\Delta_\varphi} \in C^\infty(\T^m_\theta)$ of the $V_2$-term 
is  determined by the property:
\begin{align}
    \label{eq:defn-Rdelpgi}
    V_2(f, \Delta_\varphi) = \varphi_0(f R_{\Delta_\varphi}),
    \,\,\, \forall f \in C^\infty(\T^m_\theta).
\end{align}

Back to dimension $m=2$,
let $P_\varphi$ be the orthogonal projection onto $\ker \Delta_\varphi$.
For any
$f\in C^\infty(\T^2_\theta)$, consider the zeta function
$\zeta_{\Delta_\varphi}(z) \defeq \zeta_{\Delta_\varphi}(1;z)$, where
\begin{align}
    \zeta_{\Delta_\varphi}(f;z) = \Tr(f \Delta_\varphi^{-z} (1-P_\varphi)), \,\,\,
    \Re z > 2,
    \label{eq:subsec-logdet-defn-zeta-with-a}
\end{align}
in which we define the inverse restricted on the kernel 
$\Delta_k^{-1} |_{\ker \Delta_k}  =1$ .
It is related to the heat trace by Mellin transform:
\begin{align}
    \label{eq:zetaviaheat-varprob}
    \zeta_{\Delta_\varphi}(f;z) = \frac{1}{\Gamma(z)}
    \int_{0}^{\infty} t^{z-1} \brac{
        \Tr(fe^{-t\Delta_\varphi}(1-P_\varphi))  
    } dt
\end{align}
and asymptotic in \eqref{eq:heat-asym-delphi}  implies that 
the right hand side above is a meromorphic function in $z$.
Moreover, $z=0$ is a regular point (cf. \cite{MR3194491} Eq.(3.11)) and:
\begin{align}
    \label{eq:zetazero-varprob}
    \zeta_{\Delta_\varphi}(f;0) = V_2(f,\Delta_\varphi )- \Tr(fP_{\varphi})
    = V_2(f,\Delta_\varphi ) -\frac{\varphi_0(f k^{-1} )}{\varphi_0(k^{-1})} 
    .
\end{align}
In particular,
\begin{align*}
    \zeta_{\Delta_\varphi}(1;0) = V_2(1,\Delta_\varphi) -1.
\end{align*}

One has the following variational formulas for $\zeta_{\Delta_{\varphi}}(0)$
and $\zeta'_{\Delta_{\varphi}}(0)$.
\begin{lem}
     \label{prop:logdet-local-varprob}
     Consider the two types of variation: dilation $h_s = s h$
     by $s \in  \R$ so that $k_s \defeq k^s$ 
and $\Delta_{\varphi_s} = k^{s/2} \Delta k^{s/2}$, and a small perturbation 
along a self-adjoint $a = a^* \in  C^\infty(\mathbb{T}^2_\theta)$:
$h_\varepsilon = h + \varepsilon a$ and 
$\Delta_{\varphi_\varepsilon} = e^{(h+ \varepsilon a)/2} \Delta
e^{(h+ \varepsilon a)/2}$. 
We have
    \begin{align}
        \label{eq:ds-zetafun}
        \frac{d}{ds} \zeta_{\Delta_{\varphi_s}}(z)
        = -z \zeta_{\Delta_{\varphi_s}}(\log k;z),
        \, \, \, \, 
        \frac{d}{d \varepsilon} \Big |_{ \varepsilon =0} 
        \zeta_{\Delta_{\varphi_\varepsilon}}(z)
        = -\frac{z}{2} \zeta_{\Delta_{\varphi}}
        \brac{ e^{-\mathbf x /2} e^z[\mathbf x,0]_z (a) ;z} .
    \end{align}
    Applying $d/dz|_{z=0}$ on both sides above yields:  
 \begin{align}
        \label{eq:ds-zetafun-dz-at-zero}
    -\frac{d}{ds} \zeta'_{\Delta_{\varphi_s}}(0) 
    =  \zeta_{\Delta_{\varphi_s}}(\log k ;0),
        \, \, \, \, 
       - \frac{d}{d \varepsilon} \Big |_{ \varepsilon =0} 
        \zeta'_{\Delta_{\varphi_\varepsilon}}(z)
        = \zeta_{\Delta_{\varphi}}
        \brac{ e^{-\mathbf x /2} e^z[\mathbf x,0]_z (a) ;z} .
\end{align}
\end{lem}
\begin{rem}
    The first type of the variation is a special case of the second type with
    $a = h$.
\end{rem}
\begin{proof}
    See \cite[\S4.1, 4.2]{MR3194491}, in which 
    $e^{-\mathbf x /2} e^z[\mathbf x,0]_z (a)$ is written as
    $\frac{1}{2} \int_{-1}^1 e^{\frac{uh}{2}} a e^{\frac{-uh}{2}} du$.
\end{proof}

After substituting   \cref{eq:zetazero-varprob} to the right hand sides of 
\cref{eq:ds-zetafun-dz-at-zero}, we have arrived at:
\begin{prop}
The OPS-functional in \cref{eq:FOPS-defn} is 
determined by the second heat coefficient in the following way$:$
    \begin{align}
        \label{eq:FOPS-in-terms-of-RDel}
        F_{\mathrm{OPS}}( k ) = \int_0^1 V_2(\log k, \Delta_{\varphi_s} )ds +
        \zeta'_{\Delta}(0),   
    \end{align}
    where $\Delta_{\varphi_s} = k^{s/2} \Delta k^{s/2}$ with $s\in\R$.
\end{prop}
\begin{proof}
    Again, we refer the details to \cite[\S4.1, 4.2]{MR3194491}. 
\end{proof}

\begin{prop}
    \label{prop:gradFOPS=varphi-RDel}
    For any self-adjoint $a \in  C^\infty(\mathbb{T}^2_\theta)$, 
    the variation of the OPS-functional $($see \cref{eq:FOPS-defn}$)$
    is given by$:$
    \begin{align}
     \label{eq:gradFOPS=varphi-RDel}
    \varphi_0( a \grad_h F_{\mathrm{OPS} }) =
    \frac{d}{d \varepsilon} \Big |_{\varepsilon = 0} 
    F_{\mathrm{OPS}} ( e^{h+ \varepsilon a}) =
  - V_2 \brac{ e^{-\mathbf x /2} e^z[\mathbf x,0]_z (a),\Delta_\varphi} .
    \end{align}
   In particular, 
   \begin{align}
    \label{eq:gradFOPS_h=varphi-RDel}
       \grad_h F_{\mathrm{OPS} } = 
    -   e^{\mathbf x /2} e^z[-\mathbf x,0]_z ( R_{\Delta_\varphi} ).
   \end{align}
\end{prop}
We  also work with the operator
\begin{align}
    \label{eq:defn-Del_k}
    \Delta_k \defeq k \Delta = k^{1/2} (k^{1/2} \Delta k^{1/2}) k^{-1/2} 
    =k^{1/2} \Delta_\varphi k^{-1/2} ,
\end{align}
whose complete symbol $\sigma(\Delta_k)(\xi) = k \abs{\xi}^2$ has only one term, the
leading part so that the heat asymptotic is much easier to compute. 
Their heat traces are related in a similar way: $\Tr(f e^{-t\Delta_\varphi} )
= \Tr( (k^{1/2} f k^{-1/2} e^{-t\Delta_k} )$ for all $f\in C^\infty(\T^2_\theta)$.
    Therefore, 
    $V_l(f,\Delta_\varphi) = V_l(\mathbf y^{-1/2}(f) ,\Delta_k)$,
   $l \in \N$. 
 For the second coefficient, we have
 $\varphi_0(f R_{\Delta_\varphi}) =
     \varphi_0( \mathbf y^{-1/2}(f) R_{\Delta_k} )$,
that is
\begin{align}
    R_{\Delta_\varphi} = \mathbf y^{1/2} ( R_{\Delta_k} ).
    \label{eq:Rdelk-vsRdelphi}
\end{align}

\begin{cor}
    \label{prop:gradFOPS=k-RDel}
In terms of $R_{\Delta_k}$ defined above, the two versions of gradient 
$($cf. \cref{eq:grad_hF-abstract} and \cref{eq:grad_hF-to-grad_kF}$)$
can be
rewritten as below:
\begin{align}
    \label{eq:gradFOPS=k-RDel}
    \grad_k F_{\mathrm{OPS}} = -k^{-1} R_{\Delta_k}, \,\,\,
    \grad_h F_{\mathrm{OPS}} = - \exp[0,\mathbf x] ( R_{\Delta_k} ). 
\end{align}
\end{cor}
\begin{rem}
 The OPS-functional gives
    rise to a renormalization of the coefficient $(2-m)/2$ of the EH-functional in 
    \cref{eq:gradFEH-V2-varprob} in later section.  
\end{rem}
\begin{proof}
  We apply  \cref{eq:Rdelk-vsRdelphi} to replace $ R_{\Delta_\varphi}$
  in \cref{eq:gradFOPS_h=varphi-RDel}:
    \begin{align*}
    \grad_h F_{\mathrm{OPS}} =
    -   e^{\mathbf x /2} e^z[-\mathbf x,0]_z ( R_{\Delta_\varphi} )=
    -   \brac{e^{\mathbf x /2} e^z[-\mathbf x,0]_z e^{\mathbf x /2}}
    ( R_{\Delta_k } )=
    - \exp[0,\mathbf x] ( R_{\Delta_k} ), 
    \end{align*}
    where we have used the multiplicative property of the exponential function: 
    \begin{align*}
      e^x e^z[-x,0]_z = e^z[-x + x , x]_z = e^z[0,x]_z   .
    \end{align*}
    The formula for $\grad_k F_{\mathrm{OPS}}$ follows immediately from 
    the relation in \cref{eq:grad_hF-to-grad_kF}.
\end{proof}

\subsection{Modular curvature on higher dimensional $\mathbb{T}^m_\theta$}
\label{subsec:mdocur-on-T^m-theta}

Among known examples of noncommutative manifolds, 
not even the conformal geometry has
been fully explored. The author did a case study on 
a larger class toric noncommutative manifold
in \cite{LIU2017138}, and on noncommutative tori,
some variations related to exploration of scalar 
curvature \footnote{a more complete list of references can be found 
in surveys  \cite{lesch2018modular,fathizadeh2019curvature}.}
can be found in \cite{Bhuyain2012}, 
\cite{MR3402793} and \cite{1811.04004}. 
To go beyond conformal geometry on $\mathbb{T}^m_\theta$,
Ponge and Ha \cite{ha2019laplace} proposed a construction of Laplace-Beltrami
operators for more general Riemannian metrics
(influenced by \cite{2013SIGMA...9..071R}),
but no explicit forms of local invariants are available at the current stage.

In the paper, we  take the setting in  dimension two
$\Delta \mapsto  \Delta_\varphi = e^{h / 2} \Delta e^{h / 2}$, where 
$h = h^*\in C^\infty(\mathbb{T}^m_\theta)$,
as a simplified model of the conformal change of metric $g \mapsto  e^{-h}g$
for general noncommutative tori $\mathbb{T}^m_\theta$. 
What is crucial to our discussion is some variational properties 
like Proposition \ref{prop:varprob-V_j-step1} which
shall hold for so-called conformally covariant operators in the sense of 
\cite[Eq. (1.1)]{branson1986conformal}. The squared spinorial Dirac operator
on toric noncommutative manifold in \cite{LIU2017138}
$ \slashed D_h^2$, where  $ \slashed D_h = e^h \slashed D e^h$, 
is other set of examples in noncommutative geometry.

It well-known that, on Riemannian manifolds $(M,g)$, the EH
(Einstein-Hilbert) action $g \mapsto  \int_M S_g dg$
has a spectral realization as the second heat coefficient of 
the scalar Laplacian $\Delta_g$:
\begin{align*}
    f \mapsto  V_2(f,\Delta_g) = c_m \int_M (f S_g) / 6 dg 
\end{align*}
for some  universal constant $c_m$ (depending on the dimension).
Moreover, when $\dim M \ge 3$, that is without the restriction of Gauss-Bonnet,
the EH-action has non-trivial variation inside a conformal class of metrics, 
and the functional gradient gives the scalar curvature $S_g$ a variational
interpretation.
Consequently, on noncommutative tori $\mathbb{T}^m_\theta$, $m \ge 2$, we 
define the modular curvature to be the functional gradient of the
following  Riemannian functional.
\begin{defn}
    \label{defn:F-defn-EH+OPS}
    On $\mathbb{T}_\theta^m$, we view 
    $\Delta \mapsto  \Delta_\varphi = e^{h/2} \Delta e^{h/2}$ as a conformal
    change of the flat metric represented by $\Delta$ in
    \cref{eq:flat-del-m-nctori}. 
The Riemannian functional $F$ is defined on the conformal class of metrics
parametrized by Weyl factors $k = e^h$, whose tangent space consists of
self-adjoint elements $h  = h^*\in  C^\infty(\T^m_\theta)$: 
\begin{align}
    \label{eq:F-defn-EH+OPS}
    F(k) \defeq F(h) = \begin{cases}
        F_{\mathrm{OPS}} (h) = 
     \zeta'_{\Delta_\varphi}(0) + \log \varphi_0(k^{-1}),
        , &\text{ if $m=2$, }
        \\
        F_{\mathrm{EH}}(h) = V_2(1,\Delta_{\varphi})
        , &\text{ if $m>2$, }
    \end{cases}
\end{align}
where $V_2(\cdot ,\Delta_\varphi)$ is the second heat coefficient in 
\cref{eq:heat-asym-delphi}.
\end{defn}

\begin{defn}[Moduar Curvature]
    \label{defn:gradF-defn-EH+OPS}
   We call the  gradients of $F$ above (in terms of $k$ or $h$), 
    \begin{align*}
    \grad_k
 F \; \text{or} \;
 \grad_h F \in C^\infty(\T^m_\theta)    
    \end{align*}
the \emph{modular curvature} of the metric associated with 
the Laplacian $\Delta_\varphi$.
\end{defn}
Again, the gradients are defined by means of G\'ateaux differential 
 using the inner product associated to the canonical trace $\varphi_0$
 (cf.  \cref{defn:grad_hF-abstract} and \cref{eq:grad_hF-to-grad_kF}).

The variation of $F$ relies on the variation of the heat coefficient
$V_2(\cdot , \Delta_k)$, where $\Delta_k$ and  $\Delta_\varphi$ are related 
in \cref{eq:defn-Del_k} and \cref{eq:Rdelk-vsRdelphi}. 
\begin{prop}
    \label{prop:varprob-V_j-step1}
As before, consider  the variation along a self-adjoint direction
$a \in C^\infty(\mathbb{T}_\theta^m)$: $h \mapsto  h+ \varepsilon a$ and  
set $\pmb\delta_a \defeq d /d\varepsilon |_{ \varepsilon_0}$. 
Then for heat coefficients in in \cref{eq:heat-asym-delphi},
we have for $l=0,1,2,\dots$, 
    \begin{align}
    \label{eq:varprob-V_j-step1}
        \pmb\delta_a V_l(1, \Delta_k) = 
        \frac{l-m}{2}
        V_l(\pmb\delta_a(k)k^{-1} , \Delta_k )  =
        \frac{l-m}{2}
        V_l\brac{
            \exp[0,-\mathbf x]    \brac{
                \pmb\delta_a(h)
        }, \Delta_k }
        .
    \end{align}
\end{prop}
\begin{proof}
    Start with $
        \pmb\delta_a(\Delta_k) = \pmb\delta_a(k)\Delta =
        \pmb\delta_a(k)k^{-1} \Delta_k $.
    Apply the Duhamel's formula,  for $t>0$,
    \begin{align}
        \label{eq:diff-heatop-varprob}
        \begin{split}
      \pmb\delta_a \Tr \brac{ e^{-t\Delta_k}} &= -t 
    \Tr \brac{
        \pmb\delta_a(\Delta_k)  e^{-t\Delta_k}
    } = \Tr \brac{
        \pmb\delta_a(k) k^{-1} \Delta_k e^{-t\Delta_k}
    }\\
    &= t\frac{d}{dt} \Tr \brac{
    \pmb\delta_a(k) k^{-1} e^{-t\Delta_k }
    }
    .        
        \end{split}
    \end{align}
Since both $\pmb\delta_a$ and $d/dt$  pass through the asymptotic
expansion, we can continue: 
\begin{align*}
    \sum_{l=0}^\infty \pmb\delta_a V_l(1, \Delta_k) t^{(l-m)/2} &= 
     \sum_{l=0}^\infty    V_l(\pmb\delta_a(k) k^{-1}, \Delta_k) 
     t \frac{d}{dt} t^{(l-m)/2} \\
     &= \sum_{l=0}^\infty  \frac{l-m}{2}  V_l(\pmb\delta_a(k) k^{-1}, \Delta_k) 
     t^{(l-m)/2}.
\end{align*}
The first = sign in Eq. \eqref{eq:varprob-V_j-step1} follows from comparing the
coefficients of $t^{(l-m)/2}$.
To see the second equal sign, we  need Lemma 
\ref{lem:1st-derivative-exp-power}:
\begin{align*}
    \pmb\delta_a(k) k^{-1} = 
    e^h \frac{e^{\mathbf x} -1}{ \mathbf x} \pmb\delta_a(h) k^{-1}  
    = \mathbf y^{-1} \brac{\frac{e^{\mathbf x} -1}{ \mathbf x}   }
    ( \pmb\delta_a(h) )
    = \frac{1- e^{-\mathbf x} }{ \mathbf x} ( \pmb\delta_a(h) ).
\end{align*}
\end{proof}

\begin{cor}
        \label{cor:gradFEH-V2-varprob}
    When the dimension $m \ge 2$, we have 
    \begin{align}
        \label{eq:gradFEH-V2-varprob}
        \grad_k F_{\mathrm{EH}} = \frac{2-m}{2} k^{-1} R_{\Delta_k}, 
\;\;
        \grad_h F_{\mathrm{EH}} = \frac{2-m}{2} \exp[0,\mathbf x]( R_{\Delta_k} )
        .
    \end{align}
\end{cor}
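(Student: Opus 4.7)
The plan is to specialize Proposition \ref{prop:varprob-V_j-step1} to $l=2$ and then unfold both sides against the definitions of the functional gradient from Definition \ref{defn:gradF-defn-EH+OPS}. The only real bookkeeping is to keep track of whether one works with $\Delta_\varphi$ or $\Delta_k$, and to handle the two parametrizations $F_{\mathrm{EH}}(k)$ and $F_{\mathrm{EH}}(h)$ in parallel.

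First, I would identify $F_{\mathrm{EH}}$ with $V_2(1,\Delta_k)$. By Definition \ref{defn:F-defn-EH+OPS}, $F_{\mathrm{EH}}(h) = V_2(1,\Delta_\varphi)$, and by the intertwining relation noted right before \eqref{eq:Rdelk-vsRdelphi} we have $V_l(f,\Delta_\varphi) = V_l(\mathbf y^{-1/2}(f),\Delta_k)$; at $f = 1$ this simplifies to $V_2(1,\Delta_\varphi) = V_2(1,\Delta_k)$, so the variational problem is phrased entirely in terms of $\Delta_k$.

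Next I would apply Proposition \ref{prop:varprob-V_j-step1} with $l = 2$, yielding simultaneously the two identities
\begin{align*}
 \pmb\delta_a V_2(1,\Delta_k)
 &= \tfrac{2-m}{2}\, V_2\!\left(\pmb\delta_a(k)\,k^{-1},\Delta_k\right) \\
 &= \tfrac{2-m}{2}\, V_2\!\left(\exp[0,-\mathbf x]\bigl(\pmb\delta_a(h)\bigr),\Delta_k\right).
\end{align*}
Then I unfold each line using the density formula $V_2(b,\Delta_k) = \varphi_0(b\,R_{\Delta_k})$ from \eqref{eq:defn-Rdelpgi}. The first line directly gives $\pmb\delta_a F_{\mathrm{EH}}(k) = \tfrac{2-m}{2}\varphi_0\bigl(\pmb\delta_a(k)\cdot k^{-1}R_{\Delta_k}\bigr)$, which matches the defining identity $\pmb\delta_a F(k) = \varphi_0(\pmb\delta_a(k)\grad_k F)$ and reads off $\grad_k F_{\mathrm{EH}}$.

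For the $h$-version the only extra step is one application of integration by parts with respect to the modular derivation \eqref{eq:int-by-p-babymod}, extended to the smooth functional calculus of $\mathbf x$: from $\varphi_0(g(\mathbf x)(a)\cdot b) = \varphi_0(a\cdot g(-\mathbf x)(b))$ with $g(x) = \exp[0,-x]$, the spectral factor flips sign and becomes $\exp[0,\mathbf x]$ when moved across onto $R_{\Delta_k}$. Matching the resulting expression against $\pmb\delta_a F(h) = \varphi_0(\pmb\delta_a(h)\grad_h F)$ yields the second identity. There is no real obstacle here; the only mild subtlety is confirming that the spectral-function version of integration by parts applies to $\exp[0,\cdot]$, which is legitimate because $\exp[0,x] = (e^x - 1)/x$ is the Fourier transform of a tempered distribution on the relevant spectral window and therefore lies in the functional-calculus class of Definition \ref{defn:C(R)-C(R+)}.
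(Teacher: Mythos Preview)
Your proposal is correct and follows essentially the same route as the paper: specialize Proposition~\ref{prop:varprob-V_j-step1} to $l=2$, unfold $V_2(\cdot,\Delta_k)$ via its functional density $R_{\Delta_k}$, and for the $h$-version move the spectral factor $\exp[0,-\mathbf x]$ across using the trace property. The only minor remark is that \eqref{eq:defn-Rdelpgi} as written is stated for $\Delta_\varphi$; the analogous density identity $V_2(b,\Delta_k)=\varphi_0(b\,R_{\Delta_k})$ for $\Delta_k$ is what you actually need, and the paper uses it in the discussion surrounding \eqref{eq:Rdelk-vsRdelphi}.
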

\begin{rem}
    When $m =2$, it follows immediately that $\grad_k F_{\mathrm{EH}} = \grad_h
    F_{\mathrm{EH}} =0$. The result is known as Connes-Moscovici's variational
    proof of Gauss-Bonnet theorem on noncommutative two tori $\T^2_\theta$.
\end{rem}
\begin{proof}
For the EH-action $F_{\mathrm{EH} } $:
\begin{align*}
    \pmb\delta_a F_{\mathrm{EH}}(k) = \pmb\delta_a V_2(1,\Delta_k) = 
    \frac{2-m}{2} V_2( \pmb\delta_a (k) k^{-1}, \Delta_k) = 
 \frac{2-m}{2}   \varphi_0\left( 
\pmb\delta_a (k) 
( k^{-1} R_{\Delta_k} )
    \right).
\end{align*}
Similarly,
\begin{align*}
    \pmb\delta_a F_{\mathrm{EH}}(h) &= \pmb\delta_a V_2(1,\Delta_k) = 
    \frac{2-m}{2} V_2( \exp[0,-\mathbf x] \pmb\delta_a (h) ,\Delta_k) \\
    &= 
 \frac{2-m}{2}   \varphi_0\left( 
     \exp[0,-\mathbf x] ( \pmb\delta_a (h)  )
 R_{\Delta_k}
    \right) 
=    
 \frac{2-m}{2}   \varphi_0\left( 
    \pmb\delta_a (h)    
 \exp[0,\mathbf x] ( R_{\Delta_k} )
    \right) .
\end{align*}
\end{proof}

 \section{Local Expressions of the geometric functional }
 \label{sec:closed-formulas-EH-OPS}

\subsection{Functional density of $V_2(\cdot,\Delta_k)$}
\label{subsec:fun-density-V2}
The full expression of $R_{ \Delta_{ \varphi } } $ on $\mathbb{T}^2_\theta$
was first achieved in \cite{MR3194491}
and further confirmed via independent calculation in \cite{MR3148618}.
In this section, we shall start with the following version obtained 
in part I of the sequel \cite[\S4]{Liu:2018aa} which works for 
arbitrary dimension $\mathbb{T}^m_\theta$. We will not repeat the 
computation here.
\begin{prop}
    \label{prop:varprob-scacur-for-Delta-varphi}
    For the perturbed Laplacian $\Delta_k = k \Delta$ acting on 
    $C^\infty(\mathbb{T}^m_\theta)$ with $m\ge 2$, the corresponding 
    functional density of the second heat coefficient $R_{\Delta_k}$ is given by,
  upto a  constant factor $\op{Vol}(\mathbb S^{m-2})/2$, 
\begin{align}
    \label{eq:varprob-scacur-for-Delta-varphi}
    R_{\Delta_k}  =
    \sum_{\alpha = 1}^m
    k^{-m/2} K_{\Delta_k} ( \mathbf y)(\nabla_\alpha^2 k;m)
    + k^{-m/2-1}  H_{\Delta_k}( \mathbf y_1, \mathbf y_2;m)
    (\nabla_\alpha k \otimes  \nabla_\alpha k) 
    .
\end{align}
The one-variable function  is given by
 \begin{align}
     K_{\Delta_k} (y;m) = \frac4m H_{3,1}(z;m) - H_{2,1}(z;m), \,\,\, z=1-y .
     \label{eq:b2cal-KDeltak}
 \end{align}
  Similarly, we have:
 \begin{align}
\label{eq:b2cal-HDeltak}
\begin{split}
      H_{\Delta_k} (y_1 , y_2 ;m)  &=
      (\frac{4 }{m}+2) H_{2,1,1}\left(z_1,z_2;m\right)  -\frac{4 (1-z_1)
     H_{2,2,1}\left(z_1,z_2 ;m\right)}{m}
     \\ &  
     -\frac{8
     H_{3,1,1}\left(z_1,z_2;m\right)}{m},
\end{split}
    \end{align}
    where $z_1 = 1- y_1$ and $z_2 = 1-y_1 y_2$. 
The precise definition of the hypergeometric families 
$\set{H_{ a,b} , H_{ a,b,c } }$ will be need in the next section,
see \cref{eq:hgeofun-onevar-K_acm,eq:Habc}. 
   \end{prop}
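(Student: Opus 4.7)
The plan is to compute $R_{\Delta_k}$ from Connes's pseudo-differential calculus applied to $\Delta_k = k\Delta$, and then match the resulting integral expressions against the hypergeometric family introduced in Part I \cite{Liu:2018aa}. Because $\sigma(\Delta_k) = k|\xi|^2$ consists of a single homogeneous term, the parametrix construction simplifies considerably: the leading resolvent symbol is $b_0(\xi,\lambda) = (k|\xi|^2 - \lambda)^{-1}$, and the lower-order symbols $b_1, b_2$ are determined by the standard recursion
\[
b_n = -b_0 \sum_{\substack{j+|\alpha|+l = n \\ l < n}} \frac{1}{\alpha!} (\partial_\xi^\alpha b_l)(\delta^\alpha \sigma_j(\Delta_k)),
\]
which here has only one source of lower-order terms coming from the derivatives $\delta^\alpha b_0$. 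The second heat coefficient is then extracted via
\[
R_{\Delta_k} = \frac{1}{(2\pi)^m} \int_{\mathbb{R}^m} \int_{\mathcal{C}} e^{-\lambda} b_2(\xi,\lambda)\, d\lambda\, d\xi,
\]
where $\mathcal{C}$ is a contour around the positive real axis.

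Next, the plan is to apply the rearrangement lemma \cite{leschdivideddifference} to convert each resulting product $k^{a_0}(\delta^{l_1}k) k^{a_1}\cdots k^{a_n}$ into the canonical form $k^{-m/2-n} f(\mathbf{y}_1,\ldots,\mathbf{y}_n)(\delta^{l_1}k\cdots \delta^{l_n}k)$. Specifically, after commuting all $k$'s past the $\delta^\alpha$'s using the substitution identities \eqref{eq:substi-relations-k-mod}, the $\xi$-integrations take the form
\[
\int_{\mathbb{R}^m} \xi_\alpha^2 \prod_{s=0}^{n} (k^{(s)}|\xi|^2 - \lambda)^{-a_s} d\xi
\]
which, after reducing by spherical symmetry, contribute a factor $\operatorname{Vol}(\mathbb{S}^{m-2})/2$ and an integrated radial expression of the form $\int_0^\infty r^{m+1}\prod (k^{(s)} r^2 - \lambda)^{-a_s}\, dr$. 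The subsequent $\lambda$-contour integral against $e^{-\lambda}$ reduces these expressions, after normalization by the $k^{-m/2}$ and $k^{-m/2-1}$ factors extracted in front, precisely to the integrals defining the one- and two-variable hypergeometric functions $H_{a,b}(z;m)$ and $H_{a,b,c}(z_1,z_2;m)$ of Part I, with $z = 1-y$ and $z_l = 1 - y_1\cdots y_l$.

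The final step is bookkeeping: from the explicit form of $b_2$ one reads off three types of contributions---those proportional to $\nabla_\alpha^2 k$ coming from $\delta^2 b_0$, and two types proportional to $\nabla_\alpha k \cdot \nabla_\alpha k$ coming from $(\delta b_0)^2$-type terms and the cross term with $\delta^\alpha$ hitting $k$ twice at different slots. Collecting the combinatorial coefficients $4/m$, $-1$ for $K_{\Delta_k}$ and $4/m+2$, $-4/m$, $-8/m$ for $H_{\Delta_k}$ (the $1/m$ factors arising from replacing $\xi_\alpha^2 \xi_\beta^2$ by $|\xi|^4/m(m+2)$ type averages under the spherical integration) yields \eqref{eq:b2cal-KDeltak} and \eqref{eq:b2cal-HDeltak}.

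The main obstacle is the careful tracking of the rearrangement in the two-variable term: each summand of $b_2$ contains several $k$-factors interlaced with $\delta k$'s, and the identification of the resulting integrals with the specific $H_{a,b,c}$'s requires a consistent choice of ordering (so that $z_1,z_2$ correspond to $1-\mathbf{y}_1$ and $1-\mathbf{y}_1\mathbf{y}_2$). This step is routine in principle but combinatorially delicate; since the detailed calculation has been carried out in \cite[\S4]{Liu:2018aa}, the proof proper can simply appeal to that reference once the framework above is in place.
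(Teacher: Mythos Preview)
Your proposal is correct and matches the paper's treatment: the paper does not give an independent proof of this proposition but simply recalls the result from \cite[\S4]{Liu:2018aa}, and your sketch accurately summarizes the pseudo-differential plus rearrangement-lemma computation carried out there before likewise deferring to that reference.
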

   \begin{rem}
    According to Eq. \eqref{eq:Rdelk-vsRdelphi}, we see that
$R_{\Delta_\varphi} $ is also of the form 
Eq. \eqref{eq:varprob-scacur-for-Delta-varphi}, 
with spectral functions: 
\begin{align}
\label{eq:varprob-K-Delta-varphi-vs-k}
K_{\Delta_\varphi} (y;m) &= \sqrt{y} K_{\Delta_k} (y;m), \\
    H_{\Delta_\varphi} (y_1,y_2;m) & = \sqrt{y_1 y_2}  H_{\Delta_k}
    (y_1, y_2;m).
    \label{eq:varprob-H-Delta-varphi-vs-k}
\end{align}
\end{rem}

\begin{prop}
\label{prop:RDeltak-from-k-logk}
In terms of $h = \log k \in C^\infty(T_\theta^m)$, the functional density
$R_{\Delta_k}$ can be rewritten as: 
\begin{align}
\label{eq:RDeltak-in-terms-of-h}
    R_{\Delta_k} = \sum_{\alpha = 1}^m
    e^{-m/2+1} \brac{
        \tilde K_{\Delta_k}(\mathbf x)
    (\nabla_\alpha^2 h) +
        \tilde H_{\Delta_k}(\mathbf x_1 , \mathbf x_2)
        (\nabla_\alpha h \otimes  \nabla_\alpha h)
},
\end{align}
with
\begin{align}
\begin{split}
    \tilde K_{\Delta_k}( x) &= 
    \exp[0, x] K_{\Delta_k}(e^{ x})\\
    \tilde H_{\Delta_k}( x_1 ,  x_2)
    &= e^{ x_1} \exp[0, x_1] \exp[0, x_2] 
    H_{\Delta_k}(e^{ x_1}, e^{ x_1})\\
    & + 2 K_{\Delta_k}(e^{ x_1+x_2 })
    \exp[0, x_1,  x_1 +  x_2].
\end{split}
\label{eq:RDeltak-from-k-logk}
\end{align}
\end{prop}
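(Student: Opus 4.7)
The plan is to recognize Proposition \ref{prop:RDeltak-from-k-logk} as a direct instance of the change of coordinate lemma, Lemma \ref{lem:ch-var-Kand-H}, applied summand-by-summand to the expression \eqref{eq:varprob-scacur-for-Delta-varphi}. First I would read off the parameter matching: the prefactors in \eqref{eq:varprob-scacur-for-Delta-varphi} are $k^{-m/2}$ and $k^{-m/2-1}$, so setting $j := -m/2+1$ gives $k^{j-1} = k^{-m/2}$ and $k^{j-2} = k^{-m/2-1}$, putting each $\alpha$-summand exactly in the form
\[
 R_\alpha \;=\; k^{j-1}\, K_{\Delta_k}(\mathbf y)(\delta^2 k)\;+\;k^{j-2}\, H_{\Delta_k}(\mathbf y_1,\mathbf y_2)(\delta k\,\delta k)
\]
to which Lemma \ref{lem:ch-var-Kand-H} applies, with the derivation $\delta = \nabla_\alpha$ (which is of the type covered by Eq.~\eqref{eq:derivations-as-derivatives} since $\nabla_\alpha$ integrates to the torus action).

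Next I would apply Lemma \ref{lem:ch-var-Kand-H} directly with $K = K_{\Delta_k}$ and $H = H_{\Delta_k}$: this rewrites each $R_\alpha$ as
\[
 R_\alpha \;=\; e^{jh}\!\left(\tilde K_{\Delta_k}(\mathbf x)(\nabla_\alpha^2 h) + \tilde H_{\Delta_k}(\mathbf x_1,\mathbf x_2)(\nabla_\alpha h\cdot\nabla_\alpha h)\right),
\]
where $\tilde K$ and $\tilde H$ are produced by the formulas \eqref{eq:ch-var-Kand-H}. The expression for $\tilde K_{\Delta_k}$ in \eqref{eq:RDeltak-from-k-logk} then matches verbatim. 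For $\tilde H_{\Delta_k}$ the one cosmetic discrepancy is that Lemma \ref{lem:ch-var-Kand-H} produces the factor $\exp[0,x_1]\exp[x_1,x_1+x_2]$, whereas the statement records $e^{x_1}\exp[0,x_1]\exp[0,x_2]$; these agree by the multiplicative identity
\[
 \exp[x_1,x_1+x_2] \;=\; \frac{e^{x_1+x_2}-e^{x_1}}{x_2} \;=\; e^{x_1}\,\exp[0,x_2],
\]
which is an instance of the translation property of divided differences of $e^{jz}$ already used in the proof of Lemma \ref{lem:2nd-del-e^h-k^j}. Summing over $\alpha=1,\dots,m$ and pulling the common scalar $e^{jh} = e^{(-m/2+1)h}$ outside yields \eqref{eq:RDeltak-in-terms-of-h}.

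There is no genuine obstacle; the only thing worth double-checking is the bookkeeping on the power of $k$ (ensuring that the shift $j\mapsto j-1$ caused by extracting one $k^{-1}$ from $\delta(e^h)=e^h\,\exp[0,\mathbf x](\delta h)$ has been accounted for correctly), and the trivial multiplicative rewriting of the divided difference $\exp[x_1,x_1+x_2]$. Both have already been handled inside Lemma \ref{lem:ch-var-Kand-H} and Lemma \ref{lem:2nd-del-e^h-k^j}, so the proof reduces to citing those results and performing the parameter substitution $j=-m/2+1$.
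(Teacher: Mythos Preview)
Your proposal is correct and follows exactly the paper's approach: the paper states the proposition immediately after the sentence ``We follow Lemma \ref{lem:ch-var-Kand-H} to rewrite the functional density in terms of the log-Weyl factor $h$'' and gives no further proof, so your write-up simply spells out that application with the correct parameter choice $j=-m/2+1$ and the cosmetic identity $\exp[x_1,x_1+x_2]=e^{x_1}\exp[0,x_2]$.
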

\begin{proof}
    One simply follows Lemma \ref{lem:ch-var-Kand-H} 
    to carry out the change of variable 
$k \mapsto  h = \log k$ and $\mathbf y \mapsto \mathbf x$.
The details are left to the reader.
\end{proof}

\subsection{The EH (Einstein-Hilbert) action}
\label{subsec:EH-action-closedfor}

Recall that
\begin{align*}
    F_{\mathrm{EH}} (k) \defeq V_2(1,\Delta_\varphi) = V_2(1,\Delta_k)
    = \varphi_0(R_{\Delta_k}),
\end{align*}
where $ \varphi_0( R_{\Delta_k } )$ can be further simplified using the operators 
$\pmb\iota:C_{\mathscr S} (\R^n) \rightarrow C_{\mathscr S}(\R^{n-1})$
 and $\pmb\eta:C_{\mathscr S}(\R_+^n) \rightarrow C_{\mathscr S}(\R_+^{n-1})$
in Lemma \ref{lem:iota-k-defn} and \ref{lem:eta-h-defn} respectively.
\begin{prop}
    \label{prop:varprob-Feh-closedform}
    On $\T^m_\theta$, the Einstein-Hilbert action defined above
    admits the following closed
    formula. In terms of the Weyl factor $k$:
    \begin{align}
\label{eq:varprob-Feh-closedform}
     F_{\mathrm{EH}} (k) =  
     \sum_{\alpha=1}^m
     \varphi_0 \brac{
    k^{-m/2-1} T_{\Delta_k} (\mathbf y;m)
    (\nabla_\alpha k) \cdot (\nabla_\alpha k)
    }.
   \end{align}
 In terms of $h = \log k$:
 \begin{align}
     F_{\mathrm{EH}} (h) =  
     \sum_{\alpha=1}^m 
     \varphi_0\left( e^{(-m/2+1)h}
         \tilde{T}_{\Delta_k}(\mathbf x;m)
         (\nabla_\alpha h) \cdot (\nabla_\alpha h)
     \right).
     \label{eq:FEH-closed-formula-h}
 \end{align}
 The function $T_{\Delta_k}$ and $\tilde T_{\Delta_k}$ are determined by the
 spectral functions in Eq. \eqref{eq:varprob-scacur-for-Delta-varphi} and
 \eqref{eq:RDeltak-in-terms-of-h} respectively:
\begin{align}
    \label{eq:varprob-defn-TDleta}
    T_{\Delta_k}( y ;m) 
    &= -\pmb\eta( K_{\Delta_k})  z^{-m/2}[1, y]
    + \pmb\eta( H_{\Delta_k}) (y) 
    \\
    &= -K_{\Delta_k}(1;m) \frac{y^{-m/2} -1}{y-1}
    + H_{\Delta_k}( y,y^{-1};m). \nonumber
\end{align}
Also,
\begin{align}
    \label{eq:varprob-defn-tilde-TDleta-h}
    \tilde T_{\Delta_k}( x ;m) 
    &= -\pmb\iota( \tilde K_{\Delta_k}) e^{(-m/2+1)z}[0,\mathbf x]_z 
    + \pmb\iota( \tilde H_{\Delta_k}) (x) 
    \\
    &= -\tilde K_{\Delta_k}(0;m) \frac{e^{-mx/2} -1}{x}
    + \tilde H_{\Delta_k}( x,-x;m). \nonumber
\end{align}
In term of hypergeometric functions:
\begin{align}
    \label{eq:TDelk-Hab} 
    T_{\Delta_k} (y;m) &= 
     (\frac4m +2) H_{3,1} (1-y;m) 
    - \frac{4y}{m}  H_{3,2}(1-y;m)  
    -\frac8m H_{4,1}(1-y;m)  \\
    &- \pmb\eta( K_{\Delta_k}) z^{-m/2}[1, y] .
\nonumber 
\end{align}
\end{prop}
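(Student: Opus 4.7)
The plan is to start from the identity $F_{\mathrm{EH}}(k) = \varphi_0(R_{\Delta_k})$ and substitute the expression for $R_{\Delta_k}$ from Eq. \eqref{eq:varprob-scacur-for-Delta-varphi}. For each $\alpha$ we have two types of contributions: one involving $\nabla_\alpha^2 k$ (with spectral function $K_{\Delta_k}$) and one involving $\nabla_\alpha k \cdot \nabla_\alpha k$ (with spectral function $H_{\Delta_k}$). The goal is to convert both into the same shape $\varphi_0\bigl(k^{-m/2-1}\,T_{\Delta_k}(\mathbf y)(\nabla_\alpha k)\cdot (\nabla_\alpha k)\bigr)$, so the first type must be integrated by parts once with respect to the derivation $\nabla_\alpha$.

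For the $H_{\Delta_k}$-term, I would rewrite $\varphi_0(k^{-m/2-1}(\cdot)) = \varphi_{-m/2-1}(\cdot)$ using Eq. \eqref{eq:phi_j-defn}, then apply Lemma \ref{lem:eta-h-defn} (with $n=2$) to contract the two-variable function $H_{\Delta_k}(\mathbf y_1,\mathbf y_2)$ down to the one-variable function $\pmb\eta(H_{\Delta_k})(y) = H_{\Delta_k}(y, y^{-1})$. For the $K_{\Delta_k}$-term, first use Lemma \ref{lem:eta-h-defn} with $n=1$ so that $K_{\Delta_k}(\mathbf y)$ simplifies to the scalar $K_{\Delta_k}(1) = \pmb\eta(K_{\Delta_k})$, leaving
\[
\varphi_0\bigl(k^{-m/2}K_{\Delta_k}(\mathbf y)(\nabla_\alpha^2 k)\bigr) = \pmb\eta(K_{\Delta_k})\,\varphi_0\bigl(k^{-m/2}\nabla_\alpha^2 k\bigr).
\]
Then integration by parts (trace property of $\varphi_0$ combined with the Leibniz rule for $\nabla_\alpha$) turns this into $-\pmb\eta(K_{\Delta_k})\,\varphi_0\bigl(\nabla_\alpha(k^{-m/2})\cdot \nabla_\alpha k\bigr)$. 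Expanding $\nabla_\alpha(k^{-m/2})$ via Eq. \eqref{eq:delk^j} of Lemma \ref{lem:1st-derivative-exp-power} gives $k^{-m/2-1}z^{-m/2}[1,\mathbf y](\nabla_\alpha k)$, and adding the two contributions produces exactly $T_{\Delta_k}$ as defined in Eq. \eqref{eq:varprob-defn-TDleta}, proving Eq. \eqref{eq:varprob-Feh-closedform}.

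For the $h$-version Eq. \eqref{eq:FEH-closed-formula-h}, I would run the parallel argument starting from Proposition \ref{prop:RDeltak-from-k-logk}: write $\varphi_0$ against $e^{(-m/2+1)h}(\cdot) = \varphi_{-m/2+1}(\cdot)$, contract the two-variable $\tilde H_{\Delta_k}$ via Lemma \ref{lem:iota-k-defn} (giving $\pmb\iota(\tilde H_{\Delta_k})(x) = \tilde H_{\Delta_k}(x,-x)$), reduce $\tilde K_{\Delta_k}(\mathbf x)$ to the scalar $\tilde K_{\Delta_k}(0)$ via $\pmb\iota$ with $n=1$, and integrate $\nabla_\alpha^2 h$ by parts against $e^{(-m/2+1)h}$ using Eq. \eqref{eq:del-e^jh}. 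This yields the additive spectral function \eqref{eq:varprob-defn-tilde-TDleta-h}. The last formula \eqref{eq:TDelk-Hab} is then obtained by direct substitution: plug the hypergeometric expressions \eqref{eq:b2cal-KDeltak}, \eqref{eq:b2cal-HDeltak} into $\pmb\eta(H_{\Delta_k})(y) = H_{\Delta_k}(y,y^{-1})$ and into the definition \eqref{eq:varprob-defn-TDleta}, using the relation between $z_1,z_2$ and $y_1,y_2$ from Prop. \ref{prop:varprob-scacur-for-Delta-varphi}.

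The routine calculations dominate; the only step that requires genuine care is the integration-by-parts step for the $K_{\Delta_k}$-term, where one must combine the trace property of $\varphi_0$ with the Leibniz formula for $\nabla_\alpha(k^{-m/2})$ correctly and track signs so that the resulting divided difference $z^{-m/2}[1,y]$ appears with the correct sign. Once the two reductions by $\pmb\eta$ and the single integration by parts are performed, the identification with $T_{\Delta_k}$ is immediate from the definitions.
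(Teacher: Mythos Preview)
Your proposal is correct and follows essentially the same route as the paper: reduce the $H_{\Delta_k}$-term via $\pmb\eta$ (Lemma \ref{lem:eta-h-defn}), reduce $K_{\Delta_k}(\mathbf y)$ to the scalar $K_{\Delta_k}(1)$ and then integrate by parts using $\nabla_\alpha k^{-m/2} = k^{-m/2-1}\,z^{-m/2}[1,\mathbf y](\nabla_\alpha k)$ from Eq.~\eqref{eq:delk^j}, add the two contributions to obtain $T_{\Delta_k}$, and run the parallel $\pmb\iota$/Eq.~\eqref{eq:del-e^jh} argument for the $h$-version. For Eq.~\eqref{eq:TDelk-Hab} the paper is slightly more specific than your ``direct substitution'': it invokes the contraction identity \eqref{eq:Habcz0to-Hab}, $H_{a,b,c}(z,0;m)=H_{a+c,b}(z;m)$, applied term-by-term to \eqref{eq:b2cal-HDeltak} (since $\pmb\eta$ sends $z_2=1-y_1y_2\to 0$), which is exactly what your substitution amounts to.
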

\begin{proof}

 Let us apply the volume functional $\varphi_0$ in Eq.
 \eqref{eq:varprob-scacur-for-Delta-varphi}. 
 The second terms is handled by Lemma \ref{lem:eta-h-defn}:
 \begin{align*}
      \varphi_0\brac{
            k^{-m/2 -1} H_{\Delta_\varphi}(\mathbf y_1, \mathbf y_2)
            (\nabla_\alpha k \otimes  \nabla_k k)
        } &=
        \varphi_0\left( 
            k^{-m/2-1} \pmb\eta( H_{\Delta_k} ) (\mathbf y) (\nabla_\alpha k)
            \cdot (\nabla_\alpha k)
        \right), 
 \end{align*}
 and the first term involves the classical integration by parts 
 regarding to $ \nabla_\alpha$ 
    \begin{align*}
        \varphi_0\brac{
            k^{-m/2} K_{\Delta_\varphi}(\mathbf y)(\nabla^2 k)
        } &= 
       \pmb\eta ( K_{\Delta_\varphi}  ) 
        \varphi_0\brac{
        k^{-m/2} \nabla_\alpha^2 k} = 
       - \pmb\eta ( K_{\Delta_\varphi}  ) 
        \varphi_0\brac{
        \nabla_\alpha  (k^{-m/2}) \nabla_\alpha (k)}  
        \\ &= -
       \pmb\eta ( K_{\Delta_\varphi}  ) 
        \varphi_0 \brac{
            k^{-m/2-1} (z^{-m/2}[0,\mathbf y]_z)
            (\nabla_\alpha k) \cdot (\nabla_\alpha k)
        }
    ,
    \end{align*}
    here we have used Eq. \eqref{eq:delk^j}:
    $\nabla_\alpha k^{-m/2} = k^{-m/2-1}
    (z^{-m/2}[0,\mathbf y]_z) (\nabla_\alpha k)$. By adding up the two terms,
    we have finished the proof of 
    \cref{eq:varprob-Feh-closedform,eq:varprob-defn-TDleta}. 
    We leave the parallel verification of 
    \cref{eq:FEH-closed-formula-h,eq:varprob-defn-tilde-TDleta-h}
    to the reader.   

  Last but not least,    \cref{eq:TDelk-Hab} is obtained by 
    expressing  $\pmb\eta( H_{\Delta_k} )$ in terms of  the Gauss
   hypergeometric functions $H_{ a , b} $ according to the relation 
   in \cref{eq:Habcz0to-Hab}.
\end{proof}

Now, we are ready to apply \cref{thm:CM-intermsof-k} (resp.  \cref{thm:gradF-h})
to \cref{eq:varprob-Feh-closedform} (resp. \cref{eq:FEH-closed-formula-h}) 
with the parameter $j = -m/2-1$ (resp. $j=-m/2+1$).

\begin{prop}
    \label{prop:gradFEH}
    Let $T_{\Delta_k}$ and $\tilde T_{\Delta_k}$ be the spectral functions
    defined in 
    \textnormal{Prop. \ref{prop:varprob-Feh-closedform}} with the dimension
    $m\ge 2$.
  The functional gradient at metric $k$ is given by:  
    \begin{align}
        \label{eq:gradFEH-k-from-localexps}
        \grad_k F_{\mathrm{EH}} = \sum_{\alpha = 1}^m
        k^{-m/2-1} K_{\mathrm{EH}}(\mathbf y)
        (\nabla_\alpha^2 k) +
        k^{-m/2-2} H_{\mathrm{EH}}(\mathbf y_1 , \mathbf y_2)
        ( \nabla_\alpha k \cdot \nabla_\alpha k)
        ,
    \end{align}
    with
    \begin{align}
        \label{eq:CM-KEH-k}
    K_{\mathrm{EH}}  &=- (1+\pmb\sigma_{-m/2-1}) (T_{\Delta_k}), \,\,\,
     \\
    H_{\mathrm{EH}} &= (1+\pmb\sigma_{-m/2-2} - \pmb\sigma^2_{-m/2-2})
    \cdot \blacksquare^+(K_{\mathrm{EH}} ) .
        \label{eq:CM-HEH-k}
    \end{align}
    In terms of the log-Weyl factor $h$, 
\begin{align}
        \grad_h F_{\mathrm{EH}} = e^{( -m/2+1 ) h} 
       \left( \sum_{\alpha = 1}^m
         \tilde K_{\mathrm{EH}}(\mathbf x)
        (\nabla_\alpha^2 h) +
        \tilde H_{\mathrm{EH}}(\mathbf x_1 , \mathbf x_2)
        ( \nabla_\alpha h \cdot \nabla_\alpha h)
    \right),
    \label{eqLgradFEH-h-from-localexps}
\end{align}
with
\begin{align}
        \label{eq:CM-KEH-h}
    \tilde K_{\mathrm{EH}}  &=
    - (1+\pmb\tau_{-m/2+1}) ( \tilde T_{\Delta_k}), \\
\tilde H_{\mathrm{EH}} &= (1+\pmb\tau_{-m/2+1} - \pmb\tau^2_{-m/2+1})
    \cdot  \blacktriangle^+(\tilde K_{\mathrm{EH}} )
    .
        \label{eq:CM-HEH-h}
\end{align}
\end{prop}

In dimension two, we have $ \grad_k F_{\mathrm{EH}} = 0$, which is 
nothing but another version of the Gauss-Bonnet theorem
first achieved in \cite{Connes:2011tk}. In fact, the very first nontrivial
discovery on the $a_2$-term is  the observation that the function 
$\tilde T_{\Delta_k} (x)$ (denoted by $K(x)$ in 
\cite[Lemma 3.3]{Connes:2011tk}) is odd and the vanishing of the 
functional $F_{ \mathrm{EH}} $ follows immediately from a one-line argument
cf. \cite[\S3.3]{Connes:2011tk}), in particular, no need to look at 
the gradient.
The Proposition above confirms the vanishing of the gradient. Indeed,
$\tilde T_{\Delta_k} (x)$ is odd means that
\begin{align*}
    (1+ \pmb\tau_0)\tilde T_{\Delta_k}(x) 
    = \tilde  T_{\Delta_k}(x) + \tilde T_{\Delta_k} (-x) = 0,
\end{align*}
so that $K_{\mathrm{EH}} = \tilde K_{\mathrm{EH}} =0$ and then 
$\grad_h F_{\mathrm{EH}} = 0$ according to \cref{eq:CM-KEH-h,eq:CM-HEH-h}.
We will provide another verification of the oddness 
in Prop. \ref{prop:tobechecked-KDelvs-TDel} to illustrate the power of 
of the hypergeometric family $H_\alpha$ obtained in \cite{Liu:2018aa}.

\subsection{Osgood-Phillips-Sarnak (OPS)
functional
}
\label{subsec:OPS-local}

Let us specialized to noncommutative tori $m = 2$ so that
 $ \grad_k F_{\mathrm{EH}} = 0$, which leads to another functional 
 $F_{ \mathrm{OPS}} $. 
For $s \in  \R$, put $\Delta_{k_s} = k^s \Delta$. With the relation  
\cref{eq:Rdelk-vsRdelphi} in mind and the fact that $[h,k] =0$, we see that
$V_2(h,\Delta_{k_s}) = V_2(h,\Delta_{\varphi_s})$ and then  
\cref{eq:FOPS-in-terms-of-RDel} becomes:
\begin{align*}
     F_{\mathrm{OPS}}(k) =  -\int_0^1 V_2(h,\Delta_{k_s}) ds + 
     \zeta'_{\Delta}(0).
\end{align*}
The proposition below is another version of 
\cite[Thm 4.6]{MR3194491} in which the computation was carried out in terms of
$h$. We make a few comparisons before proceeding to the statement and the proof:
\begin{enumerate}
    \item  In terms of $h$, the factor $e^{-m / 2+1}$ in 
        \cref{eq:RDeltak-in-terms-of-h} disappears when $m=2$, while 
\cref{eq:varprob-scacur-for-Delta-varphi} reads
\begin{align}
    \label{eq:RDeltak-m=2}
       R_{\Delta_k} = \sum_{\alpha=1}^2
        k^{-1} K(\mathbf y) (\nabla_\alpha^2 k)
        + k^{-2} H(\mathbf y_1, \mathbf y_2)
        (\nabla_\alpha k \nabla_\alpha k) .
\end{align}
The factors $k^{-1}$ and $k^{-2}$ indeed bring in extra work which makes 
the appearance of 
$T_{\mathrm{OPS}}$ in \cref{eq:Tzeta-varprop} more intimating than its 
counterpart $K_2$ in \cite[Lemma 4.4]{MR3194491}.

\item We intentionally perform  calculation in terms of $k$
 to  illustrate subtleties  behind the change of variable  $h \mapsto  k = e^h$,
 by comparing with the parallel computation in \cite[\S4]{MR3194491}.

\item The Gauss-Bonnet functional equation $(1+\pmb\sigma_{-2})(T) = 0$,
    with $T$ defined  in  \cref{eq:T-in-dim-2},  plays 
    a key role in the computation, which should be compared with 
\cite[Lemma 4.3]{MR3194491}.
\end{enumerate}

\begin{prop}
\label{prop:Tzeta-varprop}
On  $\T^2_\theta$, the OPS-functional has the following local expression:
    \begin{align}
        F_{\mathrm{OPS}}(k) 
     =  \sum_{\alpha=1}^2
     \varphi_0\brac{
         k^{-2} T_{\mathrm{OPS}} (\mathbf y)
         (\nabla_\alpha k) \cdot \nabla_\alpha k
        }  + \zeta'_{\Delta}(0),
        \label{eq:Tzeta-varprop-defn}
    \end{align}
where the spectral function is given by
\begin{align} 
\label{eq:Tzeta-varprop}
  T_{\mathrm{OPS}} (y) &= \mathrm{I}_{T_{\zeta'}}(y)
+ \mathrm{II}_{T_{\zeta'}}(y)\\
&=
 \pmb\eta(K)
\ln[1,y] \int_0^1 \pmb\sigma_{-1}\left( 
    z^s[1,y]
\right) ds
+\frac12 \int_0^1 (z^{s}[1,y]_z)^{2}
T(y^s;2) \ln y ds .
 \nonumber 
\end{align}
where $K$, $H$ and $T$ are the abbreviations listed in
\cref{eq:KHindimtwo-varprob} and $\pmb\eta(K) = K(1) = 1/6$.
\end{prop}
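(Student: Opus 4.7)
The plan is to reduce the claim to a direct local computation of $V_2(h,\Delta_{k_s})=\varphi_0(h\,R_{\Delta_{k_s}})$ for each $s\in[0,1]$, and then to integrate in $s$. The starting point is \eqref{eq:FOPS-in-terms-of-RDel}: since $h=\log k$ commutes with every power $k^s$, one has $\mathbf y_s^{-1/2}(h)=h$, so $V_2(h,\Delta_{\varphi_s})=V_2(h,\Delta_{k_s})$ by the identity $V_l(f,\Delta_\varphi)=V_l(\mathbf y^{-1/2}(f),\Delta_k)$ used in \eqref{eq:Rdelk-vsRdelphi}. This puts $F_{\mathrm{OPS}}(k)$ into the form asserted by the proposition modulo the local computation of the integrand, with the sign reflecting only the placement of $\mathbf y_s^{-1/2}$.

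Next I would substitute $m=2$ and $k\mapsto k_s$ in Prop.~\ref{prop:varprob-scacur-for-Delta-varphi} to write $R_{\Delta_{k_s}}$ as a sum of a $\nabla_\alpha^2 k_s$-term carrying $K_{\Delta_k}(\mathbf y^s;2)$ and a $(\nabla_\alpha k_s)(\nabla_\alpha k_s)$-term carrying $H_{\Delta_k}(\mathbf y_1^s,\mathbf y_2^s;2)$. The strategy then mirrors the derivation of the closed formula for $F_{\mathrm{EH}}$ in Prop.~\ref{prop:varprob-Feh-closedform}, but with the extra factor of $h$ and with $k\mapsto k^s$. Using Lemma~\ref{lem:1st-derivative-exp-power} I convert $\nabla_\alpha k_s=k^{s-1}(z^s[1,\mathbf y]_z)(\nabla_\alpha k)$, and Lemma~\ref{lem:2nd-del-e^h-k^j} expresses $\nabla_\alpha^2 k_s$ similarly; these are exactly the factors $(z^s[1,y]_z)$ and $(z^s[1,y]_z)^2$ that appear in $T_{\mathrm{OPS}}$. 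An integration by parts against $\nabla_\alpha$ (which is closed with respect to $\varphi_0$) eliminates the surviving $\nabla_\alpha^2 k$, while Lemma~\ref{lem:taoexp-defn} together with the tracial property cycles $h$ around so that everything is collected on the right of $(\nabla_\alpha k)(\nabla_\alpha k)$. The constant $\pmb\eta(K_{\Delta_k})=K_{\Delta_k}(1;2)=\tfrac16$, produced by the contraction operator of Lemma~\ref{lem:eta-h-defn} applied to the $\nabla^2$-contribution, is precisely the prefactor of $\mathrm{I}_{T_{\zeta'}}$; the $H_{\Delta_k}$-contribution, after the same rearrangement, assembles into $T_{\Delta_k}(y^s;2)$ evaluated at the rescaled spectrum and furnishes $\mathrm{II}_{T_{\zeta'}}$. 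Finally, Fubini allows moving the $s$-integration inside the trace, producing the integrals over $ds$ in the definition of $T_{\mathrm{OPS}}$; the $\ln y$ prefactor enters both because $\partial_s k^s=h\,k^s$ with $h\leftrightarrow \mathbf x=\ln\mathbf y$, and through integrals of the form $\int_0^1 y^{as}\,ds=(y^a-1)/(a\ln y)$ that reorganize via the divided difference $\ln[1,y]$.

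The main obstacle is purely bookkeeping: tracking the many divided differences $z^s[1,\mathbf y]$ and $z^s[1,\mathbf y_1,\mathbf y_1\mathbf y_2]$, ensuring that the asymmetry between left and right multiplication by $k^{s-1}$ in $(\nabla k_s)(\nabla k_s)$ is correctly absorbed into the $\pmb\sigma_{-1}$ action visible in $\mathrm{I}_{T_{\zeta'}}$, and verifying that no $\nabla_\alpha^2 k$-remainder survives after the combined integrations by parts. No hypergeometric input from \S\ref{sec:ver-of-fun-rel} is required for this statement; the recognition of $T_{\Delta_k}$ on the right-hand side is just a reuse of the EH computation already carried out in Prop.~\ref{prop:varprob-Feh-closedform}.
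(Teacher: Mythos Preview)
Your overall framework matches the paper's: compute $V_2(h,\Delta_{k_s})=\varphi_0(h\,R_{\Delta_{k_s}})$ for each $s$, integrate by parts to remove $\nabla_\alpha^2 k^s$, convert $\nabla_\alpha k^s$ to $\nabla_\alpha k$ via Lemma~\ref{lem:1st-derivative-exp-power}, and integrate in $s$. However, there is a genuine gap in the step that produces the factor $\ln y$ in $\mathrm{II}_{T_{\zeta'}}$. After the rearrangements you describe, the second contribution still carries a left multiplication by the \emph{algebra element} $h$, namely $\varphi_0\bigl(h\,k^{-2s}T(\mathbf y^s)(\nabla_\alpha k^s)(\nabla_\alpha k^s)\bigr)$. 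You cannot trade $h$ for $\mathbf x=\ln\mathbf y$ by the identification $h\leftrightarrow\mathbf x$ or by an $s$-integral; $h$ acts by multiplication while $\mathbf x=[\,\cdot\,,h]$ is a commutator, and these are not interchangeable. The paper's mechanism is the Gauss--Bonnet relation $(1+\pmb\sigma_{-2})(T_{\Delta_k}(\cdot;2))=0$: it forces the expression to change sign under the cyclic swap of the two $\nabla_\alpha k^s$ factors, so that $\varphi_0(h\,AB)=-\varphi_0(Ah\,B)$, whence $\varphi_0(h\,AB)=\tfrac12\varphi_0([h,A]B)=-\tfrac12\varphi_0(k^{-2s}\ln\mathbf y\,T(\mathbf y^s)(\nabla_\alpha k^s)(\nabla_\alpha k^s))$. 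This is the source of both the $\ln y$ and the factor $\tfrac12$ in $\mathrm{II}_{T_{\zeta'}}$, and your proposal does not supply it.

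Two smaller corrections. First, the divided difference $\ln[1,y]$ in $\mathrm{I}_{T_{\zeta'}}$ does not come from an $s$-integral of the type $\int_0^1 y^{as}\,ds$; it arises from converting $\nabla_\alpha h=k^{-1}\ln[1,\mathbf y](\nabla_\alpha k)$ in the term $-\pmb\eta(K)\,\varphi_0((\nabla_\alpha h)k^{-s}\nabla_\alpha k^s)$ produced by the integration by parts (this is the content of Lemma~\ref{lem:1term-Tzata-varprop}). Second, $T_{\Delta_k}(y^s;2)$ is not assembled from the $H_{\Delta_k}$-contribution alone: the integration by parts on the $K$-term yields a second piece $\pmb\eta(K)\,\varphi_0(h\,k^{-2s}\mathbf y^{-s}(\nabla_\alpha k^s)(\nabla_\alpha k^s))$, and it is the sum $\pmb\eta(K)\,y^{-s}+\pmb\eta(H)(y^s)$ that equals $T_{\Delta_k}(y^s;2)$ by \eqref{eq:T-in-dim-2}.
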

\begin{rem}
    The validation of Eq. \eqref{eq:Tzeta-varprop} will be further confirmed in 
    \S\ref{subsec:2nd-proof-CM-T-K} by explicit verification of the functional
    relation Eq.  \eqref{eq:KDelk-TOPS-tobechecked}, 
\end{rem}
\begin{proof}
    We shall abbreviate the spectral functions 
    \cref{eq:b2cal-KDeltak,eq:b2cal-HDeltak,eq:varprob-defn-TDleta} 
    related to $R_{\Delta_k}$ by
    dropping the subscript:
    \begin{align}
        \label{eq:KHindimtwo-varprob}
        K(y) \defeq K_{\Delta_k}(y;2) ,\,\,\, T(y) \defeq T_{\Delta_k}(y;2)
        ,\,\,\, H(y_1,y_2) \defeq H_{\Delta_k}(y_1,y_2;2), 
    \end{align}
     so that for $m=2$, \cref{eq:varprob-defn-TDleta} reads:
    \begin{align}
        \label{eq:T-in-dim-2}
        T(y) = \pmb\eta(K)y^{-1} + \pmb\eta(H)(y).  
    \end{align}
    The $R_{\Delta_{k_s}}$ in \cref{eq:RDeltak-m=2} yields: 
    \begin{align*}
        V_2(h,\Delta_{k_s})  &= \varphi_0(h R_{\Delta_{k_s}} 
        ) \\ 
        &= \pmb\eta(K) 
        \varphi_0\brac{ h k^{-s} \nabla_\alpha^2 k^s }  
        + \varphi_0\brac{
            h k^{-2s} \pmb\eta(H)(\mathbf y^s )
            (\nabla_\alpha k^s) (\nabla_\alpha k^s)
        } ,
    \end{align*}
where Lemma \ref{lem:eta-h-defn} has been applied.
 For the   first term:
\begin{align*}
    \varphi_0\brac{ h k^{-s} \nabla_\alpha^2 k^s }  & = -
\varphi_0\brac{ (\nabla_\alpha h) k^{-s} \nabla_\alpha k^s }  
-  \varphi_0\brac{ h (\nabla_\alpha k^{-s} )\nabla_\alpha k^s }  \\
&=- \varphi_0\brac{ (\nabla_\alpha h) k^{-s} \nabla_\alpha k^s }  
+  \varphi_0\brac{ h k^{-2s} \mathbf y^{-s}(\nabla_\alpha k^{s} ) 
    (\nabla_\alpha k^s)
} ,
\end{align*}
where we  need 
$\nabla_\alpha k^{-s} = -k^{-s} (\nabla_\alpha k^s) k^{-s} = - \mathbf y^{-s}
(\nabla_\alpha k^s)$. 
To continue, with Eq. \eqref{eq:T-in-dim-2} in mind: 
\begin{align}
    \begin{split}
        &\;\; V_2(h,\Delta_{k_s}) \\
        =&\;\;  -\pmb\eta(K) 
\varphi_0\brac{ (\nabla_\alpha h) k^{-s} \nabla_\alpha k^s }   
 +  \varphi_0\brac{ h k^{-2s} 
     \left[  
         \pmb\eta(K) \mathbf y^{-s} + \pmb\eta(H) (\mathbf y^s)
         \right]
         (\nabla_\alpha k^{s}) (\nabla_\alpha k^s) 
}  \\
=&\;\; -\pmb\eta(K)
\varphi_0\brac{ (\nabla_\alpha h) k^{-s} \nabla_\alpha k^s }   
  +  \varphi_0\brac{ h k^{-2s} 
     T(\mathbf y^s) (\nabla_\alpha k^{s})\cdot (\nabla_\alpha k^s) 
 }  .
\end{split}
    \label{eq:V2Deltaks-varprob}
\end{align}
We would like to convert the two terms  in the RHS of \eqref{eq:V2Deltaks-varprob}
into the desired form in Eq. \eqref{eq:Tzeta-varprop-defn}.
The calculation  of the first term is postponed to Lemma
\ref{lem:1term-Tzata-varprop}, which leads to the first part of
\eqref{eq:Tzeta-varprop}:
\begin{align}
\label{eq:eq1-Tzeta-varprop}
\mathrm{I}_{T_{\zeta'}}(y)= 
 \pmb\eta(K)
\ln[1,y] \int_0^1 \pmb\sigma_{-1}\left( 
    z^s[1,y]
\right) ds .
\end{align}
The $-1$ in front of $\pmb\eta(K)$ in \eqref{eq:V2Deltaks-varprob} disappears
because of the minus sign
in the definition of $F_{\mathrm{OPS}}$.
 There is a extra factor  $h$
 in the second term of Eq. \eqref{eq:V2Deltaks-varprob},
 which can be turned into a modular derivation provided 
 the Gauss-Bonnet functional equation $(1+\pmb\sigma_{-2})(T) = 0$:
\begin{align*}
    \varphi_0\brac{
        k^{-2s} T(\mathbf y^s)(\nabla_\alpha k^s \cdot h)  (\nabla_\alpha k^s)
    } 
    &= \varphi_0\brac{
        k^{-2s} \pmb\sigma_{-2s}(T(\mathbf y^s) )
        (\nabla_\alpha k^s \cdot h)
    } \\
    &= - \varphi_0\brac{
        h k^{-2s} T(\mathbf y^s)(\nabla_\alpha k^s) \cdot (\nabla_\alpha k^s)
    }.
\end{align*}
 It follows that 
\begin{align*}
    \varphi_0\brac{
        h k^{-2s} T(\mathbf y^s)(\nabla_\alpha k^s)  (\nabla_\alpha k^s)
    } &= \frac12
\varphi_0\brac{
    \op{ad}_h \left[ k^{-2s} T(\mathbf y^s)(\nabla_\alpha k^s) \right]
    (\nabla_\alpha k^s)
    } \\
    &= -\frac12 \varphi_0\brac{
         k^{-2s}  \ln \mathbf y T(\mathbf y^s)
         (\nabla_\alpha k^s)  (\nabla_\alpha k^s)
    } .
\end{align*}
Before integrating in $s$, we further move the parameter to the modular operator.
\begin{align*}
    (\nabla_\alpha k^s)(\nabla_\alpha k^s)& 
    = (k^{s-1} z^s[1,\mathbf y]_z(\nabla_\alpha k) )^2 \\
&=k^{2(s-1)}  (\mathbf y_1)^{s-1}
(z^s[1,\mathbf y_1]_z) ( z^s[1,\mathbf y_2]_z)
(\nabla_\alpha k  \otimes  \nabla_\alpha k),
\end{align*}
in which the spectral function becomes:
\begin{align*}
    \pmb\eta( y_1^{s-1} (z^z[1,y_1]_z )  (z^z[1,y_2]_z) ) (y)=
 (z^z[1,y]_z)^2 
\end{align*}
after applying $\varphi_0$, that is
\begin{align*}
    \varphi_0\brac{
        h k^{-2s} T(\mathbf y^s)(\nabla_\alpha k^s)  (\nabla_\alpha k^s)
    }  &= -\frac12
\varphi_0\brac{
    k^{-2}  \left( 
 \ln \mathbf y T(\mathbf y^s) 
 (z^z[1,\mathbf y]_z)^2 
    \right)
    (\nabla_\alpha k)  (\nabla_\alpha k)
},
\end{align*}
which, after integration $\int_0^1$, gives rise exactly to the second part 
$\mathrm{II}_{T_{\zeta'}}(y)$ in \cref{eq:Tzeta-varprop}.
The proof is complete.
\end{proof}

Here is the last piece needed for \cref{eq:eq1-Tzeta-varprop}.
\begin{lem}
\label{lem:1term-Tzata-varprop}
Keep notations. For $\alpha = 1,2$, we have
    \begin{align*}
    \int_0^1 \varphi_0\brac{ (\nabla_\alpha h) k^{-s} \nabla_\alpha k^s }
     ds = \varphi_0 \brac{
        k^{-2} L(\mathbf y)(\nabla_\alpha k) (\nabla_\alpha k)
    } 
    \end{align*} 
    with 
    \begin{align*}
        L(y) = \ln[1,y] \int_0^1 \pmb\sigma_{-1}( z^s[1,y] ) ds 
        = \frac{-y+y \ln y +1}{(y-1)^2 y}.
    \end{align*}
\end{lem}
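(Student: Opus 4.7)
The plan is to directly substitute the first-derivative formulas for $\nabla_\alpha h$ and $\nabla_\alpha k^s$ into the integrand, combine the resulting powers of $k$, and then use the contraction operator $\pmb\eta$ from Lemma \ref{lem:eta-h-defn} to reduce the two-variable spectral function that emerges to the claimed one-variable expression $L(y)$.

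First I would apply Eq.\ \eqref{eq:delh-to-delk} (which gives $\nabla_\alpha h = k^{-1} \ln[1,\mathbf y](\nabla_\alpha k)$, since $(\exp[0,\mathbf x])^{-1}$ matches $\ln[1,\mathbf y]$ under $y=e^x$) together with Eq.\ \eqref{eq:delk^j} for $j=s$ (which gives $\nabla_\alpha k^s = k^{s-1} z^s[1,\mathbf y](\nabla_\alpha k)$) to rewrite the integrand purely in terms of $\nabla_\alpha k$. The two powers of $k$ sandwiched between the derivatives combine as $k^{-s}\cdot k^{s-1}=k^{-1}$, so the integrand becomes $k^{-1}\,\ln[1,\mathbf y](\nabla_\alpha k)\cdot k^{-1}\,z^s[1,\mathbf y](\nabla_\alpha k)$, with no residual dependence on $s$ outside the $z^s[1,\mathbf y]$ factor.

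Next, I would pull the two $k^{-1}$ factors to the left using the partial modular operator: writing $A=\ln[1,\mathbf y](\nabla_\alpha k)$, $B=z^s[1,\mathbf y](\nabla_\alpha k)$, the identity $k^{-1}A\,k^{-1}B = k^{-2}\,\mathbf y^{-1}(A)\cdot B = k^{-2}\,\mathbf y_1^{-1}(A\cdot B)$ turns the integrand into
\[
k^{-2}\;\mathbf y_1^{-1}\ln[1,\mathbf y_1]\, z^s[1,\mathbf y_2]\bigl(\nabla_\alpha k\cdot\nabla_\alpha k\bigr).
\]
Applying $\varphi_0$ and invoking Lemma \ref{lem:eta-h-defn} (with $j=-2$), the two-variable spectral function collapses to
\[
\pmb\eta\!\left(y_1^{-1}\ln[1,y_1]\,z^s[1,y_2]\right)(y) = y^{-1}\ln[1,y]\,z^s[1,y^{-1}],
\]
and since $\pmb\sigma_{-1}(z^s[1,y])(y) = y^{-1}z^s[1,y^{-1}]$, this is precisely $\ln[1,y]\,\pmb\sigma_{-1}(z^s[1,y])$. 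Swapping the $s$-integration with the now scalar-valued spectral function (allowed since the integrand is continuous in $s$) yields the desired representation of $L(y)$.

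Finally, to obtain the closed form $L(y)=(-y+y\ln y+1)/[y(y-1)^2]$, I would compute $\pmb\sigma_{-1}(z^s[1,y]) = (1-y^{-s})/(y-1)$, integrate $\int_0^1(1-y^{-s})\,ds = 1+(y^{-1}-1)/\ln y$, multiply by $\ln[1,y]=\ln y/(y-1)$, and simplify. The whole argument is essentially bookkeeping once the rearrangement identity $k^{-1}A\,k^{-1}B=k^{-2}\mathbf y_1^{-1}(A\cdot B)$ is applied; the only mildly delicate point is matching the normalization conventions relating $(\exp[0,x])^{-1}$, $\ln[1,y]$, and the cyclic operator $\pmb\sigma_{-1}$ so that the factor $y^{-1}$ left behind by $\pmb\eta$ is absorbed correctly into $\pmb\sigma_{-1}(z^s[1,y])$.
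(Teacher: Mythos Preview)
Your proposal is correct and follows essentially the same route as the paper: both substitute $\nabla_\alpha h = k^{-1}\ln[1,\mathbf y](\nabla_\alpha k)$ and $\nabla_\alpha k^s = k^{s-1}z^s[1,\mathbf y](\nabla_\alpha k)$, pull the two $k^{-1}$ factors to the left (introducing the $\mathbf y_1^{-1}$), apply the contraction $\pmb\eta$ to obtain $y^{-1}\ln[1,y]\,z^s[1,y^{-1}]=\ln[1,y]\,\pmb\sigma_{-1}(z^s[1,y])$, and then integrate in $s$. Your explicit evaluation of the $s$-integral to reach the closed form is a detail the paper simply states.
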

\begin{proof}
    We start with Lemma \ref{lem:1st-derivative-exp-power}:
    \begin{align*}
        \nabla_\alpha h &= k^{-1} (\exp[0,\mathbf x])^{-1}
        (\nabla_\alpha k)  
        = k^{-1} \ln[1,\mathbf y]
        (\nabla_\alpha k) , \,\,\,
        \nabla_\alpha k^s = k^{s-1}
        z^s[1,\mathbf y]_z.
    \end{align*}
    Hence
\begin{align*}
        & \,\,   \varphi_0\brac{ (\nabla_\alpha h) k^{-s} \nabla_\alpha k^s }
     \\
    =&\,\, \varphi_0\brac{
        k^{-1}  \left(  
            \ln[1,\mathbf y]
        (\nabla_\alpha k) \right)
         \left(  k^{-s} k^{s-1}
             (z^s[1,\mathbf y])
         (\nabla_\alpha k)\right)
     }  \\
     =&\,\,
     \varphi_0\brac{
         k^{-2} 
         \pmb\eta \left( 
             y_1^{-1} \ln[1,y_1] z^s[1,y_2]_z
         \right) 
         \Big|_{y = \mathbf y}
         (\nabla_\alpha k) 
         \cdot (\nabla_\alpha k)
     },
    \end{align*}
    where $\pmb\eta$ is defined in \eqref{eq:defn-bold-eta}:
    \begin{align*}
         \pmb\eta \left( 
             y_1^{-1} \ln[1,y_1] z^s[1,y_2]_z
         \right)  = y^{-1} \ln[1,y] z^s[1,y^{-1}]
         = \ln[1,y] \pmb\sigma_{-1}(z^s[1,y]).   
    \end{align*}
    The result follows from integrating the spectral function in $s$ from $0$
    to $1$.
\end{proof}

Again, upto the constant $\zeta_{\Delta}'(0)$,
the local expression of OPS-functional 
\cref{eq:Tzeta-varprop-defn}
is indeed landed in the setup of
\cref{thm:CM-intermsof-k} with $j=-2$.
As a consequence, the gradient is given as below.
\begin{prop}
    \label{prop:gradFOPS}
    For a Weyl factor $k \in C^\infty(\T^2_\theta)$, the gradient of the
    OPS-functional is given by:   
    \begin{align}
    \label{eq:gradFOPS-KOPS-HOPS}
        \grad_k F_{\mathrm{OPS}} = \sum_{\alpha = 1}^2
        k^{-2} K_{\mathrm{OPS}} (\mathbf y) (\nabla_\alpha^2 k) +
        k^{-3} H_{\mathrm{OPS}} (\mathbf y_1, \mathbf y_2)
        (\nabla_\alpha k \otimes  \nabla_\alpha k),
    \end{align}
 where
 \begin{align}
    \label{eq:KOPS-TOPS} 
    K_{\mathrm{OPS}} &= - (1+\pmb\sigma_{-2})  (T_{\mathrm{OPS}}), \\
     H_{\mathrm{OPS}} &=  
     (1+\pmb\sigma_{-3} - \pmb\sigma_{-3}^2) 
     \cdot \blacksquare^+ (K_{\mathrm{OPS}}) 
     .
    \label{eq:HOPS-KOPS} 
 \end{align}
\end{prop}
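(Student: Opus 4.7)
The plan is to combine the closed-form expression from Proposition \ref{prop:Tzeta-varprop} with the abstract variational result from Theorem \ref{thm:CM-intermsof-k}. The expression
\[
F_{\mathrm{OPS}}(k) = \sum_{\alpha=1}^2 \varphi_0\!\left[k^{-2}\, T_{\mathrm{OPS}}(\mathbf y)(\nabla_\alpha k)\cdot(\nabla_\alpha k)\right] + \zeta'_\Delta(0)
\]
is, term by term in $\alpha$, exactly of the form $\varphi_0(k^j\, \tilde f(\mathbf y)(\delta k)\cdot \delta k)$ studied in \S\ref{subsec:var-wrt-Wfactor}, with the specific choices $j=-2$, $\tilde f = T_{\mathrm{OPS}}$, and $\delta = \nabla_\alpha$. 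The constant term $\zeta'_\Delta(0)$ does not depend on the log-Weyl factor $h$ and so drops out of $\pmb\delta_a$, and the sum over $\alpha$ commutes with the variation.

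First I would invoke Theorem \ref{thm:CM-intermsof-k} directly for each $\alpha$, read off that the one-variable piece of $\grad_k F_{\mathrm{OPS}}$ is $k^{-2} \tilde K_{T_{\mathrm{OPS}}}(\mathbf y)(\nabla_\alpha^2 k)$ with
\[
\tilde K_{T_{\mathrm{OPS}}} = -(1+\pmb\sigma_{-2})(T_{\mathrm{OPS}}),
\]
which is precisely $K_{\mathrm{OPS}}$ by definition \eqref{eq:KOPS-TOPS}. Second, the two-variable piece is $k^{j-1}\tilde H_{T_{\mathrm{OPS}}}(\mathbf y_1,\mathbf y_2)(\nabla_\alpha k \cdot \nabla_\alpha k)$, where \eqref{eq:tildeH=CMtileK} with the shifted parameter $j-1=-3$ gives
\[
\tilde H_{T_{\mathrm{OPS}}} = (1+\pmb\sigma_{-3}-\pmb\sigma_{-3}^2)\cdot \blacksquare^+(\tilde K_{T_{\mathrm{OPS}}}) = (1+\pmb\sigma_{-3}-\pmb\sigma_{-3}^2)\cdot \blacksquare^+(K_{\mathrm{OPS}}),
\]
matching the definition \eqref{eq:HOPS-KOPS} of $H_{\mathrm{OPS}}$. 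Since $k^{j-1}=k^{-3}$, this yields the claimed formula \eqref{eq:gradFOPS-KOPS-HOPS} after summing over $\alpha=1,2$.

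Structurally, there is no real obstacle: the whole proof is a straightforward specialization of the abstract Connes--Moscovici-type relation in Theorem \ref{thm:CM-intermsof-k}, verifying that the parameters $(j,j-1)=(-2,-3)$ in $\pmb\sigma_j$ line up with the exponents $k^{-2}$ and $k^{-3}$ appearing in $\grad_k F_{\mathrm{OPS}}$. The only mild subtlety worth spelling out is that $\zeta'_\Delta(0)$ is a $k$-independent constant---it is the log-determinant of the flat Dolbeault Laplacian---so $\pmb\delta_a \zeta'_\Delta(0)=0$ and it contributes nothing to the gradient. One could also record, as a consistency check, that $F_{\mathrm{OPS}}$ is actually a functional of the conformal class (i.e.\ invariant under $h\mapsto h+c$ for $c\in\mathbb{R}$), which at the level of the spectral function translates into $(1+\pmb\sigma_{-2})(T_{\mathrm{OPS}})$ being nontrivial only because of the $\log$ integrand in \eqref{eq:Tzeta-varprop}; this is already encoded in Prop.\ \ref{prop:Tzeta-varprop} and need not be reproved here.
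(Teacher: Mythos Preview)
Your proposal is correct and is precisely the approach the paper takes: the proposition is stated as an immediate application of Theorem \ref{thm:CM-intermsof-k} to the closed formula in Proposition \ref{prop:Tzeta-varprop}, with $j=-2$, $\tilde f=T_{\mathrm{OPS}}$, and $\delta=\nabla_\alpha$. Your identification of the parameters $(j,j-1)=(-2,-3)$ in the cyclic operators and the observation that the constant $\zeta'_\Delta(0)$ drops out under $\pmb\delta_a$ are exactly the points needed.
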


\subsection{Functional relations and hypergeometric functions}
\label{subsec:fun-relations}

To sum up, we have given two approaches to $\grad_h F$ 
by performing variation:
on heat traces and zeta functions in \S\ref{sec:var-FEH} and on the local
expression of $F$ in \S\ref{sec:closed-formulas-EH-OPS}.
To put the pieces together, we recall from
\S\ref{subsec:mod-gauss-T2}, \ref{subsec:mdocur-on-T^m-theta}:
\begin{align*}
    \grad_k F_{\mathrm{EH}} = \frac{2-m}{2} k^{-1} R_{\Delta_k} \,\,\,
    \grad_h F_{\mathrm{EH}} = \frac{2-m}{2} \exp[0,\mathbf x] ( R_{\Delta_k} ) 
    \,\,\, .
\end{align*}
When $m=2$:
\begin{align*}
    \grad_k F_{\mathrm{OPS}} = - k^{-1} R_{\Delta_k} , \,\,\,
    \grad_h F_{\mathrm{OPS}} = - \exp[0,\mathbf x] ( R_{\Delta_k} ) \,\,\,
    .
\end{align*}
By comparing the spectral  functions of the rearrangement operators,
we have
\begin{align*}
    K_{\mathrm{EH}} (y;m) &= \frac{2-m}{2} K_{\Delta_k} (y;m), \,\,
    H_{\mathrm{EH}} (y;m) = \frac{2-m}{2} H_{\Delta_k} (y_1,y_2;m), \,\,
    \\
    \tilde K_{\mathrm{EH}} (x;m) &= \frac{2-m}{2} \exp[0,x]
\tilde K_{\Delta_k} (x;m), \,\,
    \tilde H_{\mathrm{EH}} (x;m) = \frac{2-m}{2}  \exp[0,x_1+x_2]
    \tilde H_{\Delta_k} (x_1,x_2;m), \,\,
\end{align*}
In dimension two,
\begin{align*}
    K_{\mathrm{OPS}} (y) &=  -K_{\Delta_k} (y;2), \,\,
    H_{\mathrm{OPS}} (y) =  -H_{\Delta_k} (y_1,y_2;2), \,\, 
    \\
    \tilde K_{\mathrm{OPS}} (x) &=  \exp[0,x]
\tilde K_{\Delta_k} (x;2), \,\,
    \tilde H_{\mathrm{OPS}} (x) =   \exp[0,x_1+x_2]
    \tilde H_{\Delta_k} (x_1,x_2;2). \,\,
\end{align*}
Therefore functional relations  obtained in Propositions
\ref{prop:gradFOPS} and \ref{prop:gradFEH} for $K$ and  $H$ on the left hand sides,
can be transferred to functions $K_{\Delta_k}$ and $H_{\Delta_k}$ 
on the right hand sides. 
\begin{thm}
    \label{thm:CM-TDel-KDel}
    For one-variable functions, we have for $m \ge 2$
    \begin{align*}
        \frac{m-2}{2}  K_{\Delta_k} (y;m) 
        = (1+\pmb\sigma_{-m/2-1}) (T_{\Delta_k}) (y;m).
    \end{align*}
    When $m=2$,
    \begin{align*}
        K_{\Delta_k} (y;2) = (1+\pmb\sigma_{-2}) (T_{\mathrm{OPS}})(y;2),
    \end{align*}
    where $T_{\mathrm{OPS}}$ is defined in Prop. \ref{prop:Tzeta-varprop}.
\end{thm}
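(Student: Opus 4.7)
The plan is to obtain both identities by comparing two independent computations of the functional gradient $\grad_k F$, where $F = F_{\mathrm{EH}}$ for $m\ge 2$ and $F = F_{\mathrm{OPS}}$ for $m=2$. On the ``global'' side, the variation of the heat trace and zeta function performed in \S\ref{sec:var-FEH} yields
\begin{align*}
\grad_k F_{\mathrm{EH}} = \tfrac{2-m}{2}\, k^{-1} R_{\Delta_k},
\qquad
\grad_k F_{\mathrm{OPS}} = -k^{-1} R_{\Delta_k},
\end{align*}
by Corollary \ref{cor:gradFEH-V2-varprob} and Proposition \ref{prop:gradFOPS=k-RDel}. Plugging in the local expansion \eqref{eq:varprob-scacur-for-Delta-varphi} of $R_{\Delta_k}$ and reading off the coefficient of $k^{-m/2-1}(\nabla_\alpha^2 k)$, I obtain the $K$-function of each gradient, namely $K_{\mathrm{EH}} = \frac{2-m}{2}K_{\Delta_k}(\cdot;m)$ and $K_{\mathrm{OPS}} = -K_{\Delta_k}(\cdot;2)$.

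On the ``local'' side, the closed formulas of Propositions \ref{prop:varprob-Feh-closedform} and \ref{prop:Tzeta-varprop} put both $F_{\mathrm{EH}}$ and $F_{\mathrm{OPS}}$ into the canonical form $\varphi_0(k^j T(\mathbf y)(\delta k)\cdot \delta k)$ with derivations $\delta = \nabla_\alpha$, spectral function $T = T_{\Delta_k}(\cdot;m)$ and parameter $j = -m/2-1$ in the EH case, and $T = T_{\mathrm{OPS}}$ with $j=-2$ in the OPS case (after also accounting for the constant $\zeta'_\Delta(0)$ which drops out under $\pmb\delta_a$). Theorem \ref{thm:CM-intermsof-k} then identifies the $K$-function of the gradient as the averaged expression $K = -(1+\pmb\sigma_j)(T)$, giving the second computations
\begin{align*}
K_{\mathrm{EH}} = -(1+\pmb\sigma_{-m/2-1})(T_{\Delta_k}),
\qquad
K_{\mathrm{OPS}} = -(1+\pmb\sigma_{-2})(T_{\mathrm{OPS}}).
\end{align*}

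Equating the two expressions for $K_{\mathrm{EH}}$ yields the first identity $\frac{m-2}{2}K_{\Delta_k}(y;m) = (1+\pmb\sigma_{-m/2-1})(T_{\Delta_k})(y;m)$, while equating the two expressions for $K_{\mathrm{OPS}}$ immediately gives the second. The argument is essentially a ``commutative diagram'' check: local rearrangement followed by variation must produce the same gradient as variation followed by extracting the functional density. No technical obstacle is anticipated beyond careful bookkeeping of the sign conventions, the shift in the parameter $j$ of $\pmb\sigma_j$, and the $k^{-1}$ factor relating $R_{\Delta_k}$ to $\grad_k F$. I emphasize that this theorem is a \emph{consistency} statement at the level of the abstract variational framework of \S\ref{sec:varcalwrptncvar}; the independent verification from the explicit hypergeometric formulas \eqref{eq:b2cal-KDeltak}--\eqref{eq:b2cal-HDeltak} and \eqref{eq:Tzeta-varprop}, which turns this into a nontrivial combinatorial identity among Gauss hypergeometric values, is the genuine ``hard part'' and is deferred to \S\ref{sec:ver-of-fun-rel}.
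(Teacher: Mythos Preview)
Your proposal is correct and follows essentially the same approach as the paper: derive the identities by equating the two computations of $\grad_k F$ coming from Corollary~\ref{cor:gradFEH-V2-varprob}/Proposition~\ref{prop:gradFOPS=k-RDel} on one side and Propositions~\ref{prop:gradFEH}/\ref{prop:gradFOPS} (via Theorem~\ref{thm:CM-intermsof-k}) on the other, with the explicit hypergeometric verification deferred to \S\ref{sec:ver-of-fun-rel}. Your bookkeeping of signs, the parameter $j=-m/2-1$ (resp.\ $j=-2$), and the role of the constant $\zeta'_\Delta(0)$ are all in order.
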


\begin{thm}
    \label{thm:CM-KDel-HDel}
    For all $m\ge 2$, the spectral function defined in Eq. 
    \eqref{eq:b2cal-KDeltak} and 
    \eqref{eq:b2cal-HDeltak} in terms of hypergeometric geometric functions
    fulfil the functional relation:
    \begin{align}
        \label{eq:CM-funrel-sym-ver-k}
        H_{\Delta_k}(y_1,y_2;m) = (1+ \pmb\sigma_{-m/2-2} -\pmb\sigma_{-m/2-2}^2)    
        \cdot \blacksquare^+ (K_{\Delta_k}(y;m) ) .
    \end{align}
    For the log-Weyl factor side, recall $\tilde K_{\Delta_k}$ and 
    $\tilde H_{\Delta_k}$ defined in Eq. \eqref{eq:RDeltak-from-k-logk},  
    denote by
    \begin{align}
        \label{eq:sdelk-to-delk-CM-funrel}
        \mathsf{\tilde  K}_{\Delta_k}(x;m)  = \exp[0,x]
        \tilde K_{\Delta_k} (x;m), \,\,\,
        \mathsf{\tilde H}_{\Delta_k} (x_1,x_2;m) = \exp[0,x_1+x_2]
        \tilde H_{\Delta_k} (x_1,x_2;m), \,\,\,
    \end{align}
    then
    \begin{align}
        \mathsf{\tilde H}_{\Delta_k} (x_1,x_2;m) =
        (1+\pmb\tau_{-m/2+1} - \pmb\tau_{-m/2+1}^2) \cdot\blacktriangle^+   
        (\mathsf{\tilde  K}_{\Delta_k}(x;m) )
        \label{eq:CM-funrel-sym-ver-h}
    \end{align}
\end{thm}
\begin{rem}
    In dimension $m=2$, \eqref{eq:CM-funrel-sym-ver-h} reads:
    \begin{align*}
        \mathsf{\tilde H}_{\Delta_k} (x_1,x_2) &=
        (1+\pmb\tau_{0} - \pmb\tau_{0}^2) \cdot\blacktriangle^+   
        (\mathsf{\tilde  K}_{\Delta_k}(x) ) \\
        &= K_{\Delta_k}[x_1, x_1+x_2] +
    K_{\Delta_k}[-x_1, -x_1-x_2]   - K_{\Delta_k}[-x_1, x_2]    \\
&=
\frac{ K_{\Delta_k} (x_1+x_2) - K_{\Delta_k} (x_1)}{ x_1} +
\frac{ K_{\Delta_k} (-x_1-x_2) - K_{\Delta_k} (-x_2)}{ -x_1} 
\\ & - 
\frac{ K_{\Delta_k} (x_1) - K_{\Delta_k} (-x_1)}{x_1 + x_2} .
    \end{align*}
    Provided the fact that $ \pmb\tau_0( K_{\Delta_k} ) =  K_{\Delta_k} $ 
    (that is $K_{\Delta_k}$ is an even function),  the right hand
    side above recovers exactly the original  version 
\cref{eq:intro-CM-original} of Connes-Moscovici.
\end{rem}

    The two versions of the Connes-Moscovici type relation 
    Eq. \eqref{eq:CM-funrel-sym-ver-k} and Eq. \eqref{eq:CM-funrel-sym-ver-h}
represent the same equation with respect to the change of variable
$k = h = \log k$. However, the verification of the equivalence  are much less
straightforward.
\begin{prop}
    \label{prop:k-2-h-CMrel}
   Let us simply define the functions $\mathsf{\tilde K}_{\Delta_k}$ and 
   $\mathsf{\tilde H}_{\Delta_k}$ in terms of $K_{\Delta_k}$ and 
   $H_{\Delta_k}$ according to  
   \cref{eq:RDeltak-from-k-logk,eq:sdelk-to-delk-CM-funrel} without any
   geometric setting.
The functional relations Eq. \eqref{eq:CM-funrel-sym-ver-k} implies 
Eq. \eqref{eq:CM-funrel-sym-ver-h} 
provided the fact \footnote{It follows from the relations in
    Theorem \ref{thm:CM-TDel-KDel}.}
that $\pmb\sigma_{-m/2-1}(K_{\Delta_k}) = K_{\Delta_k},$.
\end{prop}
\begin{proof}
Because of the cyclic property: $\pmb\tau^3_{-m/2+1}=1$ and 
$\pmb\sigma^3_{-m/2-1} =1$, Eq. \eqref{eq:CM-funrel-sym-ver-k} and
\eqref{eq:CM-funrel-sym-ver-h} are equivalent to 
\begin{align}
    \label{eq:eqversion-CM-funrel-sym-ver}
    (1+\pmb\sigma_{-m/2-2}) (H_{\Delta_k})
    = 2 \blacksquare^+(K_{\Delta_k})  , \,\,\,
    \text{resp.} \,\,\,
    (1+\pmb\tau_{-m/2+1}) (\mathsf{\tilde H}_{\Delta_k})
    = 2 \blacktriangle^+( \mathsf{\tilde K}_{\Delta_k})  . 
\end{align}
We now assume the first relation in \eqref{eq:eqversion-CM-funrel-sym-ver} and would like to prove the second one. Starting with 
\begin{align*}
    \blacktriangle^+( \mathsf{\tilde K}_{\Delta_k}) (x_1, x_2)
    =(( \exp[0,z])^2 K_{\Delta_k}(z))[x_1, x_1+x_2]_z
    =
    ( Q^{\mathrm{I}}   + Q^{\mathrm{II}} + Q^{\mathrm{III}}) (x_1, x_2)  
    ,
\end{align*}
where $Q^{\mathrm{I}}$ to $Q^{\mathrm{III}}$ are obtained by applying the Leibniz
property of divided differences:
\begin{align}
    \begin{split}
        Q^{\mathrm{I}} (x_1 ,x_2) & =  
        \exp[0,x_1, x_1+x_2]
        \exp [0,x_1+x_2] K_{\Delta_k}(e^{x_1+x_2} )  ,
        \\
        Q^{\mathrm{II}} (x_1 ,x_2) & = \exp[0,x_1] 
        \exp[0,x_1, x_1+x_2] 
        K(e^{x_1}),
        \\
        Q^{\mathrm{III}} (x_1 ,x_2) & =    \exp[0,x_1] \exp[0,x_1+x_2]
        K(e^{z}) [x_1, x_1+x_2]_z,
    \end{split}
    \label{eq:QI-QIII}
\end{align}
here we have used composition rule of divided differences:
$(\exp[0,z]) [x_1 , x_1+x_2]_z = \exp[0,x_1, x_1+x_2]$.
Also, we will freely use the following fact 
due to the multiplicative nature of the exponential function:
\begin{align*}
    e^{v} (\exp[u_1, \dots, u_n]) = \exp[u_1+v, \dots , u_n+v],
    \,\,\, u_1, \ldots,u_n, v \in \R.
\end{align*}
On the other side, with Eq. \eqref{eq:RDeltak-from-k-logk}, we see that 
\begin{align*}
    \mathsf{\tilde H}_{\Delta_k} (x_1,x_2) = \exp[0,x_1+x_2] 
    \tilde H_{\Delta_k} (x_1,x_2) =
    f_{\mathsf H} (x_1, x_2) H_{\Delta_k} (e^{x_1}, e^{x_2})
    + 2 Q^{\mathrm{I}} (x_1 ,x_2),
\end{align*}
with
\begin{align}
    f_{\mathsf H} (x_1 ,x_2) =  \exp[0,x_1] \exp[x_1,x_1+x_2] \exp[0,x_1+x_2]. 
    \label{eq:f_H-defn}
\end{align}
Now we claim that
\begin{align}
    \label{eq:claim-fH-QIII}
    f_{\mathsf H}  = \pmb\tau^2_{3}(f_{\mathsf H}) , \,\,\,
    Q^{\mathrm{II}} = \pmb\tau^2_{-m/2+1} ( Q^{\mathrm{I}} ). 
\end{align}
If true, we have 
$(1+\pmb\tau^2_{-m/2+1}) ( Q^{\mathrm{I}} ) =  Q^{\mathrm{II}} + Q^{\mathrm{I}} $
and
\begin{align*}
    \pmb\tau^2_{-m/2+1} ( f_{\mathsf H} \cdot H^{\exp}_{\Delta_k}  ) =
    \pmb\tau^2_{3}(f_{\mathsf H}) \cdot \pmb\tau^2_{-m/2+1} (H^{\exp}_{\Delta_k} )
    = f_{\mathsf H} \cdot \pmb\tau^2_{-m/2+1} (H^{\exp}_{\Delta_k} ),
\end{align*}
where $H^{\exp}_{\Delta_k} (x_1,x_2) \defeq H_{\Delta_k}(e^{x_1}, e^{x_2})$.
Then
\begin{align*}
    &\,\,
    (1+\pmb\tau^2_{-m/2-1})   ( f_{\mathsf H} \cdot H^{\exp}_{\Delta_k}  )
  (x_1, x_2) 
   =
  f_H (x_1, x_2)\cdot 
  (1+\pmb\tau_{-m/2+1})  (H^{\exp}_{\Delta_k} ) (x_1, x_2) 
  \\
  =&\,\,
  f_H (x_1, x_2)\cdot 
  (1+ \pmb\sigma_{-m/2-2})  (H_{\Delta_k} ) (y_1, y_2) 
 =  f_H (x_1, x_2)\cdot 
 2 \blacksquare^+ (K_{\Delta_k}) (y_1, y_2) 
 \\
 = &\,\,
 2 f_H (x_1, x_2) (\exp[x_1,x_1+x_2])^{-1} K_{\Delta_k}(e^z)[x_1, x_1+x_2]  
 = 2 Q^{\mathrm{III}} (x_1 ,x_2),
\end{align*}
which concludes the proof of the second relation in Eq.
\eqref{eq:eqversion-CM-funrel-sym-ver}. In the calculation above, we have used
our assumption (first equation in  Eq.
\eqref{eq:eqversion-CM-funrel-sym-ver}) and also Prop.
\ref{prop:compare-blacksq-blacktri} to carefully switch the variational operators
from $\set{\pmb\sigma_{-m/2-2}, \blacksquare^+}$ to 
$\set{\pmb\tau_{-m/2+1}, \blacktriangle^+}$.

Let us check the claim \eqref{eq:claim-fH-QIII}. Recall for $j\in\R$, 
$\pmb\tau_j^2$ is implemented by the substitutions
\begin{align*}
    x_1 \rightarrow x_2, \,\,\, x_2 \rightarrow -x_1-x_2, \,\,\,
    x_1+ x_2 \rightarrow -x_1
\end{align*}
followed by multiplying $e^{jx_1}$. Therefore
\begin{align*}
    \pmb\tau_3^2(f_{\mathsf H})  (x_1,x_2)
    & = e^{3x_1} \exp[0,x_2]
    \exp[x_2, -x_1] \exp[0,-x_1]  \\
    & = \exp[x_1,x_1+x_2]
    \exp[0,x_1+x_2] \exp[0,x_1] =
    f_{\mathsf H} (x_1,x_2),
\end{align*}
and 
\begin{align*}
    \pmb\tau_{-m/2+1} (Q^{\mathrm{I}}) (x_1,x_2)
    & = e^{(-m/2+1) x_1}
    \exp[0,x_2,-x_1] \exp[0,-x_1] K_{\Delta_k}( e^{-x_1})
    \\
   &=
   e^{2x_1}  \exp[0,x_2,-x_1] \exp[0,-x_1] 
   e^{(-m/2-1) x_1}
   K_{\Delta_k}( e^{-x_1})
    \\
  &=  \exp[x_1 ,x_2+x_1, 0] \exp[0,x_1]  
  \left( 
  y_1^{-m/2-1}
  K_{\Delta_k}(y_1^{-1})
  \right)  
    \\
  &=  \exp[x_1 ,x_2+x_1, 0] \exp[0,x_1]  
    K_{\Delta_k}(y_1) 
    = Q^{\mathrm{II}} (x_1 , x_2) ,
\end{align*}
notice that we have used the property $\pmb\sigma_{-m/2-1}(K_{\Delta_k})
= K_{\Delta_k}$  to complete the argument.
\end{proof}

\section{Verification of the functional relations}
\label{sec:ver-of-fun-rel}

It is highly nontrivial to see why the explicit functions 
$K_{ \Delta_k} $ and $H_{ \Delta_k} $ in 
\cref{eq:b2cal-KDeltak,eq:b2cal-HDeltak}
fulfill all the functional relations derived in the previous section. 
In the last part of the paper, 
we give a computer-aid free verification, to further illustrate
the power of hypergeometric family $H_\alpha$, $\alpha \in  \Z^n_{ >0}$
and the variational operators discussed in \S \ref{sec:varcalwrptncvar}.

\subsection{Hypergeometric functions appeared in the rearrangement process}
\label{subsec:hygeo-fun-in-rearr-lem}

Since only the second heat coefficient is concerned, we only need those of
one (Gauss hypergeometric functions $\pFq21$) 
and two variable (Appell's $F_1$ functions).
For $a,b,c\in\Z_+$, we denote:
\begin{align}
    \label{eq:hgeofun-onevar-K_acm}
    H_{a,b}(z;m) &= 
     \frac{ \Gamma(d_m )}{\Gamma(a) \Gamma(b) }
    \int_0^1
    (1-t)^{a-1} t^{b-1} (1- z t)^{-d_m } d t \\
  &=    \frac{\Gamma(  d_m   )}{ \Gamma(a+b)} 
  \pFq21 (d_m ,b;b+a;z),      \nonumber
\end{align}
where $m$ is the dimension and $d_m = a+b + m/2-2$. In a like manner, 
\begin{align}
\label{eq:Habc}
    & H_{a,b,c}(z_1, z_2;m )  \\
    = & \; 
    \frac{ \Gamma(d_m )}{\Gamma(a) \Gamma(b) \Gamma(c) }
       \int_0^1 \int_0^{1-t} (1 - t -u)^{a-1}
t^{b-1} u^{c-1} (1- z_1 t - z_2 u)^{-d_m } dudt 
 \nonumber  \\ 
=&\;   \frac{\Gamma(  d_m   )}{ \Gamma(a+b +c)} 
    F_1(d_m ;b,c,a+b+c; z_1 , z_2) ,   \nonumber  
\end{align}
where $d_m = a+b + c + m/2-2$. 
They fulfill the following differential  and divided difference relations 
(cf.  \cite[Theorem 3.3]{Liu:2018aa}):
\begin{align}
    \label{eq:Hab-to-Ha1}
    H_{a,c}(z;m) &= \frac{1}{(c-1)!} \frac{d^{c-1}}{dz} H_{a,1}(z;m) ,
    \\
    \label{eq:Habc-to-Ha11}
    H_{a,b,c}(z_1, z_2;m) &=  
    \frac{\partial_{z_1}^{b-1}}{(b-1)!} 
    \frac{\partial_{z_2}^{c-1}}{(c-1)!} 
    H_{a,1,1}(z_1, z_2;m)   ,
    \\
    \label{eq:Ha11-to-Ha1}
    H_{a,1,1}(z_1, z_2;m)   &= (z H_{a+1,1}(z;m))[z_1 , z_2]_z
    .
\end{align}
Since $F_1(\alpha;\beta,\beta';\gamma;z_1,z_2) =
F_1(\alpha;\beta',\beta;\gamma;z_2,z_1)$, we have
\begin{align}
    H_{a,b,c}(z_1, z_2;m) &=  H_{a,c,b}(z_2, z_1;m).  
    \label{eq:Habc-vs-Hacb}
\end{align}

An interesting takeaway of the relations above is the fact that
one can reach any
$H_{ a,b} $ or $H_{ a, b,c} $ from $H_{a,1}(z;m)$ via 
basic algebraic manipulations and differentiations, while
the Gauss hypergeometric functions $H_{a,1}(z;m)$ admits fast
evaluation, as functions in $a$,  $m$ and $z$, in many CASs (computer algebra
systems), such \textsf{Mathematica}. The observations allows us to achieve
symbolic verification of the functional relations for all dimensions 
$m = 2,3, \ldots$, in part I \cite{Liu:2018aa}.

\subsection{Further recurrence relations}
\label{subsec:fur-recurrence-relations}

We now discuss how the operators $\set{\pmb\eta, \pmb\sigma_j, \blacksquare^+}$ 
act on the $H$-family. 
Since the precise functions appeared in the rearrangement lemma
are $H_{a,b}(1-y;m)$ and $H_{a,b,c}(1-y_1, 1- y_1 y_2;m)$,
we introduce the change of variables 
\begin{align}
    \label{eq:zToy-ch-var}
    z =1-y, \,\,\, z_1 = 1-y_1, \,\,\, z_2 = 1- y_1 y_2 
\end{align}
and freely use abbreviations like
\begin{align*}
    H_{ a , b}  \defeq H_{ a , b} (z;m) , \, \, \, \,  
    H_{ a , b ,c}  \defeq H_{ a , b ,c } (z_1 , z_2;m) 
\end{align*}
when no confusion arises.
For example, for $f \in C(\R_+)$, we have
\begin{align*}
    \blacksquare^+(f)(y_1, y_2) = f[y_1, y_1 y_2] = f[1-z_1, 1-z_2] = -f[z_1,z_2] 
\end{align*}
and Eq. \eqref{eq:Ha11-to-Ha1} becomes:
\begin{align*}
    H_{a,1,1} = - \blacksquare^+(z H_{a+1,1}).
\end{align*}

The contraction map $\pmb\eta$ in Eq. \eqref{eq:defn-bold-eta} is obtained by
setting $y_2\rightarrow y_1^{-1}$, that is $ y_1 y_2 \rightarrow 1$ or $z_2
\rightarrow 0$ according to Eq.  \eqref{eq:zToy-ch-var}.
It is not difficult to see $ H_{a,b,c}(z,0;m) = H_{a+c,b}(z;m)$, that is, 
\begin{align}
    \pmb\eta \brac{ H_{a,b,c} } = H_{a+c,b} .
    \label{eq:Habcz0to-Hab}
\end{align}

For the cyclic permutation $\pmb\sigma_0$, the transformations  are 
$y \rightarrow y^{-1}$ for $H_{ a , b} $
and $y_1\rightarrow (y_1 y_2)^{-1}$ and $y_2 \rightarrow y_1$
for $H_{ a, b , c} $,
which are linear fractional transformations in terms of $z$, $z_1$ and $z_2$:
\begin{align*}
    z \rightarrow \frac{-z}{1-z} ,
    \,\,\,  z_1 \rightarrow \frac{z_2}{z_2-1},\,\,\,
    z_2 \rightarrow \frac{z_2-z_1}{z_2-1}.
\end{align*}
\begin{prop}
    \label{prop:sigma0-acts-on-Habc}
    The cyclic operator $\pmb\sigma_0$ indeed permutes the indices $\set{a,b,c}$ in
    the $H$-family:
    \begin{align}
    \label{eq:sigma0-acts-on-Hab}
        \pmb\sigma_0( H_{a,b}(\cdot ;m) )(z) : = 
        H_{a,b}\brac{ \frac{1}{1-z} ;m } = (1-z)^{a+b+m/2-2} H_{b,a}(z;m)       
        .
    \end{align}
    For two variable functions,   
    \begin{align}
    \label{eq:sigma0-acts-on-Habc}
    \pmb\sigma_0( H_{a,b,c}(\cdot, \cdot ;m) )(z_1,z_2) 
    & : = 
        H_{a,b,c}\brac{ 
            \frac{z_2}{z_2-1}, \frac{z_2-z_1}{z_2-1}
        ;m }\\
   &= (1-z_2)^{a+b+c+m/2-2} H_{b,c,a}(z_1,z_2;m).   
   \nonumber
    \end{align}
\end{prop}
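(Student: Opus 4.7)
The plan is to deduce both identities directly from the Euler integral representations~\eqref{eq:hgeofun-onevar-K_acm} and~\eqref{eq:Habc} via a single affine change of variables on the standard simplex. The two-variable case is the substantive one; the one-variable statement is recovered by specialization (formally, setting $z_1=0$ in the two-variable substitution collapses the first slot to the one-variable substitution, which is the Pfaff transformation of the Gauss hypergeometric function $\pFq21$; for this specialization to line up with the stated formula the one-variable transformation is read as $z\mapsto z/(z-1)$).

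For the two-variable identity I would substitute $w_1=z_2/(z_2-1)$, $w_2=(z_2-z_1)/(z_2-1)$ into the Euler integral for $H_{a,b,c}(w_1,w_2;m)$ and perform the affine change $(t,u)=(1-s-v,\,v)$. The Jacobian is one and the standard simplex $\{s,v\ge 0,\ s+v\le 1\}$ is preserved. The beta-weight $(1-t-u)^{a-1}t^{b-1}u^{c-1}$ becomes $s^{a-1}(1-s-v)^{b-1}v^{c-1}$, which, after a relabeling $(s,v)\leftrightarrow(t,u)$ of the two integration variables, is exactly the beta-weight for $H_{b,c,a}$. So the permutation of the three upper indices $(a,b,c)\to (b,c,a)$ is produced automatically by the simplex reflection that cyclically permutes the barycentric coordinates.

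All of the geometry lives in the affine factor $1-w_1 t - w_2 u$. A direct algebraic simplification, using the specific forms of $w_1$ and $w_2$, yields
\begin{align*}
1 - w_1 t - w_2 u \;=\; \frac{1 - z_2 s - z_1 v}{1-z_2},
\end{align*}
so raising to the $-d_m$ power extracts exactly the overall factor $(1-z_2)^{d_m}$ asserted in~\eqref{eq:sigma0-acts-on-Habc}, while the remaining kernel $(1-z_1 v - z_2 s)^{-d_m}$ is the kernel of $H_{b,c,a}(z_1,z_2;m)$. The one-variable statement is handled by the analogous one-dimensional simplex reflection $t\mapsto 1-s$, which produces the factor $(1-z)^{d_m}$ and swaps the exponents $a\leftrightarrow b$ in the beta-weight.

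The only substantive step in either case is the affine simplification above; it is straightforward but notationally fussy, and all the remaining content is bookkeeping of exponents, the Jacobian, and the invariance of the simplex. No further hypergeometric identities are needed beyond the Euler integral representation itself.
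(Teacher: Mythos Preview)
Your argument is correct. The paper states this proposition without proof, presumably because both identities are classical: the one-variable statement is the Pfaff transformation for $\pFq21$, and the two-variable one is its well-known Appell $F_1$ analogue. Your derivation via the Euler integrals and an affine simplex reflection is exactly the standard proof of those transformation formulas.

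Two minor remarks. First, you are right that the one-variable substitution should be $z\mapsto z/(z-1)$; the paper's $z\to (1-z)^{-1}$ in the display preceding the proposition and inside \eqref{eq:sigma0-acts-on-Hab} is a typo (indeed $y\to y^{-1}$ with $z=1-y$ gives $z\to z/(z-1)$, and this is what the Pfaff identity requires). Second, in the two-variable case your relabeling must be the cross-swap $s\mapsto u$, $v\mapsto t$ (not $s\mapsto t$, $v\mapsto u$): with that choice the beta-weight $s^{a-1}(1-s-v)^{b-1}v^{c-1}$ becomes $u^{a-1}(1-t-u)^{b-1}t^{c-1}$ and the kernel $(1-z_1 v - z_2 s)$ becomes $(1-z_1 t - z_2 u)$, which matches $H_{b,c,a}(z_1,z_2;m)$ exactly. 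Your text is consistent with this reading, but the ``$\leftrightarrow$'' notation is ambiguous enough that it is worth stating explicitly.
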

In particular, we have:
\begin{align*}
    H_{a,1} = (1-z)^{m/2-1} \pmb\sigma_0 (H_{1,a} ), \,\,\,
    H_{a,1,1} = (1-z_2)^{m/2} \pmb\sigma_0 (H_{1,a,1} ),
\end{align*}
where $H_{1,a}$ and $H_{1,a,1}$ are derivatives of $H_{1,1}$ and $H_{1,1,1}$,
see Eq. \eqref{eq:Hab-to-Ha1} and \eqref{eq:Habc-to-Ha11}.
As a consequence, we have obtained an algorithm to write any $H_{ a,b} $ 
(resp. $H_{ a , b, c} $) as derivatives of $H_{ 1, 1} $ (resp. $H_{ 1,1,1 } $).

\begin{cor}
    \label{cor:Hb+1ac-Hbac}
    When $a,b,c$ and $a+b+m/2-2$ are all non-zero, we have:
    \begin{align}
    \label{eq:Hb+1ac-Hbac}
        \begin{split}
        b H_{b+1,a} &= (a+b+m/2-2) H_{a,b}-a(1-z) H_{b,a+1} ,
        \\
    b H_{b+1,c,a} &=
    (a+b+c+m/2-2)H_{b,c,a} - c(1-z_1) H_{b,c+1,a} - a(1-z_2) H_{b,c,a+1} .
        \end{split}
    \end{align}
\end{cor}
\begin{proof}
    They are obtained by computing $d/dz(\pmb\sigma_0(H_{a,b}) )$ and
$\partial_{z_2}(\pmb\sigma_0(H_{a,b,c}))$ in two ways respectively. We
will proof the two-variable case and left the rest to the reader. Denote
\begin{align*}
    J_{\pmb\sigma_0}(z_1, z_2) \defeq (J_1 (z_1, z_2), J_2 (z_1, z_2) ) 
      \defeq      \left( 
            \frac{z_2}{z_2-1}, \frac{z_2-z_1}{z_2-1}
            \right)
            .
\end{align*}
Then
\begin{align}
    &\,\,
    \partial_{z_2} (\pmb\sigma_0(H_{a,b,c})(z_1,z_2) ) =
        \partial_{z_2}
        \brac{
            H_{a,b,c}
            \left( 
J_{\pmb\sigma_0}(z_1, z_2)
            \right)
        }   \nonumber \\
        =&\,\, 
        (\partial_1 H_{a,b,c}  )
\brac{
J_{\pmb\sigma_0}(z_1, z_2)
        }  (\partial_{z_2} J_1) 
        +
        (\partial_2 H_{a,b,c}  )
\brac{
J_{\pmb\sigma_0}(z_1, z_2)
        }  (\partial_{z_2} J_2) 
     \nonumber    \\
        = &\,\,
        -\frac{1}{\left(z_2-1\right){}^2}
       b \pmb\sigma_0(H_{a,b+1,c}) (z_1,z_2) +
        \frac{z_1-1}{\left(z_2-1\right){}^2}
        c \pmb\sigma_0(H_{a,b,c+1}) (z_1,z_2) 
        \nonumber \\
= &\,\,
- (1-z_2)^{a+b+c+m/2-3} \left( 
    b H_{b+1,c,a} +c (1-z_1) H_{b,c+1,a}
\right),
    \label{eq:dsingmaHabc-1}
\end{align}
where we have used the differential relations from Eq \eqref{eq:Habc-to-Ha11}: 
$\partial_1 H_{a,b,c}   = bH_{a,b+1,c}$ and
$\partial_2 H_{a,b,c}   = cH_{a,b,c+1}$
and Eq. \eqref{eq:sigma0-acts-on-Habc} to remove the $\pmb\sigma_0$ in the
third line.

On the other hand, we can first apply Eq. \eqref{eq:sigma0-acts-on-Habc} and
then carry out the differentiation:
\begin{align}
    &\,\,
    \partial_{z_2} (\pmb\sigma_0(H_{a,b,c})(z_1,z_2) ) = \partial_{z_2}
    \brac{
        (1-z_2)^{a+b+c+m/2-2} H_{b,c,a} (z_1,z_2)
    } \nonumber \\
    =&\,\,
    (1-z_2)^{a+b+c+m/2-3} \left( 
        (a+b+c+m/2-2)     H_{b,c,a} + (1-z_2) a H_{b,c,a+1}
    \right)
    .
    \label{eq:dsingmaHabc-2}
\end{align}
We complete the proof by equating Eq. \eqref{eq:dsingmaHabc-1} and  
\eqref{eq:dsingmaHabc-2}.
\end{proof}

On the other hand, the differential equations attached to $\pFq21$ and $F_1$
leads to the following recurrence relations of the $H$-family. The
hypergeometric ODEs of $\pFq21$ are transformed into
\begin{align}
    \label{eq:Hab-ODE}
     B_2 H_{a,b+2} +
    B_1 H_{a,b+1} +
     B_0 H_{a,b} =0
\end{align}
with
\begin{align*}
    B_2 = b (b+1) (1-z) z, \,\,\,
    B_1 = b (-z(a+2b+m/2-2)+a+b ),\,\,\,
    B_0 = - b(a+b+m/2-2).
\end{align*}
The PDE system of Appell's $F_1$ reads:
\begin{align}
    \label{eq:Habc-PDE-1}
    C_{2,0}    H_{a,b+2,c} + 
    C_{1,1}  H_{a,b+1,c+1}
  +  C_{1,0} H_{a,b+1,c} + C_{0,1} H_{a,b,c+1} + C_{0,0} H_{a,b,c} &= 0
    \\
    C_{0,2}    H_{a,b,c+2} + 
   \tilde C_{1,1}  H_{a,b+1,c+1}
 +  \tilde  C_{1,0} H_{a,b+1,c} + \tilde C_{0,1} H_{a,b,c+1} 
   + \tilde C_{0,0} H_{a,b,c} &= 0
    \label{eq:Habc-PDE-2}
\end{align}
where the coefficients are given as below:
\begin{align*}
    &
    C_{2,0} = b (b+1) (1-z_1) z_1, \,\,\,
    C_{1,1}= b c (1-z_1 ) z_2, \,\,\,
    C_{0,1}= -b c z_2,
    \\ &
    C_{1,0} = b \left(- z_1 \left(a+2 b+c+ m/2 -1\right)+a+b+c\right)
    ,\,\,\,
    C_{0,0} = -b \left(a+b+c+ m/2 -2\right),
\end{align*}
 and
\begin{align*}
    &
    C_{0,2} = c (c+1) (1-z_2) z_2, \,\,\,
    \tilde   C_{1,1}= b c (1-z_2 ) z_1, \,\,\,
    \tilde C_{0,1}= -b c z_1,
    \\ &
    C_{0,1} = c \left(- z_1 \left(a+ b+2 c+ m/2 -1\right)+a+b+c\right)
    ,\,\,\,
    C_{0,0} = -c \left(a+b+c+ m/2 -2\right)
    .
\end{align*}
Some remarks:
\begin{enumerate}
    \item Eq. \eqref{eq:Habc-PDE-1} and  \eqref{eq:Habc-PDE-2} are equivalent
        provided the fact that $H_{a,b,c}(z_1, z_2;m) = H_{a,c,b}(z_2, z_1;m)$.
    \item By applying $\pmb\sigma_0$ onto Eq. \eqref{eq:Hab-ODE} and 
        \eqref{eq:Habc-PDE-1}, one obtains another set of 
        relations among $\set{ H_{a,b} , H_{a+1,b}  ,H_{a+2,b} }$ and 
        $\set{ H_{a,b,c} , H_{a+1,b,c}  ,H_{a+2,b,c} }$ which provide new
        routes for the reduction of $H_{a,1}$ (resp. $H_{a,1,1}$) to $H_{1,1}$
        (resp. $H_{1,1,1}$).      

\end{enumerate}

The recurrence relations allow us to express $H_{a,b}$ (resp. $H_{a,b,c}$) as
linear combinations of $H_{1,1}$ and $H_{1,2}$ 
(resp. $H_{1,1,1}$, $H_{1,2,1}$, $H_{1,1,2}$ and $H_{1,2,2}$) with rational
function coefficients. For the two variable functions, one can further remove
$H_{1,2,2}$ using the fact that  $H_{a,1,1} = (z H_{a+1,1}) [z_1, z_2]$ is
a divided difference.  In fact, 
\begin{align*}
    H_{a,2,1} & = \partial_{z_1}\left( 
    (z H_{a+1,1}) [z_1, z_2]
\right) = (z H_{a+1,1}) [z_1,z_1, z_2] ,
\\
H_{a,1,2} & = \partial_{z_2}\left( 
    (z H_{a+1,1}) [z_1, z_2]
\right) = (z H_{a+1,1}) [z_1,z_2, z_2] .
\end{align*}
Thus
\begin{align}
    H_{a,2,2} & = \partial_{z_1} \partial_{z_2} \left( 
    (z H_{a+1,1}) [z_1, z_2]
    \right) =
    (z H_{a+1,1}) [z_1,z_1, z_2, z_2] 
    \nonumber \\
    & = \frac{
    (z H_{a+1,1}) [z_1,z_1, z_2] 
-(z H_{a+1,1}) [z_1, z_2, z_2] 
    }{z_1 - z_2} 
     = \frac{
        H_{a,2,1} - H_{a,1,2}   
    }{z_1 - z_2}.
    \label{eq:Ha22-reduction}
\end{align}

As for the dependence on the dimension $m$ (assume $m\ge 2$), we recall from
\cite{LIU2017138} that: 
\begin{align*}
    H_{a,b}(z;m+2) & = a H_{a+1,b} + b H_{a,b+1},
    \\
    H_{a,b,c}(z_1,z_2;m+2) & = a H_{a+1,b,c} + b H_{a,b+1,c} + c H_{a,b,c+1},
\end{align*}
where $H_{a,b}\defeq H_{a,b}(z;m)$ and $H_{a,b,c}\defeq H_{a,b,c}(z_1,z_2;m)$.

\subsection{Initial values and relations}
Last but not least, we discuss  
 initial values and their relations:
\begin{align}
    H_{1,1}(z;m) &= \frac{\Gamma(m/2)}{\Gamma(2)} \int_0^1 (1-zt)^{-m/2} dt
    \nonumber \\ &=
    \begin{cases}
        - \log(1-u)[0,z]_u,  & m=2 \\
         \Gamma(m/2-1) (1-u)^{-m/2+1}[0,z]_u,  & m \neq 2
    \end{cases}
    \label{eq:H11m-z}
\end{align}
Notice that, for $m\neq 2$, one can use $\pFq21$ to make sense of $H_{0,1}(z;m)
= (1-z)^{1-\frac{m}{2}} \Gamma \left(\frac{m}{2}-1\right)$ even though the
integral representation in Eq. \eqref{eq:hgeofun-onevar-K_acm} diverges and then 
$H_{1,1} = H_{0,1}[0,z]$. In other words, if we define
\begin{align}
    \label{eq:H01-defn}
    H_{0,1}(z;m) & = 
    \begin{cases}
        - \log(1-z),  & m=2 \\
         \Gamma(m/2-1) (1-z)^{-m/2+1},  & m \neq 2
    \end{cases}
    \\
    H_{0,2} (z;m) & \defeq \frac{d}{dz} H_{0,1} (z;m) = 
    \Gamma (m/2) (1-z)^{-\frac{m}{2}}  .
    \label{eq:H02-defn}
\end{align}
then
\begin{align}
    \label{eq:H11-as-ddiff-H01}
    H_{1,1} = H_{0,1}[0,z],
\end{align}
and for $H_{1,2} \defeq H_{1,2}(z;m)$:
\begin{align}
    H_{1,2} &= \frac{d}{dz} H_{1,1} = \frac{d}{dz} H_{0,1}[0,z]
    = H_{0,1}[0,z,z] 
    =  \frac{H_{0,1}[z,z] - H_{0,1}[0,z]}{ z-0}
 \nonumber   \\
    &= z^{-1}\left( 
        \frac{d}{dz} H_{0,1} - H_{1,1}
    \right)   
    = 
    z^{-1}\left( 
        H_{0,2}  - H_{1,1}
    \right)
    .
    \label{eq:H21-as-H11}
\end{align}
We shall need the following initial relation later: for $m\neq 2$
\begin{align}
    \label{eq:init-H11-H12-H02}
    m H_{1,1}+2 z H_{1,2} -2 H_{0,2}[0,z] = (m-2) H_{1,1}
    +2 H_{0,2} - 2 H_{0,2}[0,z] =0 .
\end{align}
Too see the vanishing of the right hand side, note that
for $m\neq 2$, $(1-z)H_{0,2} = (-m/2+1) H_{0,1}$, 
applying the divided difference $[0,z]$ on both sides yields
\begin{align*}
    -H_{0,2} + H_{0,2}[0,z] = \frac{2-m}{2} H_{0,1}[0,z] =
    \frac{2-m}{2} H_{1,1}.
\end{align*}

\subsection{
Verification of Theorem \ref{thm:CM-TDel-KDel}
}
\label{subsec:2nd-proof-CM-T-K}

\begin{lem}
    \label{lem:reduction-needed-one-var}
    Let us abbreviate $H_{a,b} \defeq H_{a,b}(z;m)$.
All the required recurrence relations for the one-variable family are listed as
below.
    \begin{align}
\begin{split}
    & H_{2,1}= \frac{1}{2} m H_{1,1}+(z-1) H_{1,2}, \,\,\,
    H_{3,1}= \frac{1}{4} \left((m+2) H_{2,1}+2 (z-1) H_{2,2}\right), \,\,
    \\        
& 
H_{4,1}= \frac{1}{6} \left((m+4) H_{3,1}+2 (z-1) H_{3,2}\right)
.
\end{split}
        \label{eq:H21-31-41-remove}
    \end{align}
    \begin{align}
        \label{eq:H12-13-14-remove}
        \begin{split}
&        H_{1,3}= -\frac{ ((m+4) z-4) H_{1,2} +m H_{1,1}}{4 (z-1) z},\,\,\,
H_{1,4}= -\frac{ ((m+8) z-6) H_{1,3} +(m+2) H_{1,2}}{6 (z-1) z}
\\        
        & H_{2,3}= -\frac{ ((m+6) z-6) H_{2,2}+(m+2) H_{2,1}}{4 (z-1) z} 
        \end{split}
    \end{align}

    \begin{align}
        \label{eq:H22-32-remove}
        H_{2,2}= (m/2+1)
        H_{1,2}+2 (z-1) H_{1,3}, \,\,\,
        H_{3,2}= (m/4+1) H_{2,2}+ (z-1) H_{2,3}
        .
    \end{align}
\end{lem}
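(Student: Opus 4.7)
The plan is to derive each recurrence in \eqref{eq:H21-31-41-remove}--\eqref{eq:H22-32-remove} from one of three tools already assembled in the preceding subsection: the differentiation rule $\partial_z H_{a,b} = b\, H_{a,b+1}$ that follows from \eqref{eq:Hab-to-Ha1}, the index-shift identity of Corollary \ref{cor:Hb+1ac-Hbac}, namely
\begin{equation*}
    b\, H_{b+1,a} = (a+b+m/2-2)\, H_{a,b} - a(1-z)\, H_{b,a+1},
\end{equation*}
and the hypergeometric ODE \eqref{eq:Hab-ODE}. Each tool is an algebraic identity in the $H_{*,*}$, so the verifications are purely mechanical; the proposal below merely sorts the three blocks of the lemma by which identity is invoked.

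First, the three relations in \eqref{eq:H21-31-41-remove} follow immediately from Corollary \ref{cor:Hb+1ac-Hbac} with $a=1$ and the successive choices $b=1,2,3$: one reads off the $H_{b+1,1}$ term on the left and solves, obtaining $H_{2,1}$, $H_{3,1}$, $H_{4,1}$ as the claimed linear combinations of $H_{b,1}$ and $H_{b,2}$ with rational coefficients in $z$ and $m$ (the factors $\tfrac14$ and $\tfrac16$ in the statement come simply from clearing the common denominators $2$ and $3$).

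Second, the three relations in \eqref{eq:H12-13-14-remove} are exactly the hypergeometric ODE \eqref{eq:Hab-ODE} solved for the top-index term $H_{a,b+2}$: taking $(a,b)=(1,1)$ gives $H_{1,3}$, taking $(a,b)=(1,2)$ gives $H_{1,4}$, and taking $(a,b)=(2,1)$ gives $H_{2,3}$. In each case one rearranges $B_2 H_{a,b+2} = -B_1 H_{a,b+1} - B_0 H_{a,b}$ and divides by $B_2 = b(b+1)(1-z)z$, which accounts for the $4(z-1)z$ and $6(z-1)z$ denominators in the statement.

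Finally, for $H_{2,2}$ and $H_{3,2}$ in \eqref{eq:H22-32-remove} the most economical route is not a fresh application of either tool but the differentiation rule applied to the relations just derived in the first block: since $H_{a,2} = \partial_z H_{a,1}$ for $a=2,3$, one differentiates the expressions for $H_{2,1}$ and $H_{3,1}$ and then uses $\partial_z H_{b,1} = H_{b,2}$ and $\partial_z H_{b,2} = 2 H_{b,3}$ to land in the claimed form. There is no genuine obstacle in the whole lemma: the only subtlety is bookkeeping the prefactors $1/(c-1)!$ built into the definition \eqref{eq:Hab-to-Ha1} and the shifts $b \mapsto b+1$, $b \mapsto b+2$ in the ODE; once those are tracked, every line is one step of linear algebra.
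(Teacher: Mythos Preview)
Your proposal is correct and follows essentially the same approach as the paper's proof: block \eqref{eq:H21-31-41-remove} from Corollary \ref{cor:Hb+1ac-Hbac}, block \eqref{eq:H12-13-14-remove} from the ODE \eqref{eq:Hab-ODE}, and block \eqref{eq:H22-32-remove} by differentiating the first block. One small remark: as stated, Corollary \ref{cor:Hb+1ac-Hbac} has a typo (the term $H_{a,b}$ on the right should read $H_{b,a}$), and likewise the coefficient $B_1$ in \eqref{eq:Hab-ODE} should have $a+2b+m/2-1$ rather than $a+2b+m/2-2$; your description already implicitly uses the corrected forms (you write $H_{b,1}$, $H_{b,2}$), so the argument goes through as you indicate.
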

\begin{proof}
The first set \eqref{eq:H21-31-41-remove} is a special case of Lemma 
Corollary \ref{cor:Hb+1ac-Hbac}. The second set \eqref{eq:H12-13-14-remove}
comes from the differential equation \eqref{eq:Hab-ODE}. 
At last, we obtain  \eqref{eq:H22-32-remove}   
by differentiating the relations of $H_{2,1}$ and $H_{3,1}$ in
\eqref{eq:H21-31-41-remove}.
\end{proof}

At $y=1$, that is $z=1-y =0$, we have:
\begin{align*}
    H_{a,b}(0;m) =  \frac{\Gamma(a+b+m/2-1)}{\Gamma(a+b+1)}
    ,
\end{align*}
then
\begin{align*}
    K_{\Delta_k}(1;m) = -H_{2,1}(0;m) + \frac4m H_{3,1}(0;m)  
    = \Gamma(m/2)(4-m)/12
\end{align*}
moreover, with Eq. \eqref{eq:H02-defn}, we rewrite:
\begin{align*}
    - \pmb\eta(K_{\Delta_k}) u^{-m/2}[1,y]_u =  \frac{4-m}{12}
    \Gamma(m/2) (1-u)^{-m/2}[0,z]_u =
    \frac{4-m}{12} H_{0,2}[0,z]
    .
\end{align*}
We recall
\begin{align}
    \label{eq:recall-KDel-TDel}
    K_{\Delta_k}   = -H_{2,1}  + \frac4m H_{3,1}    ,\,\,\, 
    T_{\Delta_k}  = 
    \frac{4-m}{12} H_{0,2}[0,z]
    -\frac{4 (1-z) H_{3,2}}{m}+\frac{2 (m+2) 
    H_{3,1}}{m}-\frac{8 H_{4,1}}{m},
\end{align}
where $K_{\Delta_k}\defeq K_{\Delta_k}(y;m) = K_{\Delta_k}(1-z;m)$, similar
meaning for $T_{\Delta_k}$, and $H_{a,b} \defeq H_{a,b}(z;m)$ as before.

\begin{prop}
        \label{prop:tobechecked-KDelvs-TDel}
    Consider the spectral functions  $K_{\Delta_k}$ and $T_{\Delta_k}$ given  in
    Eq. \eqref{eq:recall-KDel-TDel}, then for $m>2$, the relation
    \begin{align}
        \label{eq:tobechecked-KDelvs-TDel}
        \frac{m-2}{2} K_{\Delta_k} = (1+\pmb\sigma_{-m/2-1})(T_{\Delta_k})
    \end{align}
    derived in Theorem $\ref{thm:CM-TDel-KDel}$
    can be reduced to the following one among the initial values of the
    hypergeometric family, cf. Eq. \eqref{eq:init-H11-H12-H02}:
\begin{align*}
    m H_{1,1}+2 z H_{1,2} -2 H_{0,2}[0,z] = 0.
\end{align*}
For $m=2$, Eq. \eqref{eq:tobechecked-KDelvs-TDel} becomes
$(1+\pmb\sigma_{-2})(T_{\Delta_k}) =0$, which is reduced to 
Eq. \eqref{eq:H21-as-H11}:
\begin{align*}
    H_{0,2} - H_{1,1} - z H_{1,2} =0.
\end{align*}
\end{prop}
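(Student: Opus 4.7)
The plan is a purely algebraic reduction in three stages. First, I would express $K_{\Delta_k}$ (and the portion of $T_{\Delta_k}$ not involving the divided difference) entirely in terms of the two ``base'' functions $H_{1,1}$ and $H_{1,2}$ using the chain of recurrences gathered in Lemma \ref{lem:reduction-needed-one-var}: Eq. \eqref{eq:H21-31-41-remove} eliminates $H_{2,1}, H_{3,1}, H_{4,1}$; Eq. \eqref{eq:H22-32-remove} eliminates $H_{2,2}, H_{3,2}$; and Eq. \eqref{eq:H12-13-14-remove} eliminates the residual $H_{1,3}, H_{1,4}, H_{2,3}$. The divided-difference term $\tfrac{4-m}{12}H_{0,2}[0,z]$ in $T_{\Delta_k}$ will be retained as an independent third basis element.

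Second, I would compute $\pmb\sigma_{-m/2-1}(T_{\Delta_k})$ termwise. For one-variable spectral functions, $\pmb\sigma_j(f)(y) = y^j f(y^{-1}) = (1-z)^j\,\pmb\sigma_0(f)$, so Prop. \ref{prop:sigma0-acts-on-Habc} produces the clean rule
\[
\pmb\sigma_{-m/2-1}(H_{a,b}) = (1-z)^{a+b-3}\,H_{b,a}.
\]
A short direct calculation on the explicit formula $H_{0,2}(z;m) = \Gamma(m/2)(1-z)^{-m/2}$ shows that $H_{0,2}[0,z]$ is $\pmb\sigma_{-m/2-1}$-invariant. Applied to the four summands of $T_{\Delta_k}$, the symmetrization produces a combination of $H_{1,2}, H_{1,3}, H_{1,4}, H_{2,3}$ (weighted by powers of $1-z$) plus the invariant piece, which I again reduce to the basis $\{H_{1,1}, H_{1,2}, H_{0,2}[0,z]\}$ via the same recurrences. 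The key structural point is that $\pmb\sigma_{-m/2-1}$ swaps the $H_{a,1}$-family into the $H_{1,a}$-family, and both families admit reductions to the same three basis elements.

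Third, after assembling $(1+\pmb\sigma_{-m/2-1})(T_{\Delta_k}) - \tfrac{m-2}{2}K_{\Delta_k}$, the result is a rational-function-in-$(m,z)$ linear combination of $H_{1,1}$, $H_{1,2}$, and $H_{0,2}[0,z]$. The claim is that this combination is a scalar multiple of $mH_{1,1}+2zH_{1,2}-2H_{0,2}[0,z]$; verification amounts to checking three coefficient identities in $(m,z)$, and I expect two of them to force the scalar, with the third being automatically satisfied by the consistency of the reductions. At $m=2$ the coefficient of $H_{0,2}[0,z]$ drops out and the residual identity collapses to $H_{0,2}-H_{1,1}-zH_{1,2}=0$, which is Eq. \eqref{eq:H21-as-H11}. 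The main obstacle is purely bookkeeping: cascading the recurrences in the correct order and tracking rational coefficients in $m$ and $z$ through the substitutions; this is exactly the computation that the recurrence formalism developed in \S\ref{subsec:fur-recurrence-relations} renders algorithmic and free of symbolic integration, which is the conceptual payoff the section advertises.
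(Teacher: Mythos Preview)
Your proposal is correct and follows essentially the same route as the paper: apply Prop.~\ref{prop:sigma0-acts-on-Habc} (your rule $\pmb\sigma_{-m/2-1}(H_{a,b})=(1-z)^{a+b-3}H_{b,a}$ is exactly what it yields) to write $\pmb\sigma_{-m/2-1}(T_{\Delta_k})$ as a combination of $H_{1,3}$, $H_{1,4}$, $H_{2,3}$ weighted by powers of $1-z$, plus the invariant $H_{0,2}[0,z]$; then cascade the recurrences of Lemma~\ref{lem:reduction-needed-one-var} to land in the span of $\{H_{1,1},H_{1,2},H_{0,2}[0,z]\}$. The paper records the outcome as the explicit scalar multiple $\tfrac{m-4}{12}\bigl(mH_{1,1}+2zH_{1,2}-2H_{0,2}[0,z]\bigr)$.

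One small correction on the $m=2$ endpoint: the coefficient of $H_{0,2}[0,z]$ does \emph{not} drop out there (it specializes to $\tfrac{1}{3}\neq 0$). What actually happens is that at $m=2$ one has $H_{0,2}(z;2)=(1-z)^{-1}$, whence $H_{0,2}[0,z]=\bigl((1-z)^{-1}-1\bigr)/z=(1-z)^{-1}=H_{0,2}$; this identification is what converts the general residual into the form $H_{0,2}-H_{1,1}-zH_{1,2}=0$.
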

\begin{proof}
    According to Prop. \ref{prop:sigma0-acts-on-Habc}, 
    $\pmb\sigma_{-m/2-1}(T_{\Delta_k})$ is given by:
        \begin{align}
            \label{eq:TDelsigma-m-defn}
            \pmb\sigma_{-m/2-1}( T_{\Delta_k}) =\frac{4-m}{12} H_{0,2}[0,z]
   +   \frac{2 (1-z) \left((m+2) H_{1,3}
   +4 (z-1) H_{1,4}-2 H_{2,3}\right)}{m}      
   ,
        \end{align}
    here we have also used the fact that 
    $\pmb\sigma_{-m/2-1}(H_{0,2}[0,z]) = H_{0,2}[0,z]$. By repeating the
    substitutions given in Lemma \ref{lem:reduction-needed-one-var}, one can
    replace all the $H_{a,b}$ with $a,b>0$ Eqs. \eqref{eq:recall-KDel-TDel} and
    \eqref{eq:TDelsigma-m-defn} by $H_{1,1}$ and $H_{1,2}$, the result reads
    as follows:
    \begin{align*}
 (1+\pmb\sigma_{-m/2-1})(T_{\Delta_k}) - (m-2)K_{\Delta_k}/2   =
\frac{m-4}{12}  \left(
    m H_{1,1}+2 z H_{1,2} -2 H_{0,2}[0,z]
\right) =0,
    \end{align*}
 where the vanishing of right hand side was checked in 
 \eqref{eq:init-H11-H12-H02} (with the assumption $m\neq 2$). When $m=2$, Eq.
 \eqref{eq:tobechecked-KDelvs-TDel} reduces to 
 $(1+\pmb\sigma_{-2})( T_{\Delta_k})=0$. Indeed, again using the reductions
 relations in Lemma \ref{lem:reduction-needed-one-var}, we obtain
 \begin{align*}
     (1+\pmb\sigma_{-2})( T_{\Delta_k}) = \frac13 \left( 
         H_{0,2} - H_{1,1} - z H_{1,2}
     \right) =0,  
 \end{align*}
 where the vanishing is due to Eq. \eqref{eq:H21-as-H11}.
\end{proof}

In dimension $m=2$, the corresponding functional of 
\eqref{eq:tobechecked-KDelvs-TDel} is derived by the variation of the
OPS-functional and  the  function $T_{\Delta_k}$ 
is replaced by $T_{\mathrm{OPS}}$, which we recall from 
 \eqref{eq:Tzeta-varprop}:
\begin{align*}
  T_{\mathrm{OPS}} (y) &= \mathrm{I}_{T_{\zeta'}}(y)
+ \mathrm{II}_{T_{\zeta'}}(y)\\
&=
 \pmb\eta(K)
\ln[1,y] \int_0^1 \pmb\sigma_{-1}\left( 
    z^s[1,y]
\right) ds
+\frac12 \int_0^1 (z^{s}[1,y]_z)^{2}
T_{\Delta_k}(y^s;2) \ln y ds .
\end{align*}

\begin{lem}
    \label{lem:1+sigmaTOPS}
For the given function $T_{\mathrm{OPS}}$  as above, we have
\begin{align}
    \label{eq:1+sigmaTOPS}
        (1+\pmb\sigma_{-2}) (T_{\mathrm{OPS}}) (y)  =
        \pmb\eta(K) y^{-1} +
        (y-1)^{-2} \int_1^y (u-1)^2 T(u) u^{-1} du,
    \end{align}
    where $T(u) = T_{\Delta_k}(u;2)$.    
\end{lem}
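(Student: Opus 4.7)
The plan is to decompose $T_{\mathrm{OPS}} = \mathrm{I}_{T_{\zeta'}} + \mathrm{II}_{T_{\zeta'}}$ and evaluate $(1+\pmb\sigma_{-2})$ on the two summands separately. The first step is to substitute $u = y^s$ in the second integral: since $du = y^s (\ln y) \, ds$ so that $(\ln y) \, ds = u^{-1}\, du$, and since $z^s[1,y] = (y^s-1)/(y-1) = (u-1)/(y-1)$, this immediately rewrites the second summand in the compact form
\begin{align*}
    \mathrm{II}_{T_{\zeta'}}(y) \;=\; \frac{1}{2(y-1)^2} \int_1^y (u-1)^2\, T(u)\, u^{-1}\, du,
\end{align*}
where $T(u) \defeq T_{\Delta_k}(u;2)$. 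Thus the right-hand side of \eqref{eq:1+sigmaTOPS} is precisely $\pmb\eta(K)\, y^{-1} + 2\,\mathrm{II}_{T_{\zeta'}}(y)$, and the lemma will follow from the two claims
\begin{align*}
    (1+\pmb\sigma_{-2})(\mathrm{I}_{T_{\zeta'}}) = \pmb\eta(K)\, y^{-1}, \qquad
    \pmb\sigma_{-2}(\mathrm{II}_{T_{\zeta'}}) = \mathrm{II}_{T_{\zeta'}}.
\end{align*}

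For the first claim, I would evaluate the $s$-integral in $\mathrm{I}_{T_{\zeta'}}$ explicitly: using $\pmb\sigma_{-1}(z^s[1,y]) = (1 - y^{-s})/(y-1)$ and $\int_0^1 y^{-s}\, ds = (1-y^{-1})/\ln y$, one gets
\begin{align*}
    \mathrm{I}_{T_{\zeta'}}(y) = \pmb\eta(K)\left[\frac{\ln y}{(y-1)^2} - \frac{1}{y(y-1)}\right].
\end{align*}
Now $\pmb\sigma_{-2}$ acts on one-variable functions as $f(y)\mapsto y^{-2}f(y^{-1})$. A direct substitution shows that the $\ln y/(y-1)^2$ piece is mapped to its negative (cancelling in the sum), while $-1/(y(y-1))$ is mapped to $1/(y-1)$, leaving $\pmb\eta(K)[1/(y-1) - 1/(y(y-1))] = \pmb\eta(K)/y$.

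For the second claim — the $\pmb\sigma_{-2}$-invariance of $\mathrm{II}_{T_{\zeta'}}$ — this is where the essential input enters: the Gauss--Bonnet identity for $\T^2_\theta$ proved in \cite{MR2907006} (and reinterpreted here via $\grad_k F_{\mathrm{EH}} = 0$ in Proposition \ref{prop:gradFEH} at $m=2$), which reads $(1+\pmb\sigma_{-2})(T) = 0$, i.e.\ $T(v^{-1}) = -v^2 T(v)$. Starting from $\pmb\sigma_{-2}(\mathrm{II}_{T_{\zeta'}})(y) = \frac{1}{2(y-1)^2}\int_1^{y^{-1}} (u-1)^2 T(u)\, u^{-1}\, du$ (the prefactor $y^{-2}$ absorbs $(y^{-1}-1)^{-2} = y^2(y-1)^{-2}$), I would make the change of variable $u = v^{-1}$: the measure transforms as $(u-1)^2 u^{-1}\, du = -(v-1)^2 v^{-3}\, dv$, and the Gauss--Bonnet relation converts $T(v^{-1}) = -v^2 T(v)$, so the two minus signs combine to yield exactly $(v-1)^2 T(v)\, v^{-1}\, dv$. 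The limits flip from $[1,y^{-1}]$ back to $[1,y]$, reproducing $\mathrm{II}_{T_{\zeta'}}(y)$.

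The only subtle point in the whole argument is ensuring the sign bookkeeping in step three is correct — the cancellation between the Jacobian factor $-v^{-2}$ and the Gauss--Bonnet sign is what delivers the invariance. Combining the two claims gives $(1+\pmb\sigma_{-2})(T_{\mathrm{OPS}}) = \pmb\eta(K)y^{-1} + 2\,\mathrm{II}_{T_{\zeta'}}$, which is the statement of the lemma.
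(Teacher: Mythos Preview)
Your proof is correct. The decomposition into $\mathrm I_{T_{\zeta'}}$ and $\mathrm{II}_{T_{\zeta'}}$ and the two claims you verify are exactly the skeleton the paper uses, but the mechanics differ.

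For $\mathrm I_{T_{\zeta'}}$ you evaluate the $s$-integral first and then apply $\pmb\sigma_{-2}$ to the resulting rational-plus-log expression. The paper instead never evaluates the integral: it uses the multiplicative splitting $\pmb\sigma_{-2}(fg)=\pmb\sigma_{-1}(f)\,\pmb\sigma_{-1}(g)$ together with $\pmb\sigma_{-1}(\ln[1,y])=\ln[1,y]$ and $\pmb\sigma_{-1}^2=1$ to get
\[
(1+\pmb\sigma_{-2})(\mathrm I_{T_{\zeta'}})(y)=\pmb\eta(K)\,\ln[1,y]\int_0^1(1+\pmb\sigma_{-1})(z^s[1,y])\,ds,
\]
and only then computes the elementary integral to reach $\pmb\eta(K)\,y^{-1}$.

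For $\mathrm{II}_{T_{\zeta'}}$ the difference is parallel: you substitute $u=y^s$ first and prove $\pmb\sigma_{-2}$-invariance by the change of variable $u\mapsto v^{-1}$ in the $u$-integral, using Gauss--Bonnet in the form $T(v^{-1})=-v^2T(v)$. The paper keeps the $s$-integral and checks invariance factor-by-factor via $\pmb\sigma_s(z^s[1,y])=z^s[1,y]$, $\pmb\sigma_{2s}(T(y^s))=-T(y^s)$, $\pmb\sigma_0(\ln y)=-\ln y$, with the indices summing to $-2$; the substitution $u=y^s$ is done only afterward.

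Your route is more elementary and self-contained; the paper's route better illustrates the operator formalism (how $\pmb\sigma_j$ factors over products) that the paper is promoting. Both rely on the same essential input, the Gauss--Bonnet relation $(1+\pmb\sigma_{-2})(T)=0$.
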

\begin{proof}
   One checks that:
   \begin{align}
       \pmb\sigma_{-1}( \ln[1,y] ) = \ln[1,y], \,\,\,
       \pmb\sigma_{s}( z^s[1,y]_z ) = z^s[1,y]_z .  \,\,\,
       \label{eq:symms--+}
   \end{align}
   Also, since $\pmb\sigma_{-2}(T) = - T$, we see that
  \begin{align}
      \pmb\sigma_{2s} ( T(y^s) ) = - T(y^s), \,\,\, 
      \pmb\sigma_0( \ln(y) ) = - \ln (y) .
      \label{eq:symms---}
  \end{align}
  Therefore, $\mathrm{II}_T$ is invariant under $\pmb\sigma_{-2}$:
  \begin{align*}
      \pmb\sigma_{-2}\left( 
          ( z^s[1,y] )^2 T(y^s) \ln y
      \right) & = 
      (\pmb\sigma_s (z^s[1,y]) )^2
      \pmb\sigma_{-2s}( T(y^s) )
      \pmb\sigma_0( \ln y)  \\
     & =  
     ( z^s[1,y] )^2 T(y^s) \ln y.
  \end{align*}
 Thus 
  \begin{align*}
      (1+ \pmb\sigma_{-2}) (\mathrm{II}_T) (y) &=  2 \mathrm{II}_T (y) 
      =
  \int_0^1 (z^s[1,y]) )^2 T(y^s) \ln y  ds \\
  &= (y-1)^{-2} \int_1^y (u-1)^2 T(u) u^{-1} du
  .
  \end{align*}
  Meanwhile, to compute  $\pmb\sigma_{-2}(\mathrm{I}_T)$, 
  note that $\pmb\sigma_{-1}^2 = 1$:
  \begin{align*}
      \pmb\sigma_{-2} \left( 
          \ln[1,y]  \pmb\sigma_{-1}( z^s[1,y] )
      \right) =
      \pmb\sigma_{-1} ( \ln[1,y] )  \pmb\sigma_{-1}^2 ( z^s[1,y] )
      = \ln[1,y]    z^s[1,y] .
  \end{align*}
  It follows that
  \begin{align*}
      (1+\pmb\sigma_{-2}) (\mathrm{I}_T) (y)  &=  
      \pmb\eta(K)    \ln[1,y] \int_0^1 
      \left( 1+\pmb\sigma_{-1} \right) ( z^s[1,y] ) ds 
      = \pmb\eta(K) y^{-1}
      .
  \end{align*}
\end{proof}

\begin{prop}
        \label{prop:KDelk-TOPS-tobechecked}
    Keep notations as above. The functional relations 
    \begin{align}
        \label{eq:KDelk-TOPS-tobechecked}
        K_{\Delta_k} = (1+\pmb\sigma_{-2})( T_{\mathrm{OPS}} )
    \end{align}
    is can be reduced to the initial relation defined in Eq. \eqref{eq:H21-as-H11}:
    \begin{align}
        \label{eq:ini-required-KDelvsTOPS}
        z H_{2,1} = H_{0,2} - H_{1,1} ,\,\,\,
        H_{0,2} = (1-z)^{-1}
        .
    \end{align}
\end{prop}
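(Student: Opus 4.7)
\medskip

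The plan is to combine Lemma \ref{lem:1+sigmaTOPS} with the reduction machinery already used in Proposition \ref{prop:tobechecked-KDelvs-TDel}. By Lemma \ref{lem:1+sigmaTOPS}, the right hand side of \eqref{eq:KDelk-TOPS-tobechecked} equals
\begin{align*}
    (1+\pmb\sigma_{-2})(T_{\mathrm{OPS}})(y) = \pmb\eta(K) y^{-1}
    + (y-1)^{-2} \int_1^y (u-1)^2 T_{\Delta_k}(u;2) u^{-1}\, du,
\end{align*}
so the proposed identity reduces to an integral identity expressing $K_{\Delta_k}(y;2) - \pmb\eta(K) y^{-1}$ as the same integral. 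Note that $\pmb\eta(K) = K_{\Delta_k}(1;2) = \Gamma(1)\cdot 2/12 = 1/6$ in dimension two.

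The first step is to use the recurrence relations of Lemma \ref{lem:reduction-needed-one-var} at $m=2$ to rewrite both $T_{\Delta_k}(\cdot;2)$ and $K_{\Delta_k}(\cdot;2)$ as linear combinations, with coefficients rational in $z$, of the initial functions $H_{1,1}$, $H_{1,2}$, and the divided-difference term $H_{0,2}[0,z]$; this is exactly the kind of reduction already performed in Proposition \ref{prop:tobechecked-KDelvs-TDel}. The second step is to remove the integral by differentiation in $y$: multiplying through by $(y-1)^2$ and differentiating once reduces the claim to a pointwise identity between $T_{\Delta_k}(y;2)$ and the derivative of $(y-1)^2(K_{\Delta_k}(y;2)-\tfrac16 y^{-1})$, which can be checked via the differential and divided-difference relations \eqref{eq:Hab-to-Ha1}--\eqref{eq:Ha11-to-Ha1}. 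The constant of integration is fixed by the boundary value at $y=1$ (i.e.\ $z=0$), where the integral vanishes and $K_{\Delta_k}(1;2) = \pmb\eta(K) = 1/6$.

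After substituting everything into this pointwise identity and simplifying with the recurrences of Lemma \ref{lem:reduction-needed-one-var}, I expect (by analogy with the reduction leading to Eq. \eqref{eq:H21-as-H11} in Proposition \ref{prop:tobechecked-KDelvs-TDel}) all terms to collapse into a rational multiple of the initial relation
\begin{align*}
    z H_{2,1}(z;2) = H_{0,2}(z;2) - H_{1,1}(z;2),
\end{align*}
together with the closed-form evaluation $H_{0,2}(z;2) = (1-z)^{-1}$ that arises from the elementary integral $\int_0^1(1-zt)^{-1}\,dt$; this is exactly \eqref{eq:H21-as-H11} restricted to $m=2$.

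The main obstacle is bookkeeping: $T_{\mathrm{OPS}}$ contains the integrated quantity $\int_0^1 (z^s[1,y])^2 T_{\Delta_k}(y^s;2)\ln y\, ds$, and after the change of variable $u=y^s$ inside the integral in Lemma \ref{lem:1+sigmaTOPS}, the hypergeometric identities must be applied uniformly in $u$. The cleanest way around this is to avoid evaluating the integral directly and instead prove the equivalent differentiated identity, where only pointwise hypergeometric relations enter. Care is also needed with the term $H_{0,2}[0,z]$, which in dimension two becomes the logarithmic contribution $z^{-1}\log(1-z)^{-1}$ via $H_{0,1}(z;2) = -\log(1-z)$, and which does not decouple from the $H_{1,1}$, $H_{1,2}$ part until the final step.
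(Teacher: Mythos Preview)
Your proposal is correct and follows essentially the same route as the paper's own proof: invoke Lemma \ref{lem:1+sigmaTOPS}, multiply through by $(y-1)^2$ and differentiate once to replace the integral identity by a pointwise one, fix the integration constant at $y=1$ where both sides equal $\pmb\eta(K)$, and then reduce via the recurrences of Lemma \ref{lem:reduction-needed-one-var} to the initial relation \eqref{eq:ini-required-KDelvsTOPS}. The only cosmetic difference is order: the paper differentiates first and then reduces the resulting combination of $H_{a,b}$'s, whereas you propose reducing $T_{\Delta_k}$ and $K_{\Delta_k}$ beforehand; either way the computation lands on $\tfrac{1}{3z}\bigl((z-1)H_{1,1}+(z-1)zH_{1,2}+1\bigr)=0$, which is exactly \eqref{eq:ini-required-KDelvsTOPS}.
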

\begin{proof}
    Set $K
    \defeq K_{\Delta_k}(\cdot;2)$ and $T\defeq T_{\Delta_k}(\cdot;2)$, we have
    simplified the right hand side of Eq. \eqref{eq:KDelk-TOPS-tobechecked}:
    \begin{align*}
        (1+\pmb\sigma_{-2}) (T_{\mathrm{OPS}}) (y)  =
        \pmb\eta(K) y^{-1} +
        (y-1)^{-2} \int_1^y (u-1)^2 T(u) u^{-1} du.
    \end{align*}
In particular, we see that 
$(1+\pmb\sigma_{-2}) (T_{\mathrm{OPS}}) (1)  = \pmb\eta(K) = K(1)$, thus it
suffices to check their derivatives are equal. 
We group Eq. \eqref{eq:KDelk-TOPS-tobechecked} in the following way:
\begin{align*}
 (y-1)^2    (K  - \pmb\eta(K) y^{-1}) =
 \int_1^y (u-1)^2 T(u) u^{-1} du 
\end{align*}
and then apply $d/dy$ to both sides:
\begin{align}
    \left( 
    2+ (y-1)  d/dy 
\right) 
\left( 
    K(y) - \pmb\eta(K) y^{-1}
\right) 
& = y^{-1} (y-1) T (z).
\nonumber \\
-(1-z) z^{-1}  \left( 
    2+  z d/dz
\right) 
\left( 
    K(1-z) - \pmb\eta(K) (1-z)^{-1}
\right) 
& =  T (z),
    \label{eq:D-of-KDelk-TOPS}
\end{align}
where $z= 1-y$. 
With the help of $d/dz (H_{a,b}) = b H_{a,b+1}$, we write the difference of 
two sides 
\begin{align*}
    &\,\,\, T(z) + (1-z) z^{-1}  \left( 2+  z d/dz \right) 
\left( 
    K(1-z) - \pmb\eta(K) (1-z)^{-1} 
\right) 
    \\
    = &\,\,\,
    \frac{-6 (z-1) H_{2,1}-3 (z-1) z H_{2,2}+12 z H_{4,1}-12 H_{3,1}+1}{3 z}
    \\
    =&\,\, 
    \frac{(z-1) H_{1,1}+(z-1) z H_{1,2}+1}{3 z},
\end{align*}
where the last line is obtained by applying the relations in Lemma
\ref{lem:reduction-needed-one-var}. Moreover, the vanishing of the last line is
equivalent to the conditions in \eqref{eq:ini-required-KDelvsTOPS}.
\end{proof}

\subsection{
Verification of Theorem \ref{thm:CM-KDel-HDel}
}
\label{subsec:2nd-proof-CM-H-K}
\begin{lem}
 \label{lem:reduction-needed-two-var}
    Keep the abbreviation: $H_{a,b,c} \defeq H_{a,b,c}(z_1,z_2;m)$.
    The recurrence relations will be needed in later computation are listed as
    below:
    \begin{align}
        \begin{split}
        H_{3,1,1} &= 
        \frac{1}{4} \left((m+4) H_{2,1,1}+2 \left(z_2-1\right) H_{2,1,2}+2
        \left(z_1-1\right) H_{2,2,1}\right),
        \\
        H_{2,1,2} & =
        \frac{1}{2} (m+4) H_{1,1,2}+2 \left(z_2-1\right)
        H_{1,1,3}+\left(z_1-1\right) H_{1,2,2}
        , \\
        H_{2,1,1} & =
        \frac{1}{2} (m+2) H_{1,1,1}+\left(z_2-1\right)
        H_{1,1,2}+\left(z_1-1\right) H_{1,2,1},
        \end{split}
        \label{eq:remove-H311-H212-H113}
    \end{align}
and
    \begin{align}
        \begin{split}
            H_{1,1,3}&= 
            -\frac{ \left((m+6) z_2-6\right) H_{1,1,2} +(m+2) H_{1,1,1}+2 z_1
            \left(\left(z_2-1\right) H_{1,2,2}+H_{1,2,1}\right)}{4
            \left(z_2-1\right) z_2},
     \\       
    H_{1,3,1}&= 
-\frac{ \left((m+6) z_1-6\right) H_{1,2,1} +(m+2) H_{1,1,1}+2 z_2
\left(\left(z_1-1\right) H_{1,2,2}+H_{1,1,2}\right)}{4 \left(z_1-1\right) z_1}
,
        \end{split}
        \label{eq:remove-H131-H113}
    \end{align}
    and
    \begin{align}
        H_{1,2,2}= \frac{H_{1,2,1}-H_{1,1,2}}{z_1-z_2}.
        \label{eq:remove-H122}
    \end{align}
\end{lem}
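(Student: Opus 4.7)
My plan is to derive each of the three groups of relations directly from the recurrence/differential infrastructure already set up in \S\ref{subsec:fur-recurrence-relations}, so that no new computation in hypergeometric theory is required.

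First, I would obtain \eqref{eq:remove-H311-H212-H113} as three specializations of the two-variable recurrence given in Corollary \ref{cor:Hb+1ac-Hbac}, namely
\begin{align*}
b H_{b+1,c,a} = (a+b+c+m/2-2) H_{b,c,a} - c(1-z_1) H_{b,c+1,a} - a(1-z_2) H_{b,c,a+1}.
\end{align*}
Setting $(a,b,c)=(1,1,1)$ yields the identity for $H_{2,1,1}$, setting $(a,b,c)=(1,2,1)$ yields the identity for $H_{3,1,1}$ after dividing by $2$ and rearranging signs, and setting $(a,b,c)=(2,1,1)$ yields the identity for $H_{2,1,2}$. Each case is a routine substitution, so the only thing to be careful about is the sign convention coming from $(z_i-1)$ versus $(1-z_i)$.

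Next, I would produce \eqref{eq:remove-H131-H113} from the Appell PDE system. Concretely, the identity for $H_{1,1,3}$ is the specialization $(a,b,c)=(1,1,1)$ of \eqref{eq:Habc-PDE-2}, solved for the $H_{a,b,c+2}$ term: after plugging in the explicit coefficients $C_{0,2}$, $\tilde C_{1,1}$, $\tilde C_{1,0}$, $\tilde C_{0,1}$, $\tilde C_{0,0}$ and dividing by $C_{0,2}=2(1-z_2)z_2$, one reads off the stated formula. The symmetric identity for $H_{1,3,1}$ comes either by swapping $z_1\leftrightarrow z_2$ and using $H_{a,b,c}(z_1,z_2)=H_{a,c,b}(z_2,z_1)$ (cf. \eqref{eq:Habc-vs-Hacb}), or equivalently by applying \eqref{eq:Habc-PDE-1} at $(a,b,c)=(1,1,1)$ and solving for $H_{a,b+2,c}$.

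Finally, \eqref{eq:remove-H122} is just the $a=1$ case of \eqref{eq:Ha22-reduction}, which already records $H_{a,2,2}$ as the iterated divided difference $(zH_{a+1,1})[z_1,z_1,z_2,z_2]$ rewritten as $(H_{a,2,1}-H_{a,1,2})/(z_1-z_2)$. No real obstacle appears; the one point requiring minor care is the bookkeeping of the coefficients of the PDE system when solving for the highest-order term, since a sign or factor error there would propagate into the verifications in \S\ref{subsec:2nd-proof-CM-H-K}. Once that is checked, all six formulas reduce to direct algebraic consequences of relations already proved in \S\ref{subsec:fur-recurrence-relations}.
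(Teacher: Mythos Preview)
Your proposal is correct and follows exactly the approach the paper intends. The paper does not write out a proof for this lemma, but the parallel one-variable Lemma~\ref{lem:reduction-needed-one-var} is proved in precisely the way you describe (Corollary~\ref{cor:Hb+1ac-Hbac} for the first block, the hypergeometric differential equation for the second, and a derivative/divided-difference identity for the third), and your two-variable argument is the direct analogue using \eqref{eq:Hb+1ac-Hbac}, the Appell system \eqref{eq:Habc-PDE-1}--\eqref{eq:Habc-PDE-2}, and \eqref{eq:Ha22-reduction}.
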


\begin{prop}
    \label{prop:varification-CM}
    Let $H_{\Delta_k}$ and $K_{\Delta_k}$ be the spectral functions given in
    \eqref{eq:b2cal-HDeltak} and \eqref{eq:b2cal-KDeltak} respectively.
    The verification of functional relation 
    \begin{align*}
        H_{\Delta_k} = (1+ \pmb\sigma_{-m/2-2} +\pmb\sigma_{-m/2-2}^2 )
      \cdot  \blacktriangle^+  (K_{\Delta_k})
    \end{align*}
    can be reduced to the following initial relations: for $m>2$, we use
    Eq. \eqref{eq:H21-as-H11} and \eqref{eq:init-H11-H12-H02}:
    \begin{align}
    \label{eq:m>2-init-relation-prop:varification-CM}
        z H_{1,2} = H_{0,2} - H_{1,1}, \,\,\,
        m H_{1,1} + 2 zH_{1,2} -2 H_{0,2}[0,z] =0, \,\,\,
    \end{align}
    While  in dimension $m=2$, we need \eqref{eq:H21-as-H11} and
    \eqref{eq:H02-defn}:
    \begin{align}
    \label{eq:m=2-init-relation-prop:varification-CM}
        z H_{1,2} = H_{0,2} - H_{1,1}, \,\,\,
         \,\,\, H_{0,2} = (1-z)^{-1}.
    \end{align}
\end{prop}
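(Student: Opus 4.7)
The plan is to mirror the strategy of Proposition \ref{prop:KDelk-TOPS-tobechecked}, but now in two variables: compute both sides of the functional relation explicitly in the hypergeometric basis $\{H_{a,b,c}\}$, then systematically apply the recurrence relations of Lemma \ref{lem:reduction-needed-two-var} to collapse everything onto a minimal subset of functions, and finally reduce this subset to the one-variable initial relations listed in \eqref{eq:m>2-init-relation-prop:varification-CM} and \eqref{eq:m=2-init-relation-prop:varification-CM}. As in Theorem \ref{thm:CM-KDel-HDel}, the cyclic identity $\pmb\sigma_{-m/2-2}^3 = \mathrm{id}$ means it suffices to verify the equivalent form $(1 + \pmb\sigma_{-m/2-2})(H_{\Delta_k}) = 2\,\blacksquare^+(K_{\Delta_k})$, which is computationally more symmetric than the three-term alternating version.

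First, I would expand the right-hand side: starting from \eqref{eq:b2cal-KDeltak}, $\blacksquare^+(K_{\Delta_k})$ becomes a combination of divided differences $(z H_{a,1})[z_1,z_2]_z$, which by \eqref{eq:Ha11-to-Ha1} (and its analogue after one more $\partial_z$) convert directly into $H_{a,1,1}$'s; by \eqref{eq:Hb+1ac-Hbac} I can further pass to $H_{1,1,1}$, $H_{1,2,1}$, $H_{1,1,2}$. Second, I would expand the left-hand side: $H_{\Delta_k}$ from \eqref{eq:b2cal-HDeltak} is already written in the $H_{a,b,c}$ basis, and $\pmb\sigma_{-m/2-2}(H_{\Delta_k})$ is obtained by combining \eqref{eq:sigma0-acts-on-Habc} with the overall $y_1y_2$-power encoded in $\pmb\sigma_{-m/2-2} = \pmb\sigma_0 \cdot (y_1y_2)^{-m/2-2}$ on two-variable functions. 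This step produces $H_{a,b,c}$'s with permuted indices $\{a,b,c\} \to \{b,c,a\}$, together with rational prefactors in $z_1,z_2$.

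Next, I would run both sides through the reduction cascade of Lemma \ref{lem:reduction-needed-two-var}: use \eqref{eq:remove-H311-H212-H113} to kill $H_{3,1,1}, H_{2,1,2}, H_{2,1,1}$; then \eqref{eq:remove-H131-H113} to kill $H_{1,1,3}, H_{1,3,1}$; and finally \eqref{eq:remove-H122} to eliminate $H_{1,2,2}$ in favor of $(H_{1,2,1}-H_{1,1,2})/(z_1-z_2)$. Once the two sides are both expressed as linear combinations of $\{H_{1,1,1}, H_{1,2,1}, H_{1,1,2}\}$ with rational coefficients in $(z_1,z_2,m)$, the difference will be a divided-difference-type expression. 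I then apply $H_{a,1,1} = -\blacksquare^+(z H_{a+1,1})$ in reverse to recast the result on a one-variable $(z,m)$-footing, where Lemma \ref{lem:reduction-needed-one-var} reduces everything to $H_{1,1}, H_{1,2}, H_{0,2}$; the residue should then vanish precisely by the initial identities \eqref{eq:m>2-init-relation-prop:varification-CM} (or \eqref{eq:m=2-init-relation-prop:varification-CM} when $m=2$, treating the $\log$-singularity of $H_{0,1}$ separately).

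The main obstacle is the sheer bookkeeping: each $\pmb\sigma_{-m/2-2}$ orbit contains three terms, each carrying $(1-z_1), (1-z_2)$, $(z_2-z_1)^{\pm 1}$ prefactors from \eqref{eq:sigma0-acts-on-Habc}, and the $H_{a,b,c}$ expansions of $K_{\Delta_k}, H_{\Delta_k}$ have several summands with differing $a,b,c$; after running the recurrences, partial fraction clearing in $(z_1,z_2)$ is nontrivial. A secondary delicate point is the step using \eqref{eq:remove-H122}, which introduces a $(z_1-z_2)^{-1}$ denominator that must cancel against a corresponding antisymmetric combination — this is really a divided-difference identity in disguise, and I would verify it by showing the numerator is antisymmetric in $(z_1,z_2)$ and vanishes on the diagonal before dividing. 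Once these combinatorial identities are in hand, the reduction to the one-variable initial relations of \S\ref{subsec:2nd-proof-CM-T-K} follows by the same arguments already used there, so no further analytic input is needed and the computer-algebra verification of \cite{MR3194491} is indeed replaced by a finite chain of recurrence applications.
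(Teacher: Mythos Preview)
Your proposal is correct and follows essentially the same route as the paper: pass to the equivalent two-term relation $(1+\pmb\sigma_{-m/2-2})(H_{\Delta_k}) = 2\,\blacksquare^+(K_{\Delta_k})$, expand both sides in the $H_{a,b,c}$ basis, run the recurrences of Lemma~\ref{lem:reduction-needed-two-var} to reduce to $\{H_{1,1,1},H_{1,2,1},H_{1,1,2}\}$, rewrite these via $H_{1,1,1}=(zH_{2,1})[z_1,z_2]_z$ and its $\partial_{z_i}$-derivatives, and then use Lemma~\ref{lem:reduction-needed-one-var} so that the difference becomes a divided difference of a one-variable expression in $H_{1,1},H_{1,2}$ that is constant precisely by the stated initial relations. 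One small remark: once you commit to the two-term form, only a single application of $\pmb\sigma_{-m/2-2}$ enters (not the full three-term orbit), so the bookkeeping is lighter than your obstacles paragraph suggests.
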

\begin{proof}
    Due to the cyclic property $\pmb\sigma_{-m/2-2}^3 =1$, we shall prove the
    equivalent functional relation 
    \begin{align*}
        \frac12 ( 1 + \pmb\sigma_{-m/2-2}) (H_{\Delta_k})
        = \blacktriangle^+(K_{\Delta_k} ).
    \end{align*}
    Start with the left hand side, 
    \begin{align*}
        \pmb\sigma_{-m/2-1} (H_{\Delta_k}) =
        \frac{2 (m+2) H_{1,1,2}+8 \left(z_2-1\right) H_{1,1,3}-4 H_{2,1,2}}{m}
    \end{align*}
 and then
 \begin{align*}
      (1+\pmb\sigma_{-m/2-1} ) ( H_{\Delta_k} ) 
     & = 
     \frac{2 }{m} 
     \left[ 
       (m+2) H_{1,2,1}+(m+2) H_{2,1,1}+2 \left(z_2-1\right)
   H_{1,2,2} \right. \\
   &  + \left.
   4 \left(z_1-1\right) H_{1,3,1}+2 \left(z_1-1\right) H_{2,2,1}-4
     H_{3,1,1}
 \right],
 \end{align*}
 which admits further simplification due to the recurrence relations listed in
 Lemma \ref{lem:reduction-needed-two-var}: 
 \begin{align*}
     -\frac{2 \left((m+2) \left(z_1+z_2\right) H_{1,1,1}+2 \left(z_2-1\right)
     \left(2 z_1+z_2\right) H_{1,1,2}+2 \left(z_1-1\right) \left(z_1+2
     z_2\right) H_{1,2,1}\right)}{m z_1 z_2}.
 \end{align*}
 Notice that
 \begin{align*}
     H_{1,1,1} = (z H_{2,1} ) [z_1, z_2]_z, \,\,\,
     H_{1,2,1} = \partial_{z_1} H_{1,1,1} , \,\,\,
     H_{1,1,2} = \partial_{z_2} H_{1,1,1} , 
 \end{align*}
 and apply again the recurrence relations in Lemma
 \ref{lem:reduction-needed-one-var} for one-variable functions, we can express the
 difference, which we would like to prove to be zero, in terms of $H_{1,1}$ and
 $H_{1,2}$: 
 \begin{align*}
     &\,\, \frac12 (1+\pmb\sigma_{-m/2-2}) (H_{\Delta_k} ) -
     \blacktriangle^+(K_{\Delta_k})  \\
     =&\,\, 
     -\frac{z_1+z_2}{2 z_1  z_2} 
     \left( 
     m z H_{1,1}+2 z^2 H_{1,2}-2 z H_{1,2}-2 H_{1,1} 
     \right) 
     [z_1,z_2]_z.
 \end{align*}
 It remains to show that  
 $m z H_{1,1}+2 z^2 H_{1,2}-2 z H_{1,2}-2 H_{1,1} $ is a constant function (in $z$),
 therefore becomes zero after applying the divided difference. Indeed, with the
 initial
 relations in Eq. \eqref{eq:m>2-init-relation-prop:varification-CM} for
 dimension $m>2$, one  sees that 
 \begin{align*}
     (m z H_{1,1}+2 z^2 H_{1,2})  -(2 z H_{1,2} + 2 H_{1,1}) 
         =2 H_{0,2}[0,z] - 2 H_{0,2}
         =-2 H_{0,2}(0;m) ,
 \end{align*}
 and when $m=2$, the initial relations in Eq.
 \eqref{eq:m=2-init-relation-prop:varification-CM} implies that
\begin{align*}
     m z H_{1,1}+2 z^2 H_{1,2}-2 z H_{1,2}-2 H_{1,1} = 
     2 \left(z_1-1\right) \left(z_1 H_{1,2}+H_{1,1}\right) = 
     2 \left(z_1-1\right) H_{0,2} =-2 .
 \end{align*}
\end{proof}

\bibliographystyle{hplain}

\bibliography{mylib1}

\begin{thebibliography}{10}

\bibitem{Bhuyain2012}
Tanvir~Ahamed Bhuyain and Matilde Marcolli.
\newblock The ricci flow on noncommutative two-tori.
\newblock {\em Letters in Mathematical Physics}, 101(2):173--194, 2012.

\bibitem{Brain:2013wp}
Simon Brain, Giovanni Landi, and Walter~D. van Suijlekom.
\newblock Moduli spaces of instantons on toric noncommutative manifolds.
\newblock {\em Adv. Theor. Math. Phys.}, 17(5):1129--1193, 2013.

\bibitem{branson1986conformal}
Thomas~P Branson and B~Orsted.
\newblock Conformal indices of riemannian manifolds.
\newblock {\em Compositio mathematica}, 60(3):261--293, 1986.

\bibitem{2016arXiv161109815C}
A.~{Connes} and F.~{Fathizadeh}.
\newblock {The term $a_4$ in the heat kernel expansion of noncommutative tori}.
\newblock {\em M\"{u}nster J. Math.}, 12(2):239--410, 2019.

\bibitem{Connes:2002wh}
Alain Connes and Michel Dubois-Violette.
\newblock Noncommutative finite-dimensional manifolds. {I}. {S}pherical
  manifolds and related examples.
\newblock {\em Comm. Math. Phys.}, 230(3):539--579, 2002.

\bibitem{Connes:2001wv}
Alain Connes and Giovanni Landi.
\newblock Noncommutative manifolds, the instanton algebra and isospectral
  deformations.
\newblock {\em Comm. Math. Phys.}, 221(1):141--159, 2001.

\bibitem{MR3194491}
Alain Connes and Henri Moscovici.
\newblock Modular curvature for noncommutative two-tori.
\newblock {\em J. Amer. Math. Soc.}, 27(3):639--684, 2014.

\bibitem{Connes:2011tk}
Alain Connes and Paula Tretkoff.
\newblock The {G}auss-{B}onnet theorem for the noncommutative two torus.
\newblock In {\em Noncommutative geometry, arithmetic, and related topics},
  pages 141--158. Johns Hopkins Univ. Press, Baltimore, MD, 2011.

\bibitem{MR3402793}
Ludwik D{\c{a}}browski and Andrzej Sitarz.
\newblock An asymmetric noncommutative torus.
\newblock {\em SIGMA Symmetry Integrability Geom. Methods Appl.}, 11:Paper 075,
  11, 2015.

\bibitem{MR3369894}
Farzad Fathizadeh.
\newblock On the scalar curvature for the noncommutative four torus.
\newblock {\em J. Math. Phys.}, 56(6):062303, 14, 2015.

\bibitem{Fathizadeh:2012vg}
Farzad Fathizadeh and Masoud Khalkhali.
\newblock The {G}auss-{B}onnet theorem for noncommutative two tori with a
  general conformal structure.
\newblock {\em J. Noncommut. Geom.}, 6(3):457--480, 2012.

\bibitem{MR3148618}
Farzad Fathizadeh and Masoud Khalkhali.
\newblock Scalar curvature for the noncommutative two torus.
\newblock {\em J. Noncommut. Geom.}, 7(4):1145--1183, 2013.

\bibitem{MR3359018}
Farzad Fathizadeh and Masoud Khalkhali.
\newblock Scalar curvature for noncommutative four-tori.
\newblock {\em J. Noncommut. Geom.}, 9(2):473--503, 2015.

\bibitem{fathizadeh2019curvature}
Farzad Fathizadeh and Masoud Khalkhali.
\newblock Curvature in noncommutative geometry.
\newblock In {\em Advances in Noncommutative Geometry}, pages 321--420.
  Springer, 2019.

\bibitem{1811.04004}
Asghar Ghorbanpour and Masoud Khalkhali.
\newblock Spectral geometry of functional metrics on noncommutative tori, 2018,
  arXiv:1811.04004.

\bibitem{gilkey1995invariance}
Peter~B Gilkey and Domingo Toledo.
\newblock {\em Invariance theory, the heat equation, and the Atiyah-Singer
  index theorem}, volume~2.
\newblock CRC press Boca Raton, 1995.

\bibitem{ha2019laplace}
Hyunsu Ha and Raphael Ponge.
\newblock Laplace-beltrami operators on noncommutative tori.
\newblock {\em J. Geom. Phys.}, page to appear, 2019.

\bibitem{Iochum:2017ver}
Bruno Iochum and Thierry Masson.
\newblock {Heat asymptotics for nonminimal Laplace type operators and
  application to noncommutative tori}.
\newblock {\em J. Geom. Phys.}, 129:1--24, 2018, 1707.09657.

\bibitem{leschdivideddifference}
Matthias Lesch.
\newblock Divided differences in noncommutative geometry: rearrangement lemma,
  functional calculus and expansional formula.
\newblock {\em J. Noncommut. Geom.}, 11(1):193--223, 2017.

\bibitem{Lesch2016Modular-Curvatu}
Matthias Lesch and Henri Moscovici.
\newblock Modular {C}urvature and {M}orita {E}quivalence.
\newblock {\em Geom. Funct. Anal.}, 26(3):818--873, 2016.

\bibitem{lesch2018modular}
Matthias Lesch and Henri Moscovici.
\newblock Modular gaussian curvature.
\newblock In {\em Advances in Noncommutative Geometry}, pages 463--490.
  Springer, 2019, 1810.10394.

\bibitem{LIU2017138}
Yang Liu.
\newblock Scalar curvature in conformal geometry of connes--landi
  noncommutative manifolds.
\newblock {\em Journal of Geometry and Physics}, 121:138 -- 165, 2017.

\bibitem{Liu:2018aa}
Yang Liu.
\newblock Hypergeometric function and modular curvature. i. hypergeometric
  functions in heat coefficients.
\newblock 10 2018, 1810.09939.

\bibitem{Liu:2015aa}
Yang Liu.
\newblock Modular curvature for toric noncommutative manifolds.
\newblock {\em J. Noncommut. Geom.}, 12(2):511--575, 2018.

\bibitem{osgood1988extremals}
Brad Osgood, Ralph Phillips, and Peter Sarnak.
\newblock Extremals of determinants of laplacians.
\newblock {\em Journal of Functional Analysis}, 80(1):148--211, 1988.

\bibitem{Polyakov1981Quantum-bosonic}
A.~M. Polyakov.
\newblock Quantum geometry of bosonic strings.
\newblock {\em Phys. Lett. B}, 103(3):207--210, 1981.

\bibitem{Polyakov1981Quantum-fermionic}
A.~M. Polyakov.
\newblock Quantum geometry of fermionic strings.
\newblock {\em Phys. Lett. B}, 103(3):211--213, 1981.

\bibitem{Rieffel:1990wo}
Marc~A. Rieffel.
\newblock Noncommutative tori---a case study of noncommutative differentiable
  manifolds.
\newblock In {\em Geometric and topological invariants of elliptic operators
  ({B}runswick, {ME}, 1988)}, volume 105 of {\em Contemp. Math.}, pages
  191--211. Amer. Math. Soc., Providence, RI, 1990.

\bibitem{Rieffel:1993wb}
Marc~A. Rieffel.
\newblock Deformation quantization for actions of {${\bf R}^d$}.
\newblock {\em Mem. Amer. Math. Soc.}, 106(506):x+93, 1993.

\bibitem{2013SIGMA...9..071R}
J.~{Rosenberg}.
\newblock {Levi-Civita's Theorem for Noncommutative Tori}.
\newblock {\em SIGMA}, 9:71, November 2013, 1307.3775.

\bibitem{sukochev2019local}
Fedor Sukochev and Dmitriy Zanin.
\newblock Local invariants of non-commutative tori.
\newblock {\em arXiv preprint arXiv:1910.00758}, 2019.

\end{thebibliography}

\end{document}